\tikzset{>=stealth}
\newtheorem{all}{Theorem}[section]
\theoremstyle{plain}
\newtheorem{lemma}[all]{Lemma}
\newtheorem{thm}[all]{Theorem}
\theoremstyle{definition}
\newtheorem{rem}[all]{Remark}
\newtheorem{asptn}[all]{Assumption}
\newcommand{\BI}{{\mathbb{I}}}
\newcommand{\BP}{{\mathbb{P}}}
\newcommand{\BQ}{{\mathbb{Q}}}
\newcommand{\BR}{{\mathbb{R}}}
\newcommand{\BS}{{\mathbb{S}}}
\newcommand{\BT}{{\mathbb{T}}}
\newcommand{\BZ}{{\mathbb{Z}}}
\newcommand{\FT}{{\mathcal{F}}}
\newcommand{\dd}{{\mathrm{d}}}
\newcommand{\p}{\partial}
\newcommand{\sgn}{\mathrm{sgn}}
\newcommand{\eps}{\epsilon}
\newcommand{\com}[1]{}
\DeclareMathOperator\artanh{artanh}
\numberwithin{equation}{section}
\begin{document}
\title{Enhanced Superconductivity at a Corner for the Linear BCS Equation
}
\author[1]{Barbara Roos\thanks{barbara.roos@ist.ac.at}}
\author[1]{Robert Seiringer\thanks{robert.seiringer@ist.ac.at}}
\affil[1]{Institute of Science and Technology Austria, Am Campus 1, 3400 Klosterneuburg, Austria}

\date{\today}

\maketitle

\begin{abstract}
We consider the critical temperature for superconductivity, defined via the linear BCS equation.
We prove that at weak coupling the critical temperature for a sample confined to a quadrant in two dimensions is strictly larger than the one for a half-space, which in turn is strictly larger than the one for $\mathbb{R}^2$.
Furthermore, we prove that the relative difference of the critical temperatures vanishes in the weak coupling limit.
\end{abstract}

\paragraph*{MSC Class:} 81Q10 (Primary) 82D55 (Secondary)

\section{Introduction}
Recent work \cite{samoilenka_boundary_2020,samoilenka_microscopic_2021,barkman_elevated_2022-1,benfenati_boundary_2021,talkachov_microscopic_2022,shanenko_size-dependent_2006} predicts the occurrence of boundary superconductivity in the BCS model.
Close to edges superconductivity sets in at higher temperatures than in the bulk, and at corners the critical temperature appears to be even higher than at edges.
A first rigorous justification was provided in \cite{hainzl_boundary_2023,roos_bcs_2023}, where it was proved that the system on half-spaces in dimensions $d\in \{1,2,3\}$ can have higher critical temperatures than on $\BR^d$.
Here, we consider $d=2$ and the goal is to show that a quadrant has a higher critical temperature than a half-space.
Since comparing the critical temperatures for the non-linear BCS model is very difficult, we work with the critical temperature defined via the linear BCS equation and show that a quadrant has a higher critical temperature than a half-space, at least at weak coupling in the same spirit as in \cite{frank_bcs_2019,frank_bcs_2018}.
This may serve as a starting point for future investigations of the non-linear model.

Superconductivity is more stable close to boundaries also when a magnetic field is applied.
This phenomenon has been widely studied using Ginzburg-Landau theory, see e.g.~\cite{fournais_spectral_2010, correggi_ginzburg-landau_2024} and the references therein.
Ginzburg-Landau theory can be rigorously derived from BCS theory for domains without boundaries \cite{frank_microscopic_2012,deuchert_microscopic_2023-1,deuchert_microscopic_2023}, while for domains with boundaries this is an open problem.

We consider the full plane, and the half- and quarter-spaces $\Omega_k=(0,\infty)^k \times \BR^{2-k}$ for $k\in\{0,1,2\}$.
We define the critical temperature as in \cite{hainzl_boundary_2023,roos_bcs_2023} using the operator
\begin{equation}
H_T^{\Omega}=\frac{-\Delta_x-\Delta_y-2\mu}{\tanh\left(\frac{-\Delta_x-\mu}{2T}\right)+\tanh\left(\frac{-\Delta_y-\mu}{2T}\right)}-\lambda V(x-y)
\end{equation}
acting in $L_{\rm sym}^2(\Omega\times \Omega)=\{\psi \in L^2(\Omega\times \Omega) \vert \psi(x,y)=\psi(y,x)\ \mathrm{for\ all}\ x,y\in \Omega\}$, where $-\Delta$ denotes the Dirichlet or Neumann Laplacian and the subscript indicates on which variable it acts, $T$ is the temperature, $\mu$ is the chemical potential, $V$ is the interaction, and $\lambda$ is the coupling constant.
The first term is defined through functional calculus.
For $V\in L^t(\BR^2)$ with $t>1$, the $H_T^{\Omega_k}$ are self-adjoint operators defined via the KLMN theorem \cite[Remark 2.2]{roos_bcs_2023}.

The critical temperatures are defined as
\begin{equation}\label{tcj}
T_c^k(\lambda):= \inf\{T\in (0,\infty) \vert \inf \sigma(H_T^{\Omega_k})\geq 0\}.
\end{equation}
The operator $H_T^{\Omega_k}$ is the Hessian of the BCS functional at the normal state \cite{frank_bcs_2019}, and the linear BCS equation reads $H_T^{\Omega_k} \alpha=0$.

In particular, the system is superconducting for $T<T_c^k(\lambda)$, when the normal state is not a minimizer of the full, non-linear BCS functional.
A priori, superconductivity may also occur at temperatures $T>T_c^k(\lambda)$, either when the ground state energy of the Hessian is not monotone in the temperature, or when the normal state is a local minimum of the BCS functional, but not a global one.
For translation invariant systems with suitable interactions $V$, in particular for $\Omega_0=\BR^2$, this is not the case and the system is in the normal state if $T>T_c^0(\lambda)$.
This was proved in \cite{hainzl_bcs_2008,hainzl_bardeencooperschrieffer_2016} without the restriction to symmetric Cooper pair wave functions and is adapted for symmetric Cooper pair wave functions in \cite{roos_linear_2024}.
Hence $T_c^0$ separates the normal and the superconducting phase.
However, it remains an open question whether the same is true for $T_c^1$ and $T_c^2$.

We prove that for small enough $\lambda$, the critical temperatures defined through the linear criterion \eqref{tcj} satisfy $T_c^2(\lambda)>T_c^1(\lambda)$.
Together with the result from \cite{roos_bcs_2023}, we get the strictly decreasing sequence $T_c^2(\lambda)>T_c^1(\lambda)>T_c^0(\lambda)$ of critical temperatures at weak coupling.

Similarly to \cite[Lemma 2.3]{roos_bcs_2023}, where it was shown that $T_c^1(\lambda)\geq T_c^0(\lambda)$ for all $\lambda$, the following Lemma is relatively easy to prove.
\begin{lemma}\label{t2geqt1}
Let $\lambda,T>0$ and $V\in L^{t}(\BR^2)$ for some $t>1$.
Then $\inf \sigma(H_T^{\Omega_2})\leq \inf \sigma(H_T^{\Omega_1})$.
\end{lemma}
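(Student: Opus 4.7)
The plan is to follow the strategy of \cite[Lemma 2.3]{roos_bcs_2023}: build a trial function for $H_T^{\Omega_2}$ by translating a near-minimizer of $H_T^{\Omega_1}$ far along the direction parallel to $\partial\Omega_1$, so that its support lands inside $\Omega_2\times\Omega_2$. The key geometric fact is that the free direction of $\Omega_1$ (namely $x_2,y_2\in\BR$) can be translated by an arbitrary amount into the positive half-axis.

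Concretely, fix $\eps>0$ and choose a symmetric $\phi$ in the form domain of $H_T^{\Omega_1}$ with $\|\phi\|_2=1$ and $\langle\phi,H_T^{\Omega_1}\phi\rangle\leq \inf\sigma(H_T^{\Omega_1})+\eps$. By a standard density/cut-off argument I may take $\phi$ smooth with compact support in the $(x_2,y_2)$ coordinates, say $\supp \phi\subset\{|x_2|\leq R,\,|y_2|\leq R\}$ for some $R>0$ (leaving the $x_1,y_1$ support unconstrained, so as to accommodate the Neumann case where functions need not vanish on $\{x_1=0\}\cup\{y_1=0\}$). For $a>R$ put
\[
\phi_a(x_1,x_2,y_1,y_2) := \phi(x_1,x_2-a,y_1,y_2-a).
\]
Then $\phi_a$ is symmetric under $x\leftrightarrow y$, has $\|\phi_a\|_2=1$, and is supported in $\Omega_2\times\Omega_2$ since $x_2,y_2\geq a-R>0$ throughout its support. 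Since $-\Delta_x,-\Delta_y$ on $\Omega_1$ and the multiplication by $V(x-y)$ all commute with the simultaneous translation $(x_2,y_2)\mapsto(x_2+a,y_2+a)$, translation invariance gives $\langle\phi_a,H_T^{\Omega_1}\phi_a\rangle=\langle\phi,H_T^{\Omega_1}\phi\rangle$.

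It then suffices to prove
\[
\langle\phi_a,H_T^{\Omega_2}\phi_a\rangle-\langle\phi_a,H_T^{\Omega_1}\phi_a\rangle \xrightarrow{a\to\infty} 0,
\]
which forces $\inf\sigma(H_T^{\Omega_2})\leq\inf\sigma(H_T^{\Omega_1})+\eps$ and, after sending $\eps\downarrow 0$, yields the claim. The interaction parts $-\lambda\langle\phi_a,V(x-y)\phi_a\rangle$ agree, as $V(x-y)$ is a multiplication operator and $\phi_a$ is supported in $\Omega_2\times\Omega_2$. For the kinetic part I would write $K_T^{\Omega_k}$ as a continuous function of the commuting self-adjoint pair $(-\Delta_x,-\Delta_y)$ on $\Omega_k$, represent this function through a heat-semigroup (or resolvent) integral, and invoke the fact that the Dirichlet/Neumann heat kernels on $\Omega_1$ and on $\Omega_2$ differ only by reflections off the extra boundary pieces $\{x_2=0\}\cup\{y_2=0\}$; this difference is exponentially small in the distance to those pieces, which on the support of $\phi_a$ is at least $a-R$.

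The main obstacle is exactly this kinetic comparison. The symbol $(u+v-2\mu)/(\tanh((u-\mu)/2T)+\tanh((v-\mu)/2T))$ is not operator-monotone in $(u,v)$, so one cannot deduce the comparison from a form inequality between the two Laplacians; one really has to quantify locality of the functional calculus of $-\Delta$ on $\Omega_k$ away from the added boundary. Once this is established, the remainder is routine translation and density bookkeeping, which parallels the $\Omega_0\to\Omega_1$ argument of \cite[Lemma 2.3]{roos_bcs_2023} and is the ``relatively easy'' aspect alluded to in the statement.
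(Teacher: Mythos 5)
Your overall strategy coincides with the paper's: translate a compactly supported near-minimizer of $H_T^{\Omega_1}$ far along the second coordinate so that it becomes an admissible trial state for $H_T^{\Omega_2}$, observe that the interaction term is unchanged and that $H_T^{\Omega_1}$ is translation invariant in that direction, and then show that the two kinetic forms converge to each other. Up to that point the bookkeeping is fine. The genuine gap is precisely the step you yourself flag as ``the main obstacle'': you never prove $\langle\phi_a,K_T^{\Omega_2}\phi_a\rangle-\langle\phi_a,K_T^{\Omega_1}\phi_a\rangle\to0$, and the route you sketch (represent $K_T^{\Omega_k}$ by a heat-semigroup or resolvent integral in the commuting pair $(-\Delta_x,-\Delta_y)$ and invoke exponential locality of Dirichlet/Neumann heat kernels) is not readily available for this symbol. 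The function $K_T(u,v)$ grows linearly in $u+v$, and its denominator $\tanh(\frac{u-\mu}{2T})+\tanh(\frac{v-\mu}{2T})$ vanishes on the set $u+v=2\mu$, so $K_T$ is neither a bounded function of the two Laplacians nor an obvious superposition of semigroup or resolvent kernels; the Matsubara/Mittag-Leffler expansion that is standard in this context represents $1/K_T=B_T$, not $K_T$ itself. Splitting off the free symbol, $K_T=(u+v-2\mu)/2+\mathrm{remainder}$, does not help either, since the remainder is unbounded (it still grows quadratically in $p$ in the region where $q$ stays bounded, with a $T$-dependent constant). Hence the exponential-locality statement you invoke for the functional calculus of $K_T$ is itself a nontrivial claim that would need a proof; your argument stops exactly where the real work begins.

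The paper avoids any locality estimate: because the Dirichlet/Neumann Laplacians on $\Omega_1$ and $\Omega_2$ have explicit product eigenfunction expansions (the functions $T_{\Omega_k}(x,p)$ of Section~\ref{sec:basic_properties}), the kinetic form of the shifted state can be written exactly as $\langle\psi,K_T^{\Omega_1}\psi\rangle$ plus three terms carrying oscillating phases $e^{2ilp_2}$, $e^{2ilq_2}$, $e^{2il(p_2+q_2)}$, built from the (anti)symmetric extension $\tilde\psi\in H^1(\BR^4)$; see \eqref{2.6}. Since $K_T(p,q)\le C(1+p^2+q^2)$ by Lemma~\ref{KT-Laplace}, the corresponding integrands are in $L^1$, and the Riemann--Lebesgue lemma makes the extra terms vanish as $l\to\infty$. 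Replacing your heat-kernel locality claim by this explicit sine/cosine-transform computation is the missing ingredient that would complete your outline.
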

Its proof can be found in Section~\ref{sec:basic_properties}.
In particular it follows that for all $\lambda>0$, we have $T_c^2(\lambda)\geq T_c^1(\lambda)$.
The difficulty lies in proving a strict inequality, which the rest of the paper will be devoted to.
In order to prove $T_c^2(\lambda)> T_c^1(\lambda)$, we shall give a precise analysis of the asymptotic behavior of $H_{T_c^1(\lambda)}^{\Omega_1}$ as $\lambda\to0$.

For $\mu>0$ let $\FT:L^1(\BR^2)\to L^2(\BS^{1})$ act as the restriction of the Fourier transform to a sphere of radius $\sqrt\mu$, i.e., $\FT \psi(\omega)=\widehat \psi (\sqrt{\mu} \omega)$ and for $V\geq 0$ define $O_{\mu}=V^{1/2} \FT^\dagger \FT  V^{1/2}$ on $L^2(\BR^2)$.
The operator $O_\mu$ is compact.
For the desired asymptotic behavior of $H_{T_c^1(\lambda)}^{\Omega_1}$ we need 
that $O_\mu$ has a non-degenerate eigenvalue $e_\mu = \sup \sigma(O_\mu)>0$ at the top of its spectrum \cite{hainzl_bardeencooperschrieffer_2016,henheik_universality_2023}.

We require the following assumptions for our main result.
\begin{asptn}\label{asptn1}
Let $\mu>0$.
Assume that
\begin{enumerate}[(i)]
\item $V\in L^1(\BR^2) \cap L^{t}(\BR^2)$ for some $t>1$, \label{aspt.1}
\item $V$ is radial, $V\not \equiv 0$,\label{aspt.2}
\item $\vert \cdot \vert V \in L^1(\BR^2)$,\label{aspt.3}
\item $V\geq 0$,\label{aspt.4}
\item $e_\mu=\sup \sigma(O_\mu)$ is a non-degenerate eigenvalue. \label{aspt.5}
\end{enumerate}
\end{asptn}
\begin{rem}
Similarly to the three dimensional case discussed in \cite[Section III.B.1]{hainzl_bardeencooperschrieffer_2016}, because of rotation invariance the eigenfunctions of $O_\mu$ are given, in radial coordinates $r \equiv (|r|,\varphi)$, by $V^{1/2}(r)e^{im\varphi} J_m (\sqrt{\mu}\vert r \vert)$, where $J_m$ denote the Bessel functions. 
The corresponding eigenvalues are
\begin{equation}
e_\mu^{(m)}=\frac{1}{2\pi} \int_{\BR^2} V(r) \vert J_m (\sqrt{\mu}\vert r \vert)\vert^2 \dd r
\end{equation}
and in particular $e_\mu^{(m)}=e_\mu^{(-m)}$.
Assumption~\eqref{aspt.5} therefore means that $e_\mu=e_\mu^{(0)}$ and that all other eigenvalues $e_\mu^{(m)}$ are strictly smaller.
Hence, the eigenstate corresponding to $e_\mu$ has zero angular momentum.
Analogously to the three dimensional case, a sufficient condition for~\eqref{aspt.5} to hold is that  $\widehat{V}\geq0$.
\end{rem}
Our first main result is:
\begin{thm}\label{thm1}
Let $\mu>0$ and let $V$ satisfy Assumption~\ref{asptn1}.
Assume the same boundary conditions, either Dirichlet or Neumann, on $\Omega_1$ and $\Omega_2$.
Then there is a $\lambda_1>0$, such that for all $0<\lambda<\lambda_1$, $T_c^2(\lambda)>T_c^1(\lambda)$.
\end{thm}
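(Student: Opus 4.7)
My plan is to prove Theorem~\ref{thm1} by showing that, for small enough $\lambda$, $\inf \sigma(H_{T_c^1(\lambda)}^{\Omega_2}) < 0$. Combined with Lemma~\ref{t2geqt1} (which gives $\inf \sigma(H_{T_c^1(\lambda)}^{\Omega_2}) \leq \inf \sigma(H_{T_c^1(\lambda)}^{\Omega_1}) = 0$ and hence already $T_c^2(\lambda) \geq T_c^1(\lambda)$), together with the continuity and eventual positivity of $T \mapsto \inf \sigma(H_T^{\Omega_2})$ for large $T$, the strict bound $\inf \sigma(H_{T_c^1(\lambda)}^{\Omega_2}) < 0$ yields $T_c^2(\lambda) > T_c^1(\lambda)$. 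Thus the problem reduces to constructing a symmetric trial state $\alpha_2 \in L^2_{\rm sym}(\Omega_2 \times \Omega_2)$ with $\langle \alpha_2, H_{T_c^1(\lambda)}^{\Omega_2} \alpha_2\rangle < 0$.

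The construction of $\alpha_2$ rests on a precise asymptotic description of the quasi-minimiser of $H_{T_c^1(\lambda)}^{\Omega_1}$ as $\lambda \to 0$, which is the technical heart of the argument. Exploiting translation invariance of $\Omega_1 = (0,\infty)\times\BR$ in $x_2$, I would Fourier-transform in the centre-of-mass momentum $P$ of the $x_2$-variable and work at the fibre $P=0$ (where the infimum is attained), reducing the ground-state problem to an operator on functions $\phi(x_1,y_1,x_2-y_2)$. In this regime, Assumption~\ref{asptn1}\eqref{aspt.5} forces the leading-order minimiser to approximately factorise as
\begin{equation*}
\alpha_1(x,y) \approx V^{1/2}(x-y)\, J_0\!\bigl(\sqrt{\mu}\,|x-y|\bigr)\, \psi(x_1,y_1),
\end{equation*}
where the relative-coordinate factor is the top eigenfunction of $O_\mu$ (cf.~the remark following Assumption~\ref{asptn1}) and $\psi$ is a boundary-adapted profile on the half-line, determined by an effective one-dimensional problem. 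This sharpens the analysis developed in \cite{hainzl_boundary_2023,roos_bcs_2023} for the proof of $T_c^1 > T_c^0$, but with quantitative control of the remainder relative to the exponentially small scale $T_c^1(\lambda)$.

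With this in hand, I would take as trial state on $\Omega_2$ the natural ``doubly boundary-enhanced'' symmetrisation of
\begin{equation*}
\alpha_2(x,y) = V^{1/2}(x-y)\, J_0\!\bigl(\sqrt{\mu}\,|x-y|\bigr)\, \psi(x_1,y_1)\, \psi(x_2,y_2),
\end{equation*}
and compute $\langle \alpha_2, H_{T_c^1(\lambda)}^{\Omega_2} \alpha_2\rangle$. The quadratic form associated with $H_{T_c^1(\lambda)}^{\Omega_2}$ is treated, up to controlled errors, by expanding around the product structure of $\alpha_2$: the contribution built only from the $(x_1,y_1)$-profile reproduces the analogous half-space expression on $\Omega_1$ and therefore vanishes at $T = T_c^1(\lambda)$, while an \emph{additional} negative contribution of the same kind is generated by the second edge $\{x_2 = y_2 = 0\}$. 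Since that edge enters exactly as the first one does in the $T_c^1 > T_c^0$ analysis of \cite{roos_bcs_2023}, its contribution is strictly negative, and the whole expectation value becomes negative for small $\lambda$.

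The main obstacle is quantitative: both $T_c^1(\lambda) - T_c^0(\lambda)$ and the boundary gain from the second edge are exponentially small in $1/\lambda$, of order $T_c^1(\lambda) \sim \mu\, e^{-1/(\lambda e_\mu)}$ times subleading corrections. Every approximation — the factorisation of $\alpha_1$ on $\Omega_1$, the insertion of the product ansatz into the \emph{non-local} operator $H_T^{\Omega_2}$ with its Dirichlet or Neumann boundary conditions, the isolation of the two boundary contributions from the combined quadratic form, and the cross-coupling between the two edges — must be carried out with errors strictly smaller than this exponential. This will very likely require a Birman--Schwinger reformulation as in \cite{frank_bcs_2019,frank_bcs_2018,roos_bcs_2023}, rewriting the spectral condition as invertibility of a compact perturbation of $\lambda \FT^\dagger \FT$ on $L^2(\BS^1)$ twisted by the boundary geometry, together with a semiclassical analysis of $\tanh((-\Delta-\mu)/2T)^{-1}$ at $T \sim T_c^1(\lambda)$ near the Fermi sphere in the presence of the boundary. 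Controlling these error terms finely enough to ensure that the second-edge gain survives is the main technical challenge.
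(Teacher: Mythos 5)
Your high-level reduction is the same as the paper's (build a trial state for $H^{\Omega_2}_{T_c^1(\lambda)}$ out of the half-space problem, show the form is strictly negative, then conclude via Lemma~\ref{t2geqt1} and continuity of $T\mapsto\inf\sigma(H_T^{\Omega_2})$), but the execution has two genuine gaps. First, you work at the fibre $P=0$ ``where the infimum is attained'': for the half-space operator this is not known. Unlike the full-space case, one cannot show that the optimal centre-of-mass momentum in the translation-invariant direction vanishes; the paper has to introduce the optimal momentum $\eta(\lambda)$ (Lemma~\ref{lea:etaex}), prove that the fibered operator $H^1_{T_c^1(\lambda)}(\eta(\lambda))$ actually possesses a ground state (Lemma~\ref{lea:H1ev}, via a Birman--Schwinger compactness argument), and control only the asymptotics $\eta(\lambda)=O(T_c^1(\lambda))$. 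Your ansatz silently assumes both the value $P=0$ and the existence of a minimiser, neither of which is available.

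Second, and more fundamentally, replacing the exact fibered ground state by the approximate factorisation $V^{1/2}(x-y)J_0(\sqrt{\mu}|x-y|)\psi(x_1,y_1)$ destroys the mechanism that makes the proof work. In the paper the trial state is built from the \emph{exact} eigenfunction $\Phi_\lambda$ of $H^1_{T_c^1}(\eta)$, so the eigenvalue equation makes the bulk and first-edge contributions cancel identically in the $\eps\to0$ limit (Lemma~\ref{lea:eps_to_0}); what survives is $\lambda(L_1+L_2)$, and the second-edge gain wins because $L_2\leq -C/\lambda$ while $L_1=O(1)$ (Lemma~\ref{lea:lambdato0}) --- a power-law separation in $\lambda$, with no exponentially fine comparison anywhere. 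With your approximate product ansatz the half-space part of the quadratic form no longer vanishes; it contributes a nonnegative error controlled only by how close the ansatz is to the true minimiser in an operator-weighted norm, and the available concentration estimates for the half-space minimiser (the analogues of Lemma~\ref{etato0}: $\|\BP^\perp F_2\Psi_\lambda\|^2=O(\lambda)$, etc.) are only power-law in $\lambda$. Your own plan acknowledges that you would then need all errors to be smaller than an exponentially small scale $\sim T_c^1(\lambda)$; that accuracy is not attainable by these methods, and in fact it is not needed --- the whole point of using the exact minimiser at the exact temperature $T_c^1(\lambda)$ is that the dangerous terms cancel exactly rather than approximately, so only the $O(1)$-versus-$1/\lambda$ comparison between $L_1$ and $L_2$ remains. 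As written, your scheme would stall at the step ``the contribution built only from the $(x_1,y_1)$-profile vanishes at $T=T_c^1(\lambda)$'', which is false for an approximate profile, and the error it leaves behind is not shown (and cannot be shown with the stated tools) to be dominated by the second-edge gain.
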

\begin{rem}
The critical temperature $T_c^1(\lambda)$ is the smallest temperature $T$ satisfying $\inf \sigma(H_{T}^{\Omega_1})=0$.
Other solutions to this equation would define larger critical temperatures.
Upon inspection, the proof of Theorem~\ref{thm1} shows that for any temperature $T$ satisfying $\inf \sigma(H_{T}^{\Omega_1})=0$ the system on the quadrant is superconducting for temperatures in an interval around $T$.
\end{rem}
The second main result is that the relative difference in critical temperatures vanishes in the weak coupling limit.
\begin{thm}\label{thm2}
Let $\mu>0$ and let $V$ satisfy Assumption~\ref{asptn1}.
Assume either Dirichlet or Neumann boundary conditions on $\Omega_2$.
Then
\begin{equation}\label{eq:treldiff}
\lim_{\lambda\to0} \frac{T_c^2(\lambda)-T_c^0(\lambda)}{T_c^0(\lambda)}=0.
\end{equation}
\end{thm}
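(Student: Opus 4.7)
The plan is to compare the weak-coupling asymptotics of $T_c^0(\lambda)$ and $T_c^2(\lambda)$ via the Birman-Schwinger reformulation. Lemma~\ref{t2geqt1} together with \cite[Lemma~2.3]{roos_bcs_2023} yields $T_c^2(\lambda)\geq T_c^0(\lambda)$, so \eqref{eq:treldiff} reduces to the matching upper bound $T_c^2(\lambda)\leq T_c^0(\lambda)(1+o(1))$ as $\lambda\to 0$.

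Since $V\geq 0$, the criterion $\inf\sigma(H_T^{\Omega_k})\geq 0$ is equivalent to $\lambda\sup\sigma(B_T^k)\leq 1$ with $B_T^k:=V^{1/2}(K_T^{\Omega_k})^{-1}V^{1/2}$ and $K_T^{\Omega_k}$ the kinetic part of $H_T^{\Omega_k}$, so $T_c^k(\lambda)$ is determined by $\lambda\sup\sigma(B_{T_c^k}^k)=1$. The main technical claim is the asymptotic
\begin{equation}\label{eq:planasymp}
\sup\sigma(B_T^2)=\log(\mu/T)\,e_\mu + c_2(T),
\end{equation}
in which $c_2(T)$ has the same limit as $T\to 0$ as the subleading constant in the known bulk expansion of $\sup\sigma(B_T^0)$ from \cite{hainzl_bardeencooperschrieffer_2016,henheik_universality_2023}. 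Subtracting the Birman-Schwinger equations for $k=0$ and $k=2$, and letting $\lambda\to 0$ (noting $T_c^k(\lambda)\to 0$), then yields $\log(T_c^2(\lambda)/T_c^0(\lambda))\to 0$, which is \eqref{eq:treldiff}.

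The leading coefficient $e_\mu$ in \eqref{eq:planasymp} is handled through the reflection construction underlying Lemma~\ref{t2geqt1}: $L_{\mathrm{sym}}^2(\Omega_2\times\Omega_2)$ is isometrically embedded into a parity subspace of $L_{\mathrm{sym}}^2(\BR^2\times\BR^2)$ via reflection in each of the four coordinates $x_1,x_2,y_1,y_2$ (even for Neumann, odd for Dirichlet), and under this embedding $K_T^{\Omega_2}$ becomes the restriction of $K_T^{\BR^2}$ to the parity subspace. The logarithmic divergence originates from the singularity of $(K_T^{\BR^2})^{-1}$ at the Fermi surface $|p|=\sqrt{\mu}$, and the associated on-shell matrix elements are encoded by the compact operator $O_\mu=V^{1/2}\FT^\dagger\FT V^{1/2}$. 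Its top eigenfunction $V^{1/2}(r)J_0(\sqrt{\mu}|r|)$ is radial and so lies in every parity subspace, forcing the leading coefficient in \eqref{eq:planasymp} to be $e_\mu$ regardless of the boundary condition.

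The main obstacle is proving the subleading-constant claim. Under the reflection embedding the direct interaction $V(x-y)$ is augmented by reflected partner kernels such as $V(x_1-y_1,x_2+y_2)$ and $V(x_1+y_1,x_2-y_2)$, which are not translation invariant and hence cannot exploit the Fermi-surface logarithm of $(K_T^{\BR^2})^{-1}$ in the way that $V(x-y)$ does. I would decompose $(K_T^{\BR^2})^{-1}$ into an on-shell part supported near $|p|=\sqrt{\mu}$ (producing the full $\log(\mu/T)\,e_\mu$ together with the bulk subleading constant through $\FT^\dagger\FT$) and a bounded off-shell remainder, and then show that sandwiching either piece with the reflected partner potentials yields an operator of norm $o(1)$ as $T\to 0$. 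The decay assumption $|\cdot|V\in L^1$ supplies the integrability needed to control these contributions uniformly; heuristically, as the BCS coherence length diverges in the weak-coupling limit, an optimizing sequence for $B_T^2$ can be arranged with its centre of mass at macroscopic distance from the boundary, so that its local structure reproduces that of the bulk optimizer at the level of the subleading Birman-Schwinger constant.
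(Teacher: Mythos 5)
Your overall frame (Birman--Schwinger reduction, $T_c^2\geq T_c^0$ from Lemma~\ref{t2geqt1}, reflection of the quadrant into parity sectors of the plane so that the kinetic part becomes the bulk operator and the price is a set of reflected interaction kernels, and the observation that the radial top eigenfunction of $O_\mu$ fixes the leading coefficient $e_\mu$) is in the same spirit as the paper, which decomposes $A_T^2=\tilde\iota^\dagger(A_T^0-F_2^\dagger R_TF_2)\tilde\iota$ with $R_T=\pm G_T^1\pm G_T^2-N_T$ in \eqref{AT2_decomposition}. However, your key technical claim --- that sandwiching \emph{either} the on-shell or the off-shell part of the resolvent with the reflected partner potentials yields an operator of norm $o(1)$ as $T\to0$ --- is not true, and it is exactly what your argument needs: the reflected terms $G_T^1,G_T^2,N_T$ are only \emph{uniformly bounded} in $T$ (Lemma~\ref{corner_w_lea3}), not vanishing. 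For instance, the kernel $B_T(p,q)$ converges pointwise as $T\to0$ to a nonzero limit away from the set where both $(p\pm q)^2=\mu$, so $N_T$ converges to a nonzero operator and $\limsup_{T\to0}\lVert N_T\rVert>0$; the same holds for $G_T^j$ (in the half-space setting these very terms are the source of the strict enhancement $T_c^1>T_c^0$). With only an $O(1)$ bound on the reflected pieces, the subleading constant in your expansion \eqref{eq:planasymp} could a priori shift by $O(1)$, which after subtracting the Birman--Schwinger equations would give $T_c^2/T_c^0\to e^{c/e_\mu}\neq1$; so boundedness alone does not close the argument, and your concluding heuristic (optimizer far from the corner / diverging coherence length) is a restatement of the desired conclusion rather than a mechanism.

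The paper closes precisely this gap with a two-regime argument in the center-of-mass momentum $q$: (i) restricted to $|q|\leq\epsilon$ the reflected terms are small uniformly in $T$ as $\epsilon\to0$ (Lemma~\ref{corner_w_lea4}); (ii) for $|q|>\epsilon$ the bulk operator lies below its supremum by a gap $E_T(q)\gtrsim\ln(1/T)$ (Lemma~\ref{corner_w_lea2}), which dwarfs the bounded perturbation; and (iii) these are combined through the factorization \eqref{E+G+N+d}, giving $\lim_{T\to0}(a_T^0-a_T^2)=0$ without ever needing the reflected terms to vanish in norm. If you want to rescue your route, you would have to replace the false $o(1)$ claim by this kind of momentum-localized smallness plus a coercivity estimate off zero total momentum; also note that the paper avoids your additional (true but unnecessary) assumption that the bulk subleading constant converges, since only the difference $a_T^0-a_T^2\to0$ is required, together with the equivalence of that statement with \eqref{eq:treldiff} (a generalization of \cite[Lemma 4.1]{hainzl_boundary_2023}), which your ``subtract the equations'' step implicitly uses and should be justified.
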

Since $T_c^2(\lambda)\geq T_c^1(\lambda)\geq T_c^0(\lambda)$, this implies $\lim_{\lambda\to0} \frac{T_c^2(\lambda)-T_c^1(\lambda)}{T_c^1(\lambda)}=0$ and  $\lim_{\lambda\to0} \frac{T_c^1(\lambda)-T_c^0(\lambda)}{T_c^0(\lambda)}=0$.
The latter was already shown in \cite{roos_bcs_2023} and we closely follow \cite{roos_bcs_2023} to prove Theorem~\ref{thm2}.

The paper is structured as follows.
In Section~\ref{sec:strategy} we explain the proof strategy for Theorem~\ref{thm1}.
Section~\ref{sec:basic_properties} contains the proofs of some basic properties of $H_T^\Omega$.
Section~\ref{sec:reg_conv} discusses the regularity and asymptotic behavior of the ground state of $H_T^{\Omega_1}$.
In Section~\ref{sec:epsto0} we prove Lemma~\ref{lea:eps_to_0}, the first key step in the proof of Theorem~\ref{thm1}.
The second key step, Lemma~\ref{lea:lambdato0} is proved in Section~\ref{sec:lambdato0}.
In Section~\ref{sec:pfthm2} we prove Theorem~\ref{thm2}.
Section~\ref{sec:aux} contains the proofs of auxiliary Lemmas.

\subsection{Proof strategy for Theorem~\ref{thm1}}\label{sec:strategy}
The proof of Theorem~\ref{thm1} is based on the variational principle.
The idea is to construct a trial state for $H_{T_c^1(\lambda)}^{\Omega_2}$ involving the ground state of $H_{T_c^1(\lambda)}^{\Omega_1}$.
However, the latter operator is translation invariant in the second component of the center of mass variable and therefore has purely essential spectrum.
To work with an operator that has eigenvalues, we fix the momentum in the translation invariant direction, and choose it in order to minimize the energy.

Let $U:L^2(\BR^2\times \BR^2)\to L^2(\BR^2\times \BR^2)$ be the unitary operator switching to relative and center of mass coordinates $r=x-y$ and $z=x+y$, i.e.~$U\psi (r,z)= \frac{1}{2}\psi((r+z)/2,(z-r)/2)$.
We shall apply $U$ to functions defined on a subset of $\Omega\subset \BR^2\times \BR^2$, by identifying $L^2(\Omega)$ with the set of functions in $L^2(\BR^2\times \BR^2)$ supported in $\Omega$.
The operator $U H_{T}^{\Omega_1} U^\dagger$, which is $H_T^{\Omega_1}$ transformed to relative and center of mass coordinates, acts on functions on $\tilde \Omega_1 \times \BR$, where $\tilde \Omega_1=\{(r,z_1)\in \BR^{3} \vert \vert r_1\vert< z_1\}$, and is translation invariant in $z_2$.
For every $q_2\in \BR$ let $H_T^1(q_2)$ be the operator obtained from $U H_{T}^{\Omega_1} U^\dagger$ by restricting to momentum $q_2$ in the $z_2$ direction.
The operator $H_T^1(q_2)$ acts in $L_{\rm s}^2(\tilde \Omega_1) =\{\psi \in L^2(\tilde \Omega_1) \vert \psi(r,z_1)=\psi(-r,z_1)\}$ and we have $\inf \sigma(H_{T_c^1(\lambda)}^{\Omega_1})=\inf_{q_2\in \BR} \inf \sigma(H_{T_c^1(\lambda)}^1(q_2))$.
We want to choose $q_2$ to be optimal. That this can be done is a consequence of the following Lemma, whose proof will be given in Section~\ref{sec:pfetaex}.
\begin{lemma}\label{lea:etaex}
Let $T,\lambda, \mu>0$ and $V\in L^t(\BR^2)$ for some $t>1$.
The function $q_2\mapsto \inf \sigma(H_T^1(q_2))$ is continuous, even and diverges to $+\infty$ as $|q_2|\to\infty$.
\end{lemma}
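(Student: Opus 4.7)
The strategy is to view $H_T^1(q_2)$ as the fiber at $q_2$ of $U H_T^{\Omega_1} U^\dagger$ under the translation decomposition in $z_2$. It takes the form $H_T^1(q_2) = K_T(q_2) - \lambda V(r)$, where $K_T(q_2)$ is the joint functional calculus of the two commuting self-adjoint operators $-\Delta_x^{(q_2)}$ and $-\Delta_y^{(q_2)}$ obtained from $-\Delta_x,-\Delta_y$ by replacing $p_{z_2}$ by $q_2$. In joint spectral coordinates its symbol reads
\[
k(p,q_2) = \frac{E_x + E_y}{\tanh(E_x/(2T)) + \tanh(E_y/(2T))},
\]
with $E_x = (p_{r_1}+p_{z_1})^2 + (p_{r_2}+q_2)^2 - \mu$ and $E_y = (p_{z_1}-p_{r_1})^2 + (q_2-p_{r_2})^2 - \mu$. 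This symbol will be the main object of study; the three claims will be handled independently.

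\emph{Evenness.} The Dirichlet/Neumann Laplacians are real operators and $V$ is real-valued, so $H_T^{\Omega_1}$ commutes with complex conjugation $C$. Under the $z_2$-fiber decomposition, $C$ interchanges the fibers at $q_2$ and $-q_2$ (since conjugation in position space is conjugation composed with reflection in Fourier), yielding the antiunitary equivalence $C_0 H_T^1(-q_2) C_0^{-1} = H_T^1(q_2)$ and hence $\inf\sigma(H_T^1(q_2)) = \inf\sigma(H_T^1(-q_2))$.

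\emph{Divergence.} The identity $|p_x|^2 + |p_y|^2 = 2(|p_r|^2 + p_{z_1}^2 + q_2^2)$ combined with a direct analysis of $k$ yields the lower bound $k(p,q_2) \geq c(1 + |p_r|^2 + p_{z_1}^2 + q_2^2) - C$ for constants $c,C>0$ depending only on $T,\mu$; equivalently, $K_T(q_2) \geq c(-\Delta_r - \partial_{z_1}^2) + c q_2^2 - C$ as a quadratic form. Since $V\in L^t(\BR^2)$ with $t>1$ makes $\lambda V$ infinitesimally form-bounded with respect to $-\Delta_r$ (uniformly in $q_2$), we deduce $\inf\sigma(H_T^1(q_2)) \geq c q_2^2 - C'$, which tends to $+\infty$.

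\emph{Continuity.} The plan is to establish norm-resolvent continuity of $q_2 \mapsto K_T(q_2)$, which then transfers to $H_T^1(q_2)$ by treating $\lambda V$ as a $q_2$-uniformly form-small perturbation and gives continuity of $\inf\sigma(H_T^1(q_2))$. Rewriting
\[
k(\alpha,\beta) = 2T\cdot\frac{u}{\sinh u}\cdot\cosh\!\left(\frac{\alpha}{2T}\right)\cosh\!\left(\frac{\beta}{2T}\right), \qquad u = \frac{\alpha+\beta}{2T},
\]
makes manifest that $k$ is smooth on $[-\mu,\infty)^2$ with gradient uniformly bounded by a constant $L=L(T,\mu)$. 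Combined with the linear expansions $E_x(q_2)-E_x(q_2')=(q_2-q_2')(2p_{r_2}+q_2+q_2')$ (symmetrically for $E_y$) and the pointwise lower bound $k(p,q_2')+1 \geq c(1+p_{r_2}^2+{q_2'}^2)$, joint functional calculus of the commuting family delivers
\[
\bigl\|(K_T(q_2)-K_T(q_2'))(K_T(q_2')+1)^{-1}\bigr\| \leq C''|q_2-q_2'| \longrightarrow 0.
\]
The principal technical point is the uniform gradient bound on $k$: in the raw quotient form the denominator $\tanh(\alpha/2T)+\tanh(\beta/2T)$ vanishes on the diagonal $\alpha+\beta=0$, but the $u/\sinh u$ reformulation removes this apparent singularity and reduces the gradient estimate to a routine case analysis across the asymptotic regimes $\alpha,\beta\to\pm\infty$ and near $\alpha=\beta=0$.
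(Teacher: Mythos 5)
Your evenness and divergence arguments are fine: the divergence part is essentially the paper's (lower bound $K_T\geq C_1(1+p^2+q^2)$ plus uniform infinitesimal form-boundedness of $V$), and your evenness argument via realness of $H_T^{\Omega_1}$ and the fact that complex conjugation swaps the $\pm q_2$ fibers is a legitimate alternative to the paper's reflection $r_2\to-r_2$ (which uses evenness of $V$ in $r_2$). The problem is the continuity part, and specifically the statement you call the ``principal technical point'': the symbol $k(\alpha,\beta)=\frac{\alpha+\beta}{\tanh(\alpha/2T)+\tanh(\beta/2T)}$ does \emph{not} have a uniformly bounded gradient on $[-\mu,\infty)^2$. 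The $u/\sinh u$ rewriting only cures the apparent singularity at $\alpha+\beta=0$; it does nothing about the regime where one energy is large and the other stays in a bounded window. For example, with $2T=1$, $\partial_\beta k(\alpha,0)=\frac{1}{\tanh\alpha}-\frac{\alpha}{\tanh^2\alpha}\sim-\alpha$ as $\alpha\to\infty$, so no constant $L(T,\mu)$ exists. Consequently the chain ``uniform Lipschitz bound in $(E_x,E_y)$ $+$ linear expansions of $E_{x,y}$ in $q_2$ $+$ lower bound on $k$'' does not yield the displayed estimate $\lVert(K_T(q_2)-K_T(q_2'))(K_T(q_2')+1)^{-1}\rVert\leq C''|q_2-q_2'|$ as argued, and this estimate is exactly what the rest of your continuity proof rests on.

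The conclusion itself is likely salvageable, but it requires handling the bad regime ($E_x$ large, $E_y$ bounded, and symmetrically) rather than a global gradient bound: there, $E_y$ bounded pins $p_{r_2}$ near $q_2$, so the factor $|E_y(q_2)-E_y(q_2')|=|q_2-q_2'|\,|q_2+q_2'-2p_{r_2}|$ is itself $O(|q_2-q_2'|)$, and the linear growth of $\partial_\beta k\sim E_x$ must be absorbed by the denominator $k(p,q_2')+1\gtrsim E_x$; this is a genuine case analysis you have not done. The paper avoids the issue by differentiating the reciprocal symbol instead: Lemma~\ref{bdiff} shows the difference quotient of $B_T=K_T^{-1}$ in $q_2$ is bounded by $C(1+p_1^2+p_2^2+q_1^2)^{-1}$ (uniformly for $q_2$ in compacts, via a Mittag-Leffler expansion), and then writes $B_T^{-1}(p,q)-B_T^{-1}(p,(q_1,q_2'))=(q_2'-q_2)f\,B_T^{-1}(p,(q_1,q_2'))B_T^{-1}(p,q)$, using $K_T\leq C(1+p^2+q^2)$ to conclude $|K_T(p,q_2)-K_T(p,q_2')|\leq C|q_2-q_2'|(1+p^2+q_1^2)$, i.e.\ a bound relative to the fiber kinetic form; the decay of the derivative of $B_T$ is precisely what compensates the growth you are missing. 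Either adopt that route or supply the case analysis sketched above; as written, your key lemma is false and the continuity proof has a gap.
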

Therefore, the infimum is attained and we can define $\eta(\lambda)$ to be the minimal number in $[0,\infty)$ such that $\inf \sigma(H_{T_c^1(\lambda)}^1(\eta(\lambda)))=\inf \sigma(H_{T_c^1(\lambda)}^{\Omega_1})$.

Next, we shall argue that $H_{T_c^1(\lambda)}^1(\eta(\lambda))$ indeed has a ground state, at least for small enough coupling, which allows us to construct the desired trial state.
By \cite[Remark 2.5]{roos_bcs_2023}, there is a $\lambda_1>0$ such that $\inf \sigma(H^{\Omega_0}_{T_c^0(\lambda)})$ is attained at zero total momentum for $\lambda<\lambda_1$.
Let $H_T^0$ denote the operator $H^{\Omega_0}_T$ restricted to zero total momentum.
For $\lambda<\lambda_1$ the critical temperature $T_c^0(\lambda)$ is the unique temperature satisfying $\inf \sigma(H_T^0)=0$.
In the weak coupling limit both $T_c^0(\lambda)$ and $T_c^1(\lambda)$ vanish \cite{henheik_universality_2023}, \cite[Theorem 1.7]{roos_bcs_2023}.
Furthermore, at weak enough coupling $T_c^1(\lambda)>T_c^0(\lambda)$ \cite[Theorem 1.3]{roos_bcs_2023}.
In particular, there is a $\lambda_0>0$ such that for $\lambda\leq \lambda_0$ the critical temperatures satisfy $T_c^0(\lambda)<T_c^1(\lambda)<T_c^0(\lambda_1)$.

\begin{lemma}\label{lea:H1ev}
Let $\mu>0$,  let $V$ satisfy Assumption~\ref{asptn1} and let $0<\lambda\leq \lambda_0$.
Then $H_{T_c^1(\lambda)}^1(\eta(\lambda))$ has an eigenvalue at the bottom of its spectrum.
\end{lemma}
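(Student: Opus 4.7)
\emph{The plan} is to establish a spectral gap between $0$ and the essential spectrum of $H_{T_c^1(\lambda)}^1(\eta(\lambda))$, so that $\inf\sigma$ is attained as an eigenvalue. First, by the definition of $T_c^1(\lambda)$ together with continuity of $T\mapsto \inf\sigma(H_T^{\Omega_1})$ and Lemma~\ref{lea:etaex}, one obtains $\inf\sigma(H_{T_c^1(\lambda)}^1(\eta(\lambda)))=\inf\sigma(H_{T_c^1(\lambda)}^{\Omega_1})=0$.

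Next, decompose $H_T^1(q_2)=K_T^1(q_2)-\lambda V$, where $K_T^1(q_2)$ is the kinetic operator. A standard BCS computation using the identity $\tanh x+\tanh y=\sinh(x+y)/(\cosh x\cosh y)$ reduces the desired pointwise bound on the symbol of $K_T^1(q_2)$ to the elementary inequality $|x+y|\geq|\tanh x+\tanh y|$, which follows from the monotonicity of $z\mapsto z-\tanh z$; hence $K_T^1(q_2)\geq 2T$ as self-adjoint operators. Since $K_T^1(q_2)$ is a Borel function of the commuting half-space Laplacians $-\Delta_x,-\Delta_y$ restricted to $z_2$-momentum $q_2$, its spectrum is contained in $[2T,\infty)$, and in particular $\sigma_{\mathrm{ess}}(K_T^1(q_2))\subseteq[2T,\infty)$.

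The main step is then to show that $V$ is a relatively form-compact perturbation of $K_T^1(q_2)$, so that Weyl's theorem gives $\sigma_{\mathrm{ess}}(H_T^1(q_2))=\sigma_{\mathrm{ess}}(K_T^1(q_2))$. The natural route is to verify that $V^{1/2}(K_T^1(q_2)+1)^{-1}V^{1/2}$ is compact on $L^2(\BR^2)$, by combining the integrability hypothesis in Assumption~\ref{asptn1}(\ref{aspt.1}) with a Schatten-class bound on the resolvent kernel extracted from the joint spectral representation of $(-\Delta_x,-\Delta_y)$ in the Dirichlet or Neumann-Fourier variables on $(0,\infty)^2\times \BR$. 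The nonlocality of $K_T^1(q_2)$ and the half-space geometry make this the most delicate technical step of the argument.

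Combining these inputs, at $T=T_c^1(\lambda)>T_c^0(\lambda)>0$ one obtains
\begin{equation*}
\inf\sigma_{\mathrm{ess}}(H_{T_c^1(\lambda)}^1(\eta(\lambda)))\geq 2T_c^1(\lambda)>0=\inf\sigma(H_{T_c^1(\lambda)}^1(\eta(\lambda))),
\end{equation*}
so the bottom of the spectrum lies strictly below the essential spectrum and is therefore attained as an eigenvalue of finite multiplicity, which is the claim of the lemma.
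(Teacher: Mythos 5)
There is a genuine gap at the step you yourself flag as the delicate one: the claim that $V$ is a relatively form-compact perturbation of $K_T^1(q_2)$ is false, and with it the conclusion $\sigma_{\mathrm{ess}}(H_T^1(q_2))=\sigma_{\mathrm{ess}}(K_T^1(q_2))\subseteq[2T,\infty)$. The operator $H_T^1(q_2)$ acts on $L^2_{\rm s}(\tilde\Omega_1)$ with $\tilde\Omega_1=\{(r,z_1):|r_1|<z_1\}$, and the potential $-\lambda V(r)$ is independent of $z_1$; since $\tilde\Omega_1$ is unbounded in the $z_1$ direction, $V$ does not decay along the boundary and multiplication by $V$ cannot be relatively (form-)compact. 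Concretely, $V^{1/2}(K_T^1(q_2))^{-1}V^{1/2}$ is exactly the Birman--Schwinger operator $A_T^1(q_2)$ of \eqref{a1q}, and by the decomposition \eqref{AT1_decomposition} it contains the piece $\iota^\dagger A_T^0(q_2)\iota$, a direct integral over the $q_1$-fibers with purely essential spectrum reaching up to $a_T^0\sim e_\mu\ln(\mu/T)$; only the boundary correction $G_T(q_2)$ is compact. Adding $+1$ to the resolvent does not cure this. Consequently Weyl's theorem does not apply, $\inf\sigma_{\mathrm{ess}}(H_T^1(q_2))$ is governed by the bulk problem (with interaction) and is in general far below $2T$, and your gap $0<2T_c^1(\lambda)$ is not established. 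One can also see that some such obstruction must be present: your argument never uses the strict inequality $T_c^1(\lambda)>T_c^0(\lambda)$, only positivity of $T$, whereas without that strict inequality the ground state could sit at the edge of the essential spectrum and need not be an eigenvalue.

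The correct route, which is the one the paper takes, is to locate the essential-spectrum threshold through the Birman--Schwinger picture: $A_T^1(q_2)=\iota^\dagger(A_T^0(q_2)\mp F_2^\dagger G_T(q_2)F_2)\iota$ with $G_T(q_2)$ Hilbert--Schmidt, so spectrum of $A^1_{T}$ above $a_T^0=\sup\sigma(A_T^0)$ consists of eigenvalues; then the Birman--Schwinger principle gives $\sup\sigma(A^1_{T_c^1(\lambda)}(\eta(\lambda)))=\lambda^{-1}=a^0_{T_c^0(\lambda)}$, and the strict inequality $T_c^1(\lambda)>T_c^0(\lambda)$ at weak coupling together with strict monotonicity of $T\mapsto a_T^0$ (at zero total momentum) yields $a^0_{T_c^0(\lambda)}>a^0_{T_c^1(\lambda)}$, so $\lambda^{-1}$ is an isolated eigenvalue of $A^1_{T_c^1(\lambda)}(\eta(\lambda))$ and hence $H^1_{T_c^1(\lambda)}(\eta(\lambda))$ has an eigenvalue at the bottom of its spectrum. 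If you want to keep your Weyl-type framing, you would have to prove compactness only of the genuine boundary correction (the analogue of $G_T(q_2)$) and compare against the bulk threshold $a_T^0$, not against the kinetic bound $2T$.
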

The proof of Lemma~\ref{lea:H1ev} can be found in Section~\ref{sec:pfH1ev}.
For $\lambda\leq \lambda_0$ let $\tilde \Phi_\lambda$ be the ground state of $H_{T_c^1(\lambda)}^1(\eta(\lambda))$.
In the case $\eta(\lambda)=0$, the operator $H_{T_c^1(\lambda)}^1(\eta(\lambda))$ commutes with reflections $r_2\to-r_2$ and we may assume that $\tilde \Phi_\lambda$ is even or odd under this reflection.
Irrespective of the value of $\eta(\lambda)$, we extend the function $\tilde \Phi_\lambda$ (anti)symmetrically from $\tilde \Omega_1$ to $\BR^3$, such that the extended function $\Phi_\lambda$ satisfies $\Phi_\lambda((-r_1,r_2),-z_1)=\Phi_\lambda(r,z_1)$ and $\mp \Phi_\lambda((z_1,r_2),r_1)=\Phi_\lambda(r,z_1)$, where $-/+$ corresponds to Dirichlet/Neumann boundary conditions (see Figure~\ref{fig:extension} for an illustration).
The function $\Phi_\lambda$ is the key ingredient for our trial state.
Let $\chi_{\tilde\Omega_1}$ denote multiplication by the characteristic function of $\tilde \Omega_1$; then $\tilde \Phi_\lambda=\chi_{\tilde\Omega_1} \Phi_\lambda$ .
We choose the normalization such that $\lVert V^{1/2}\chi_{\tilde\Omega_1} \Phi_\lambda\rVert_2=1$, where $V^{1/2}\psi(r,z)=V^{1/2}(r)\psi(r,z)$.
(Since $V\in L^t(\BR^2)$ for some $t>1$ and $\Phi_\lambda\in H^1(\BR^3)$, it follows by the H\"older and Sobolev inequalities that $V^{1/2} \Phi_\lambda$ is an $L^2$ function \cite{lieb_analysis_2001}.)

\begin{figure}
\centering
\begin{tikzpicture}[scale=0.3]
\tikzmath{\l=10;\x=7;\y=3;}

\draw (0,\l) node[left]{$y_1$};
\draw (\l,0) node[right]{$x_1$};
\draw (\l,\l) node[below right]{$z_1$};
\draw (\l,-\l) node[above right]{$r_1$};

\coordinate (P1) at (\x,\y);
\coordinate (P2) at (-\x,\y);
\coordinate (P3) at (\x,-\y);
\coordinate (P4) at (-\x,-\y);

\draw (P1) node[right]{$\psi(r,z)$};
\draw (P2) node[left]{$\mp \psi(r,z)$};
\draw (P3) node[right]{$\mp \psi(r,z)$};
\draw (P4) node[left]{$\psi(r,z)$};

\foreach \point in {P1,P2,P3,P4}
  \fill (\point) circle (5pt);

 \draw[black, thick, decoration={markings, mark=at position 1 with {\arrow[scale=2,>=stealth]{>}}},
        postaction={decorate}]
 (-\l,0)--(\l,0);
 \draw[black, thick, decoration={markings, mark=at position 1 with {\arrow[scale=2,>=stealth]{>}}},
        postaction={decorate}]
 (0,-\l)--(0,\l);

 \draw[black, thick, decoration={markings, mark=at position 1 with {\arrow[scale=2,>=stealth]{>}}},
        postaction={decorate}]
 (-\l,0)--(\l,0);
 \draw[black, dashed, decoration={markings, mark=at position 1 with {\arrow[scale=2,>=stealth]{>}}},
        postaction={decorate}]
 (-\l,-\l)--(\l,\l);
 \draw[black, dashed, decoration={markings, mark=at position 1 with {\arrow[scale=2,>=stealth]{>}}},
        postaction={decorate}]
 (-\l,\l)--(\l,-\l);

\draw[black, dotted]  (P1)--(P2)--(P4)--(P3)--(P1);

\end{tikzpicture}

\caption{Sketch of the (anti)symmetric extension of a function $\psi$ defined on the upper right quadrant in the $(r_1,z_1)$-coordinates. The extension is defined by mirroring along the $x_1$ and $y_1$-axes and multiplying by $- 1$ for Dirichlet boundary conditions.}
\label{fig:extension}

\end{figure}
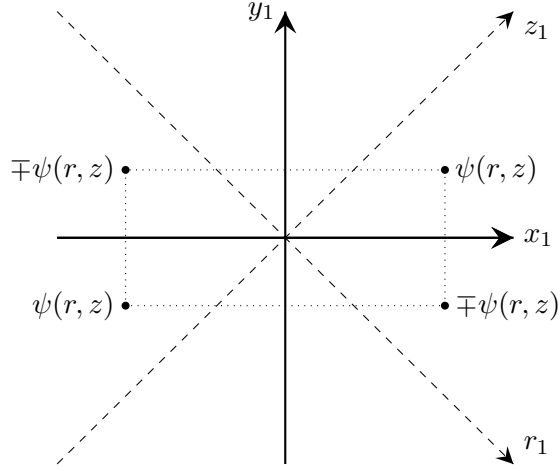

Our choice of trial state is
\begin{multline}\label{trial_state}
\psi_\lambda^\eps(r_1,r_2,z_1,z_2)=(\Phi_\lambda(r_1,r_2,z_1) e^{i\eta(\lambda) z_2}+\Phi_\lambda(r_1,-r_2,z_1) e^{-i\eta(\lambda) z_2}) e^{-\eps \vert z_2 \vert}\\
\mp(\Phi_\lambda(r_1,z_2,z_1) e^{i\eta(\lambda) r_2}+\Phi_\lambda(r_1,-z_2,z_1) e^{-i\eta(\lambda) r_2}) e^{-\eps \vert r_2 \vert}
\end{multline}
for some $\eps>0$.
Here and throughout the paper we use the convention that upper signs correspond to Dirichlet and lower signs to Neumann boundary conditions, unless stated otherwise.
The function \eqref{trial_state} is the natural generalization of the trial state for a half-space used in \cite{roos_bcs_2023}.
Note that $\psi_\lambda^\eps$ is the (anti)symmetrization of $\Phi_\lambda(r,z_1)e^{i \eta(\lambda)z_2-\eps |z_2|}$ and satisfies the boundary conditions.
The trial state vanishes if $\eta=0$ and $\Phi_\lambda$ is odd under $r_2\to-r_2$; our proof will implicitly show that at weak coupling $\Phi_\lambda$ must be even if $\eta=0$.
We shall prove the following two Lemmas in Sections~\ref{sec:epsto0} and~\ref{sec:lambdato0}, respectively.

\begin{lemma}\label{lea:eps_to_0}
Let $\mu>0$,  let $V$ satisfy Assumption~\ref{asptn1} and let $0<\lambda\leq \lambda_0$.
Then
\begin{equation}\label{eq:eps_to_0}
\lim_{\eps \to 0} \langle \psi_\lambda^\eps, UH^{\Omega_2}_{T_c^1(\lambda)}U^\dagger  \psi_\lambda^\eps \rangle =\lambda( L_1+L_2)
\end{equation}
with
\begin{multline}\label{eq:eps_to_0.2}
L_1=  \int_{\tilde \Omega_1\times \BR}  \chi_{\vert z_2\vert <\vert r_2\vert}V(r)\Bigg(\vert \Phi_\lambda(r_1,r_2,z_1)\vert^2+\vert \Phi_\lambda(r_1,z_2,z_1)\vert^2 \\
+\overline{\Phi_\lambda(r_1,r_2,z_1)}\Phi_\lambda(r_1,-r_2,z_1)e^{-2i\eta(\lambda) z_2}
+\overline{\Phi_\lambda(r_1,z_2,z_1)}\Phi_\lambda(r_1,-z_2,z_1)e^{-2i\eta(\lambda) r_2}\\
\mp \overline{\Phi_\lambda(r_1,r_2,z_1)}\Phi_\lambda(r_1,z_2,z_1)e^{i\eta(\lambda)(r_2- z_2)}
\mp \overline{\Phi_\lambda(r_1,z_2,z_1)}\Phi_\lambda(r_1,r_2,z_1)e^{-i\eta(\lambda) (r_2-z_2)}\\
\mp \overline{\Phi_\lambda(r_1,r_2,z_1)}\Phi_\lambda(r_1,-z_2,z_1)e^{-i\eta(\lambda)(r_2+ z_2)}
\mp \overline{\Phi_\lambda(r_1,z_2,z_1)}\Phi_\lambda(r_1,-r_2,z_1)e^{i\eta(\lambda) (-r_2+z_2)}
\Bigg)\dd r \dd z
\end{multline}
and
\begin{multline}\label{eq:eps_to_0.3}
L_2= - \int_{\tilde \Omega_1\times \BR}  V(r)\Bigg(\vert \Phi_\lambda(r_1,z_2,z_1)\vert^2
+\overline{\Phi_\lambda(r_1,z_2,z_1)}\Phi_\lambda(r_1,-z_2,z_1)e^{-2i\eta(\lambda) r_2}\Bigg) \dd r \dd z\\
\mp 2\pi \int_{\BR^2}\Bigg(\overline{\widehat{\Phi_\lambda}(p_1, \eta (\lambda),q_1)}
\rlap{$\phantom{V} \chi_{\tilde\Omega_1}$}\widehat {V\phantom{\chi_{\Omega}}\Phi_\lambda}
(p_1,\eta(\lambda),q_1)
+\overline{\widehat{\Phi_\lambda}(p_1, -\eta(\lambda) ,q_1)}
\rlap{$\phantom{V} \chi_{\tilde\Omega_1}$}\widehat {V\phantom{\chi_{\Omega}}\Phi_\lambda}
(p_1,-\eta(\lambda),q_1)  \Bigg)\dd p_1 \dd q_1,
\end{multline}
where $\widehat \psi(p,q_1)=\int_{\BR^3} \frac{e^{- i p \cdot r -i q_1 z_1}}{(2\pi)^{3/2}}\psi(r,z_1) \dd r \dd z_1 $ denotes the Fourier transform and $\chi_{\tilde\Omega_1}$ denotes multiplication by the characteristic function of $\tilde \Omega_1$.
\end{lemma}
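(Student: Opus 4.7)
The plan is to evaluate $\langle \psi_\lambda^\eps, UH_{T_c^1(\lambda)}^{\Omega_2}U^\dagger \psi_\lambda^\eps\rangle$ by explicit Fourier computation and to pass to the limit $\eps\to 0$ using the zero-mode identity $H_{T_c^1(\lambda)}^1(\eta(\lambda))\tilde\Phi_\lambda = 0$, which holds because $\tilde\Phi_\lambda$ is the ground state of $H_{T_c^1(\lambda)}^1(\eta(\lambda))$ at the critical temperature. By construction, $\psi_\lambda^\eps$ carries the reflection symmetries needed for the Dirichlet/Neumann boundary conditions on $\tilde\Omega_2$, so the kinetic part of the quadratic form coincides up to a combinatorial constant (from unfolding $\BR^4$ into sixteen copies of $\tilde\Omega_2$) with the full-space Fourier integral $\int_{\BR^4} K_T(p,q)|\widehat\psi_\lambda^\eps(p,q)|^2\,dp\,dq$, while the potential part is $-\lambda\int_{\tilde\Omega_2} V(r)|\psi_\lambda^\eps|^2\,dr\,dz$.

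I would decompose $\psi_\lambda^\eps$ into its four building blocks, each of the form $\Phi_\lambda(r_1,\pm r_2,z_1)e^{\pm i\eta(\lambda)z_2-\eps|z_2|}$ or $\mp\Phi_\lambda(r_1,\pm z_2,z_1)e^{\pm i\eta(\lambda)r_2-\eps|r_2|}$, and expand the quadratic form into the resulting pairings. In Fourier, the cutoff $e^{\pm i\eta(\lambda)z_2-\eps|z_2|}$ has transform equal to a Lorentzian in $q_2$ concentrating at $q_2=\pm\eta(\lambda)$ as $\eps\to 0$, and analogously for $r_2$ and $p_2$. For the diagonal pairings between two pieces with the same kind of cutoff, both Lorentzians sit on the same momentum variable: combining kinetic and potential first on the larger domain $\tilde\Omega_1\times\BR$, the integrand becomes $\langle\tilde\Phi_\lambda, H_{T_c^1(\lambda)}^1(q_2)\tilde\Phi_\lambda\rangle$ against a Lorentzian in $q_2-\eta(\lambda)$, which vanishes at $q_2=\eta(\lambda)$ by the zero-mode identity (and at $q_2=-\eta(\lambda)$ via the reflected zero mode $\Phi_\lambda(r_1,-r_2,z_1)e^{-i\eta(\lambda)z_2}$). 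Lorentzian concentration as $\eps\to 0$ then forces the $\tilde\Omega_1\times\BR$ quadratic form to zero, so that the individual $1/\eps$ divergences of the kinetic and potential parts cancel exactly. The correction from restricting the integration back down to $\tilde\Omega_2\subset\tilde\Omega_1\times\BR$ is a purely potential-type contribution on $\{|r_1|<z_1,\ z_2\leq|r_2|\}$, which by evenness of the squared trial state in $z_2$ rearranges into the $\chi_{|z_2|<|r_2|}$-truncated potential integrals of $L_1$ coming from the $A$-$A$ pairings and, via the symmetric analysis for the $B$-$B$ pairings (where the natural enlarged domain is different), into the untruncated first line of $L_2$.

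For the cross pairings between one piece with $e^{-\eps|z_2|}$ and one with $e^{-\eps|r_2|}$, two Lorentzians sit on two different momentum variables $q_2$ and $p_2$ and concentrate simultaneously, pinning $(p_2,q_2)\in\{\pm\eta(\lambda)\}^2$ as $\eps\to 0$. The kinetic contribution then reduces to the Fourier-slice integrals at $p_2=\pm\eta(\lambda)$ that form the second line of $L_2$, while the potential contribution converges directly by dominated convergence to the oscillatory terms of $L_1$ involving the factors $e^{\pm i\eta(\lambda)(r_2\pm z_2)}$. Collecting all contributions yields $\lambda(L_1+L_2)$.

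The main technical obstacle is that the diagonal kinetic and potential pieces each diverge like $1/\eps$ as $\eps\to 0$, so one cannot commute the limit with either the Fourier integration or the spatial integration inside a single term. The cancellation must be performed before taking the limit, forcing the joint treatment of kinetic and potential via Plancherel together with the zero-mode identity. Making the argument rigorous requires uniform-in-$\eps$ control of $\langle\tilde\Phi_\lambda, H_{T_c^1(\lambda)}^1(q_2)\tilde\Phi_\lambda\rangle$ against the Lorentzian weight, which should follow from the continuity and growth at infinity of $q_2\mapsto\inf\sigma(H_T^1(q_2))$ from Lemma~\ref{lea:etaex} combined with the regularity of $\tilde\Phi_\lambda$ from Section~\ref{sec:reg_conv}; a Fubini-type argument then justifies the final exchange of limit and integration.
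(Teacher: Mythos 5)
Your plan follows essentially the same route as the paper: unfold the quarter-space form using the symmetries of $\psi_\lambda^\eps$, expand into the pairings of the four summands, cancel the divergent diagonal contributions by combining the kinetic term with a full potential term via the eigenvalue identity $K^1_{T_c^1}(\eta)\chi_{\tilde\Omega_1}\Phi_\lambda=\lambda V\chi_{\tilde\Omega_1}\Phi_\lambda$, and read off $L_1$ and $L_2$ from the remaining potential pieces. Your bookkeeping (assigning the second line of $L_2$ to the cross-pairing kinetic term, and the first line of $L_2$ to the domain correction of the $B$\,-$B$ pairings) differs superficially from the paper's splitting of the potential into $-\lambda V(r)+\lambda\big(V(r)\chi_{|z_2|<|r_2|}+V(r_1,z_2)\chi_{|r_2|<|z_2|}\big)-\lambda V(r_1,z_2)$, but it can be rearranged into it using the swap symmetry $(r_2,z_2)\to(z_2,r_2)$, so this is not a genuinely different method.

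There is, however, a gap at the central cancellation step. For a diagonal pairing the weight in $q_2$ produced by the two factors $e^{-\eps|z_2|},e^{-\eps|z_2'|}$ (and, on the potential side, by $\int_\BR e^{-2\eps|z_2|}\dd z_2=\eps^{-1}$) is the squared Lorentzian $\eps^2/(\eps^2+(q_2-\eta)^2)^2$, whose total mass is $\pi/(2\eps)$, not $O(1)$. Therefore the statement that the form $\langle\tilde\Phi_\lambda,H^1_{T_c^1}(q_2)\tilde\Phi_\lambda\rangle$ vanishes at the concentration point $q_2=\eta$ does not by itself force the limit to be zero: a modulus of continuity of order $|q_2-\eta|^{1/2}$, say, would still give a divergence of order $\eps^{-1/2}$. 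What is needed is a quantitative Lipschitz/differentiability bound on $q_2\mapsto B_{T}^{-1}(p,(q_1,q_2))$ against the weight $(1+p^2+q_1^2)|\widehat\Phi_\lambda(p,q_1)|^2$ (the paper's Lemma~\ref{bdiff}), after which the rescaling $q_2\to\eps q_2$ reduces the integrand to $C(1+p^2+q_1^2)|q_2|(1+q_2^2)^{-2}|\widehat\Phi_\lambda(p,q_1)|^2$ and the leading term is killed by the oddness $\int_\BR q_2(1+q_2^2)^{-2}\dd q_2=0$. The substitute you invoke, continuity and growth of $q_2\mapsto\inf\sigma(H^1_T(q_2))$ from Lemma~\ref{lea:etaex}, cannot do this job: it bounds the form only from below and gives no information on its value at the fixed vector $\tilde\Phi_\lambda$ for $q_2\neq\eta$, let alone with the required $O(|q_2-\eta|)$ precision. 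Analogous quantitative input (the boundedness and continuity of the auxiliary functions $g_0,g_\pm$ obtained through the eigenvalue equation, Lemma~\ref{lea:cont_bd}, which rests on $\lVert\Phi_\lambda\rVert_{L_1^\infty L_2^2}<\infty$ and $|\cdot|V\in L^1$) is also needed to justify your claimed limits for the cross pairings and for the pairing of the two $A$-pieces with Lorentzians centred at $+\eta$ and $-\eta$, which your sketch passes over with a parenthetical remark; in particular the latter requires a separate overlap estimate when $\eta\neq0$ and a parity argument when $\eta=0$.
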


\begin{lemma}\label{lea:lambdato0}
Let $\mu>0$ and let $V$ satisfy  Assumption~\ref{asptn1}.
As $\lambda\to0$ we have $L_1=O(1)$ and $L_2\leq - \frac{C}{\lambda}$ for some constant $C>0$.
\end{lemma}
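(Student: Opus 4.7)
The proof of both estimates rests on the asymptotic analysis of $\tilde\Phi_\lambda$ developed in Section~\ref{sec:reg_conv}: uniform regularity bounds, concentration on the Fermi sphere as $\lambda\to 0$, and identification of the limiting profile via the ground state of $O_\mu$, which by Assumption~\eqref{aspt.5} is the (radial, non-sign-changing) function $V^{1/2}(r) J_0(\sqrt{\mu}|r|)$. The two structural ingredients are the eigenvalue equation $H_{T_c^1(\lambda)}^1(\eta(\lambda))\tilde\Phi_\lambda = 0$ and the normalization $\lVert V^{1/2}\chi_{\tilde\Omega_1}\Phi_\lambda\rVert_2 = 1$.

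For $L_1 = O(1)$, the plan is to exploit the characteristic function $\chi_{|z_2|<|r_2|}$, which confines the $z_2$-integration to $[-|r_2|, |r_2|]$. Applying Cauchy--Schwarz to the cross terms reduces each of the eight contributions in \eqref{eq:eps_to_0.2} to an expression of the form
\[
\int V(r)\,|r_2| \int_{|r_1|<z_1} |\Phi_\lambda(r_1, r_2, z_1)|^2\,\dd z_1\, \dd r,
\]
which is bounded by combining Assumption~\eqref{aspt.3} ($|\cdot|V \in L^1(\BR^2)$) with the uniform regularity of $\Phi_\lambda$ from Section~\ref{sec:reg_conv}. This yields $L_1 = O(1)$.

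For $L_2 \leq -C/\lambda$, I split $L_2 = L_2^{(1)} + L_2^{(2)}$ into the local integral and the Fourier term. Applying Plancherel in $(r_1, z_1) \leftrightarrow (p_1, q_1)$, I rewrite $L_2^{(2)}$ as an integral over $\tilde\Omega_1$ against the partial Fourier transform $(\mathcal{F}_{r_2}\Phi_\lambda)(\cdot,\eta(\lambda),\cdot)$. The $1/\lambda$-divergence is extracted from the eigenvalue equation: rewriting it as $\lambda V\tilde\Phi_\lambda = K\tilde\Phi_\lambda$, where $K$ denotes the first (non-interacting) term of $H_{T_c^1(\lambda)}^1(\eta(\lambda))$, one can trade a factor of $V$ for $K/\lambda$. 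Since $K$ is of order $T_c^1(\lambda)$ on the Fermi sphere while $\tilde\Phi_\lambda$ concentrates there with width comparable to $T_c^1(\lambda)$, the resulting integrals diverge at the rate dictated by the BCS-type relation $\lambda \log(\mu/T_c^1(\lambda)) = O(1)$, producing the required $1/\lambda$.

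The main difficulty is pinning down the correct sign. The convergence result of Section~\ref{sec:reg_conv} identifies the leading behavior of $\tilde\Phi_\lambda$ with the non-sign-changing function $V^{1/2}J_0(\sqrt\mu|r|)$, ensuring the leading contribution is strictly negative. The argument must treat Dirichlet and Neumann boundary conditions separately, since the $\mp$ signs in \eqref{eq:eps_to_0.3} flip: for Dirichlet, $L_2^{(2)}$ directly supplies the $-C/\lambda$; for Neumann, the negative $-\int V(r)|\Phi_\lambda(r_1, z_2, z_1)|^2\,\dd r\,\dd z$ contribution in $L_2^{(1)}$ (manifestly $\leq 0$ since $V\geq 0$) takes over. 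A subsidiary technical point is to verify that $\eta(\lambda)$ stays bounded as $\lambda\to 0$, which follows from Lemma~\ref{lea:etaex} together with the control on $\inf\sigma(H_{T_c^1(\lambda)}^1(\eta(\lambda)))$.
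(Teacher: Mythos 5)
Your treatment of $L_2$ misidentifies where the $-C/\lambda$ comes from, and this is a genuine gap rather than a presentational difference. In the paper's proof the Fourier term (the second line of \eqref{eq:eps_to_0.3}) is shown to be $O(1)$ for \emph{both} boundary conditions: it equals $\mp 2\pi(g_0(\eta,\eta)+g_0(-\eta,-\eta))$, and using the eigenvalue equation plus the bound $\sup_{q_1}\int_\BR B_{T_c^1}((p_1,\pm\eta),(q_1,\eta))\,\dd p_1 = O(\ln \mu/T_c^1)$ together with $\lambda\ln(\mu/T_c^1)=O(1)$, its absolute value is bounded uniformly in $\lambda$ — so it cannot ``directly supply the $-C/\lambda$'' in the Dirichlet case as you claim. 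The divergence in fact comes from the first (local) line, for both signs, and specifically from the component $\Phi_\lambda^{d,<}$ whose weighted norm $\lVert V^{1/2}\Phi_\lambda^{d,<}(r_1,z_2,z_1)\rVert_{L^2}$ grows like $\lambda^{-1/2}$. Your Neumann argument — that the manifestly nonpositive term $-\int V(r)|\Phi_\lambda(r_1,z_2,z_1)|^2$ ``takes over'' — does not prove the lemma: you need a quantitative \emph{lower} bound of order $\lambda^{-1}$ on this quantity, and you must also control the off-diagonal term $\overline{\Phi_\lambda(r_1,z_2,z_1)}\Phi_\lambda(r_1,-z_2,z_1)e^{-2i\eta r_2}$, which by Cauchy--Schwarz has a priori the same magnitude and could cancel the diagonal one. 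The paper handles exactly this by writing the sum as a quadratic form $\langle F_2\Psi_\lambda, M_\lambda F_2\Psi_\lambda\rangle$ whose kernel involves $\widehat V(p_1-p_1',0)+\widehat V(p_1+p_1',2\eta)$, using $V\ge 0$ (hence $\widehat V(0)>0$) and $\widehat{Vj_2}=e_\mu>0$ on the Fermi sphere, the concentration estimates $\eta(\lambda)=O(T_c^1)$, $\lVert\BP^\perp F_2\Psi_\lambda\rVert^2=O(\lambda)$, $\lVert\BQ_\beta^\perp F_2\Psi_\lambda\rVert^2=O(\lambda)$ (Lemma~\ref{etato0}), a localization to the Fermi shell via the second part of Lemma~\ref{lea:BB_asy}, and finally the lower bound $\int_\BR(\int_\BR B_T\,\dd p_1)^2\dd p_2\gtrsim(\ln\mu/T)^3$ (Lemma~\ref{bb_asy_lowerbd}), which converts $\lambda^2(\ln\mu/T_c^1)^3$ into $c/\lambda$. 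None of these ingredients, nor any substitute for them, appears in your sketch; your heuristic ``$K$ is of order $T_c^1$ on the Fermi sphere while $\tilde\Phi_\lambda$ concentrates there'' does not fix the sign or the constant. Note also that boundedness of $\eta(\lambda)$ (from Lemma~\ref{lea:etaex}) is far too weak; the argument needs $\eta(\lambda)=O(T_c^1(\lambda))\to 0$.

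A secondary gap concerns $L_1$: your reduction to $\int V(r)|r_2|\int|\Phi_\lambda|^2\dd z_1\,\dd r$ implicitly uses a bound on $\lVert\Phi_\lambda\rVert_{L_1^\infty L_2^2}$ that is \emph{uniform in $\lambda$}, but Section~\ref{sec:reg_conv} only provides this for $\Phi_\lambda^d$ and $\Phi_\lambda^{ex,>}$; for fixed $\lambda$ the full norm is finite with a $\lambda$-dependent constant. The paper therefore splits $\Phi_\lambda=\Phi_\lambda^d\mp\Phi_\lambda^{ex,<}\mp\Phi_\lambda^{ex,>}$ and treats the $\Phi_\lambda^{ex,<}$ contribution separately, via the uniform bound $\sup_T\sup_{q_2}\lVert B_T^{ex}(q_2)\rVert<\infty$, obtaining $O(\lambda^2)$ for that piece. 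Your sketch would need this (or an equivalent) additional step to close the $L_1$ estimate.
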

In particular, there is a $\lambda_2>0$ such that for all $0<\lambda\leq \lambda_2$, $\lim_{\eps \to 0} \langle \psi_\lambda^\eps, UH^{\Omega_2}_{T_c^1(\lambda)}U^\dagger  \psi_\lambda^\eps \rangle <0$ and hence also $\inf \sigma(H^{\Omega_2}_{T_c^1(\lambda)})<0$.
The final ingredient is the continuity of $\inf \sigma(H_T^{\Omega_2}) $ in $T$, which can be proved analogously to \cite[Lemma 4.1]{roos_bcs_2023}.
For $\lambda\leq \lambda_2$ we have for $T< T_c^1(\lambda)$ by Lemma~\ref{t2geqt1} and the definition of $T_c^1$ that $\inf \sigma(H^{\Omega_2}_{T})\leq \inf \sigma(H^{\Omega_1}_{T})< 0$.
We saw that $\inf \sigma(H^{\Omega_2}_{T_c^1(\lambda)})<0$ and thus by continuity there is an $\eps>0$ such that for all $T\in(0,T_c^1(\lambda)+\eps]$ we have $\inf \sigma(H^{\Omega_2}_{T})<0$.
In particular, $T_c^2(\lambda)>T_c^1(\lambda)$.
This concludes the proof of Theorem~\ref{thm1}.

\begin{rem}
Compared to the proof of $T_c^1(\lambda)>T_c^0(\lambda)$ in \cite{roos_bcs_2023} there are two main differences and additional difficulties here.
The first difference is that $\Phi_\lambda$ here depends on $r$ and $z_1$, and not just $r$.
In particular, we need to understand the dependence and regularity of $\Phi_\lambda$ in $z_1$.
The second difference is that for the full space minimizer it was possible to prove that the optimal momentum in the translation invariant center of mass direction is zero, whereas here we have to work with the momentum $\eta(\lambda)$, which potentially is non-zero, and we need knowledge about its asymptotics for $\lambda\to0$.
As a consequence, we may have that $\Phi_\lambda(r_1,r_2,z_1) e^{i\eta(\lambda) z_2}\neq \Phi_\lambda(r_1,-r_2,z_1) e^{-i\eta(\lambda) z_2}$, which is why the expressions in Lemma~\ref{lea:eps_to_0} are twice as long as in the analogous ones in \cite[Lemma 4.3]{roos_bcs_2023}.
\end{rem}

\begin{rem}
The Assumptions~\ref{asptn1} are almost identical to the assumptions for proving $T_c^1(\lambda)>T_c^0(\lambda)$ in dimension two in \cite{roos_bcs_2023}.
Our method to compute the asymptotics of $\Phi_\lambda$ additionally requires the assumption $V\geq 0$, however.
In particular, in the proof of Lemma~\ref{etato0} we require the Birman-Schwinger operators corresponding to $H_T^{\Omega_k}$  to be self-adjoint for technical reasons. 
No such assumption is needed to determine the asymptotics of the ground state in the translation invariant case, hence we expect this assumption not to be necessary here either.
\end{rem}

\begin{rem}
We expect that our method of proof can also be applied in the three-dimensional case.
For a quarter space in $d=3$, we conjecture that similarly to the case of a half-space \cite{roos_bcs_2023}, the three-dimensional analogues of $L_1$ and $L_2$ in Lemma~\ref{lea:eps_to_0} are of equal order and converge to some finite numbers as $\lambda\to0$.
The limits of $L_1$ and $L_2$ then need to be computed to determine whether $\lim_{\lambda\to0}(L_1+L_2) <0$.
This makes the computation in three dimensions much more tedious than in two dimensions, which is why  we do not work out the details of the three-dimensional case here. Instead, we describe the intuition and the expected outcome.
In \cite{roos_bcs_2023}, the ground state on the full space  
could effectively be replaced by $\Phi_0=(\int_{\BR^3} V(r) j_3(r)^2\dd r)^{-1} j_3$, with $j_3(r)=(2\pi)^{-3/2}\int_{\BS^2} e^{i \sqrt{\mu} w \cdot r } \dd \omega$, in the limit $\lambda\to0$.
Motivated by the asymptotics of the half-space minimizer $\Phi_\lambda$ in two dimensions proved in Lemma~\ref{etato0}, we expect that as $\lambda \to 0$, $\eta(\lambda)\to 0$ and the function $\Phi_\lambda$ behaves like $\Phi_0$ in the $r$-variable, and concentrates at zero momentum in the $z_1$ direction.
A combination of the methods used in \cite{roos_bcs_2023} and the methods developed in this paper should then allow to compute the limit, and the expected  result is
\begin{equation}
\lim_{\lambda\to0} L_1= 2 \int_{\BR^4} \chi_{|z_2|<|r_2|} V(r) \vert \Phi_0(r)\mp \Phi_0(r_1,z_2,r_3)|^2 \dd r \dd z_2
\end{equation}
and
\begin{equation}
\lim_{\lambda\to0} L_2= -2 \int_{\BR^4}V(r)| \Phi_0(r_1,z_2,r_3)|^2 \dd r \dd z_2 \mp \frac{2\pi}{\mu^{1/2}}\int_{\BR^3}V(r)|\Phi_0(r)|^2 \dd r.
\end{equation}
We therefore expect $T_c^2(\lambda)>T_c^1(\lambda)$ at weak enough coupling if $V$ satisfies $\lim_{\lambda\to0} (L_1+L_2)<0$, which due to radiality of $V$ and $\Phi_0$ is the same condition as for $T_c^1(\lambda)>T_c^0(\lambda)$ in \cite[Theorem 1.3]{roos_bcs_2023}.
In \cite[Theorem 1.4 and Remark 1.5]{roos_bcs_2023} this condition on $V$ is further analyzed.
\end{rem}

\section{Basic properties of $H_T^{\Omega_1}$ and $H_T^{\Omega_2}$}\label{sec:basic_properties}
In this section we shall introduce some notation that will be useful later on, and prove Lemmas~\ref{t2geqt1}, \ref{lea:etaex} and \ref{lea:H1ev}.
The following functions will be important to describe the kinetic part of $H_T^\Omega$:
\begin{equation}\label{BT}
K_T(p,q)=\frac{p^2+q^2-2\mu}{\tanh\left(\frac{p^2-\mu}{2T}\right)+\tanh\left(\frac{q^2-\mu}{2T}\right)},\quad \text{and}\quad
B_T(p,q)=\frac{1}{K_T(p+q,p-q)}.
\end{equation}
We may write $B_{T,\mu}$ when the dependence on $\mu$ matters.
The function $K_T$ satisfies the following bounds \cite[Lemma 2.1]{hainzl_boundary_2023}.
\begin{lemma}\label{KT-Laplace}
For every $T>0$ there are constants $C_1(T,\mu),C_2(T,\mu)>0$ such that $C_1(1+p^2+q^2)\leq K_{T}(p,q)\leq C_2(1+p^2+q^2)$.
\end{lemma}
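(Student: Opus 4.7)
The plan is to reduce the two-variable estimate to a one-variable one via a symmetric change of coordinates. With $a=(p^2-\mu)/(2T)$ and $b=(q^2-\mu)/(2T)$ we have
\[
K_T(p,q)=\frac{2T(a+b)}{\tanh a+\tanh b}.
\]
I would first apply the identities $\tanh a+\tanh b=\sinh(a+b)/(\cosh a\cosh b)$ and $\cosh a\cosh b=\tfrac12(\cosh(a+b)+\cosh(a-b))$ to rewrite
\[
K_T(p,q)=T\bigl(\cosh s+\cosh d\bigr)\frac{s}{\sinh s},
\]
where $s=(p^2+q^2-2\mu)/(2T)$ and $d=(p^2-q^2)/(2T)$, and $s/\sinh s$ is understood to equal $1$ at $s=0$. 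This reformulation is crucial: the factor $s/\sinh s$ is continuous, strictly positive, bounded by $1$, and decays like $|s|e^{-|s|}$ at infinity, so its exponential decay will cancel the exponential growth of $\cosh s$ and leave linear growth in $s$, matching $1+p^2+q^2=1+2\mu+2Ts$.

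Next I would record the a priori constraint implied by $p^2,q^2\geq 0$: since $a,b\geq -\mu/(2T)$, one has
\[
s\geq -\mu/T,\qquad |d|\leq s+\mu/T.
\]
This is the essential point, because if $d$ were unrestricted then $\cosh d$ could be arbitrarily large at fixed $s$ and no polynomial upper bound could hold.

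Finally I would split in two regions. On the bounded piece $-\mu/T\leq s\leq 1$, the constraint forces $|d|\leq 1+\mu/T$, so $(s,d)$ ranges over a compact set on which $K_T$ and $1+p^2+q^2$ are continuous and strictly positive; compactness yields both bounds. On the unbounded piece $s\geq 1$, the constraint gives $\cosh s+\cosh d\leq 2\cosh(s+\mu/T)\leq 2e^{\mu/T}\cosh s$, and combined with $s/\sinh s\in [s e^{-s},\,4s e^{-s}]$ one obtains $K_T(p,q)$ of exact order $Ts$; since $1+p^2+q^2=1+2\mu+2Ts$ is also of order $s$ in this regime, the constants $C_1,C_2$ drop out.

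The only real obstacle is conceptual rather than computational: one must notice that the $(s,d)$ rewriting is the right one, and then exploit the admissibility constraint $|d|\leq s+\mu/T$ in order to control $\cosh d$ by $\cosh s$ at large $s$. Once those observations are made, the remainder is a routine calculus estimate.
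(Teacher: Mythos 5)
Your proof is correct. Note that the paper itself does not prove this lemma: it quotes it from \cite{hainzl_boundary_2023} (Lemma 2.1 there), so there is no in-paper argument to match; the proof in that reference is based on one-variable considerations, essentially combining pointwise bounds on $x\mapsto x/\tanh(x/(2T))$ (namely $\max\{|x|,2T\}\leq x/\tanh(x/(2T))\leq |x|+2T$) with an inequality relating $K_T(p,q)$ to the average of the two one-variable symbols. Your route is genuinely different and self-contained: the identity $\tanh a+\tanh b=\sinh(a+b)/(\cosh a\cosh b)$ together with the product-to-sum formula gives $K_T=T(\cosh s+\cosh d)\,s/\sinh s$ with $s=(p^2+q^2-2\mu)/(2T)$, $d=(p^2-q^2)/(2T)$, and you correctly isolate the key structural point, the admissibility constraint $|d|\leq s+\mu/T$ coming from $p^2,q^2\geq 0$, without which no polynomial upper bound could hold. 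The two regimes are handled soundly: on $-\mu/T\leq s\leq 1$ the constraint confines $(s,d)$ to a compact set on which both $K_T$ (continuous and strictly positive, since $s/\sinh s>0$ with value $1$ at $s=0$) and $1+p^2+q^2=1+2\mu+2Ts$ are bounded above and below by positive constants; for $s\geq 1$ your estimates $\cosh s\leq\cosh s+\cosh d\leq 2e^{\mu/T}\cosh s$ and $se^{-s}\leq s/\sinh s\leq 4se^{-s}$ show $K_T\asymp Ts\asymp 1+p^2+q^2$ with $T,\mu$-dependent constants. A small bonus of your rewriting is that it handles the removable singularity at $p^2+q^2=2\mu$ transparently (there $K_T=T(1+\cosh d)$), which the original quotient formula obscures; the cited argument, on the other hand, generalizes more directly to the finer bounds on $B_T=1/K_T$ actually used elsewhere in the paper. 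Either way, your argument establishes the lemma.
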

We will frequently use the following estimates for $B_T$ \cite[Eq.~(2.3)]{roos_bcs_2023}:
\begin{equation}\label{BT_bound}
B_T(p,q)\leq \frac{1}{\max\{|p^2+q^2-\mu|,2T\}} \quad {\rm and} \quad B_T(p,q)\chi_{p^2+q^2>2\mu>0} \leq \frac{C(\mu)}{1+p^2+q^2},
\end{equation}
where $C(\mu)$ depends only on $\mu$.

We use the notation $H^1_0(\Omega)$ for the Sobolev space of functions vanishing at the boundary of $\Omega$.
In the case of Dirichlet boundary conditions, the form domain corresponding to $H_T^{\Omega_k}$ is $D_k^D:=\{\psi \in H^1_0(\Omega_k\times \Omega_k) \vert \psi(x,y)=\psi(y,x)\}$.
For Neumann boundary conditions, one needs to replace the Sobolev space $H^1_0$ by $H^1$ to obtain $D_k^N$.
Let $K_{T}^\Omega$ be the kinetic term in $H_T^\Omega$.
The corresponding quadratic form acts as
\begin{equation}
\langle \psi, K_{T}^\Omega \psi \rangle = \int_{\BR^{4}}K_{T}(p,q) \left|\int_{\Omega^2}T_\Omega(x,p) T_\Omega(y,q) \psi(x,y) \dd x \dd y\right|^2\dd p\dd q,
\end{equation}
with
\begin{equation}
T_{\Omega_1}(x,p)=\frac{(e^{-i p_1 x_1}\mp e^{i p_1 x_1})  e^{-i p_2 x_2}}{2^{1/2}2\pi}, \quad {\rm and} \quad
T_{\Omega_2}(x,p)=\frac{(e^{-i p_1 x_1}\mp e^{i p_1 x_1}) ( e^{-i p_2 x_2}\mp e^{i p_2 x_2})}{4\pi}.
\end{equation}
As already mentioned in the Introduction, we shall use the convention that upper signs correspond to Dirichlet and lower signs to Neumann boundary conditions, unless stated otherwise.
We now switch to relative and center of mass coordinates $r=x-y$, $z=x+y$, $p'=(p-q)/2$ and $q'=(p+q)/2$.
Note that
\begin{equation}
T_{\Omega_1}(x,p)T_{\Omega_1}(y,p) = \frac{1}{(2\pi)^2}t(p'_1,q'_1,r_1,z_1)e^{-i ( p_2'  r_2+q_2' z_2)},
\end{equation}
where
\begin{equation}\label{t_halfspace}
t(p_1,q_1,r_1,z_1)=\frac{1}{2}\left(e^{-i(p_1 r_1+q_1 z_1)}+e^{i(p_1 r_1+q_1 z_1)}\mp e^{-i(p_1 z_1+q_1 r_1)} \mp e^{i(p_1 z_1+q_1 r_1)}\right).
\end{equation}
Therefore, conjugating the kinetic term $K_T^{\Omega_1}$ with $U$, which is the operator switching to relative and center of mass coordinates, gives
\begin{equation}\label{UK1U}
\langle \psi, U K_{T}^{\Omega_1}U^\dagger \psi \rangle = \int_{\BR^{4}}B_{T}(p',q')^{-1} \left|\int_{\tilde\Omega_1\times \BR}\frac{1}{(2\pi)^2}t(p'_1,q'_1,r_1,z_1)e^{-i ( p_2'  r_2+q_2' z_2)} \psi(r,z) \dd r \dd z\right|^2\dd p'\dd q'.
\end{equation}
The operators $H_T^1(q_2)$ defined by restricting $U H_T^{\Omega_1} U^\dagger$ to momentum $q_2$ in $z_2$-direction can thus be expressed as
\begin{equation}\label{HT1q}
\langle \psi,H_T^1(q_2) \psi \rangle =\langle \psi, K_T^1(q_2) \psi \rangle-\lambda \int_{\tilde \Omega_1}V(r)\vert \psi(r,z_1)\vert^2 \dd r\dd z_1
\end{equation}
where $\tilde \Omega_1=\{(r,z_1)\in \BR^{3} \vert \vert r_1\vert< z_1\}$ and the kinetic term $K_T^1(q_2)$  on $L_{\rm s}^2(\tilde \Omega_1)$  is given by
\begin{equation}\label{KT1q}
\langle \psi, K_T^1(q_2) \psi \rangle
=\int_{\BR^{3}}B_{T}(p,(q_1,q_2))^{-1} \left|\int_{\tilde\Omega_1}\frac{1}{(2\pi)^{3/2}}t(p_1,q_1,r_1,z_1)e^{-i p_2 r_2} \psi(r,z_1) \dd r \dd z_1\right|^2\dd p\dd q_1.
\end{equation}

It is convenient to introduce the Birman-Schwinger operators $A_T^0$ and $A_T^1$ corresponding to $H_T^{\Omega_0}$ and $H_T^{\Omega_1}$, respectively.
Let $A_T^0$ be the operator with domain $L^2(\BR^2\times \BR^2)$ restricted to functions satisfying $\psi(r,z)=\psi(-r,z)$ and given by
\begin{equation}\label{eq:defAT0}
\langle \psi,A_{T}^0 \psi \rangle=\int_{\BR^{4}}  B_{T}(p,q) |\widehat{ V^{1/2}\psi}(p,q)|^2\dd p \dd q.
\end{equation}
Define the operator $A_T^1$ on $\psi \in L_{\rm s}^2(\tilde \Omega_1 \times \BR)=\{\psi \in L^2(\tilde \Omega_1 \times \BR) \vert \psi(r,z)=\psi(-r,z)\}$ via
\begin{equation}
\langle \psi, A_{T}^1 \psi \rangle
=\int_{\BR^{4}}B_{T}(p,q) \left|\int_{\tilde\Omega_1\times \BR}\frac{1}{(2\pi)^2}t(p_1,q_1,r_1,z_1)e^{-i (p_2 r_2+q_2 z_2)} V^{1/2}(r)\psi(r,z) \dd r \dd z\right|^2\dd p\dd q.
\end{equation}
For $j\in \{0,1\}$, the operator $A_T^j$ is the Birman-Schwinger operator corresponding to $H_T^{\Omega_j}$ in relative and center of mass variables \cite[Section 6]{roos_bcs_2023}.
The Birman-Schwinger principle implies that
\[\sgn \inf \sigma(H_T^{\Omega_j}) = \sgn (1/\lambda-\sup \sigma(A_T^{j})),\]
where we use the convention that $\sgn\, 0  =0 $.

Due to translation invariance in $z_2$, for fixed momentum $q_2$ in this direction, we obtain the operators $A_T^1(q_2)$ on $\psi \in L_{\rm s}^2(\tilde \Omega_1)$ given by
\begin{equation}\label{a1q}
\langle \psi, A_{T}^1(q_2) \psi \rangle
=\int_{\BR^{3}}B_{T}(p,(q_1,q_2)) \left|\int_{\tilde\Omega_1}\frac{1}{(2\pi)^{3/2}}t(p_1,q_1,r_1,z_1) e^{-i p_2 r_2} V^{1/2}(r)\psi(r,z_1) \dd r \dd z_1\right|^2\dd p\dd q_1.
\end{equation}
The operator $A_T^1(q_2)$ is the Birman-Schwinger version of $H_T^1(q_2)$.
In particular, $H_{T_c^1(\lambda)}^1(\eta(\lambda))$ has the eigenvalue zero at the bottom of its spectrum if and only if $1/\lambda$ is the largest eigenvalue of $A_{T_c^1(\lambda)}^1(\eta(\lambda))$.

Let $\iota: L^2(\tilde \Omega_1)\to  L^2(\BR^{3})$ be the isometry
\begin{equation}\label{iota}
\iota \psi(r_1,r_2,z_1)
= \frac{1}{\sqrt{2}} (\psi(r_1,r_2,z_1) \chi_{\tilde \Omega_1}(r,z_1) +\psi(-r_1, r_2,-z_1) \chi_{\tilde \Omega_1}(-r_1,r_2,-z_1)).
\end{equation}
Using the definition of $t$ in \eqref{t_halfspace} and evenness of $V$ in $r_2$ one can rewrite \eqref{a1q} as
\begin{equation}\label{at1r}
\langle \psi, A_{T}^1(q_2) \psi \rangle=\int_{\BR^3} B_T(p,q)\left \vert \frac{1}{\sqrt{2}}( \widehat{V^{1/2}\iota \psi}(p,q_1)\mp  \widehat{ V^{1/2}\iota\psi}((q_1,p_2),p_1)) \right \vert^2 \dd p \dd q_1
\end{equation}
Let $F_2$ denote the Fourier transform in the second variable
$F_2 \psi (r, q_1)=\frac{1}{\sqrt{2\pi}} \int_{\BR} e^{-iq_1 z_1} \psi(r,z_1) \dd z_1$
and $F_1$ the Fourier transform in the first variable
$F_1 \psi (p,q)=\frac{1}{2\pi} \int_{\BR^2} e^{-ip\cdot r} \psi(r,q) \dd r.$
Define the operators $G_T(q_2)$ on $L^2(\BR^{3})$ through
\begin{equation}\label{gt}
\langle \psi, G_{T}(q_2) \psi \rangle=\int_{\BR^{3}} \overline{F_1 V^{1/2}\psi((q_1,p_2),p_1)} B_{T}(p,q) F_1 V^{1/2}\psi(p,q_1) \dd p \dd q_1.
\end{equation}
Let $A_{T}^0(q_2) $ acting on $L_{\text{s}}^2(\BR^2\times \BR)$ be given by $\langle \psi,A_{T}^0(q_2) \psi \rangle=\int_{\BR^{3}}  B_{T}(p,q) |\widehat{ V^{1/2}\psi}(p,q_1)|^2\dd p \dd q_1$.
It follows from \eqref{at1r} and $B_T(p,q)=B_T((q_1,p_2),(p_1,q_2))$ that
\begin{equation}\label{AT1_decomposition}
A_T^1(q_2)= \iota^\dagger (A_T^0(q_2)\mp F_2^\dagger   G_T(q_2) F_2 ) \iota.
\end{equation}

\subsection{Proof of Lemma~\ref{t2geqt1}}\label{sec:pft2geqt1}
\begin{proof}[Proof of Lemma~\ref{t2geqt1}]
The goal is to show that $\inf \sigma(H_T^{\Omega_2})\leq \inf \sigma(H_T^{\Omega_1})$.
We proceed analogously to the proof of \cite[Lemma 2.3]{roos_bcs_2023}.
Let $S_l$ be the shift by $l$ in the second component, i.e.~$S_l \psi (x,y)= \psi((x_1, x_2-l), (y_1,y_2-l))$.
Let $\psi$ be a function in $D_1^{D/N}$ with bounded support, for the case of Dirichlet/Neumann boundary conditions, respectively.
For $l$ big enough, $S_l \psi$ is supported on $\Omega_2\times \Omega_2$ and satisfies the boundary conditions.
The goal is to prove that $\lim_{l\to \infty} \langle S_l \psi, H^{\Omega_2}_T S_l \psi \rangle = \langle \psi, H^{\Omega_1}_T \psi \rangle$.
Then, since functions with bounded support are dense in $D_1^{D/N}$ (with respect to the Sobolev norm), the claim follows.

Note that $\langle S_l \psi,V S_l \psi \rangle = \langle \psi,V \psi \rangle$.
Let $\tilde \psi$ be the (anti-)symmetric continuation of $\psi$ from $\Omega_1\times \Omega_1$ to $\BR^2\times \BR^2$ as in Figure~\ref{fig:extension}, giving $\tilde \psi \in H^1(\BR^4)$.
Furthermore,
using symmetry of $K_T$ in $p_2$ and $q_2$ one obtains
\begin{multline}\label{2.6}
\langle S_l \psi,K_T^{\Omega_2} S_l \psi \rangle
=\frac{1}{4}\int_{\BR^{4}}\overline{\widehat{ \tilde \psi}(p,q)}K_T(p,q)\Big[\widehat{ \tilde \psi}(p,q)\mp\widehat{ \tilde \psi}((p_1,-p_2),q) e^{i 2l p_2}\mp \widehat{ \tilde \psi}(p,(q_1,-q_2))e^{i 2l q_2}\\
+\widehat{ \tilde \psi}((p_1,- p_2),(q_1,-q_2))e^{i 2l (p_2+q_2)}\Big] \dd p\dd q
\end{multline}
for $l$ big enough such that $S_l \psi$ is supported on $\Omega_2\times \Omega_2$.
The first term is exactly $\langle \psi,K_T^{\Omega_1} \psi \rangle $.
Note that by the Schwarz inequality and since $K_T(p,q)\leq C(1+p^2+q^2)$ according to Lemma~\ref{KT-Laplace}, the function
\begin{equation}
(p,q)\mapsto \overline{\widehat{ \tilde \psi}(p,q)}K_T(p,q)\widehat{ \tilde \psi}((p_1,-p_2),q)
\end{equation}
is in $L^1(\BR^{2d})$ since $\tilde \psi \in H^1(\BR^{4})$.
By the Riemann-Lebesgue Lemma, the second term in \eqref{2.6} vanishes for $l\to \infty$.
By the same argument, also the remaining terms vanish in the limit.
\end{proof}

\subsection{Proof of Lemma~\ref{lea:etaex}}\label{sec:pfetaex}
\begin{proof}[Proof of Lemma~\ref{lea:etaex}]
To prove continuity of the function $q_2\mapsto \inf \sigma(H_T^1(q_2))$, it suffices to show that for all $T>0$ and $\mu,Q_0,Q_1\in \BR$ there is a constant $C(T,\mu,Q_0,Q_1)$ such that for all $Q_0<q_2,q_2'<Q_1$ we have
\[
|B_T(p,q)^{-1}-B_T(p,(q_1,q_2'))^{-1}| \leq C(T,\mu,Q_0,Q_1) |q_2-q_2'|(1+p^2+q_1^2).
\]
The claim then follows analogously to the proof of \cite[Lemma 4.1]{roos_bcs_2023}.

We write
\[
B_T(p,q)^{-1}-B_T(p,(q_1,q_2'))^{-1}=(q_2'-q_2)f(p,q,q_2'-q_2) B_T^{-1}(p,(q_1,q_2')) B_T^{-1}(p,q),
\]
 where $f$ is defined as in the following Lemma.
\begin{lemma}\label{bdiff}
Let $T,\mu,Q_1>0$ and define the function $f:\BR^2\times \BR^2 \times \BR\to \BR$ through
\begin{equation}
f(p,q,x)=\frac{1}{x}\left(B_T(p,(q_1,q_2+x))-B_T(p,q)\right)
\end{equation}
for $x\neq 0$ and $f(p,q,0)=\p_{q_2} B_T(p,q)$.
Then $f$ is continuous and for $|q_2|<Q_1$ there is a constant $C$ depending only on $T, \mu$ and $Q_1$ such that
\begin{equation}
|f(p,q,x)|\leq \frac{C}{1+p_1^2+p_2^2+q_1^2}.
\end{equation}
\end{lemma}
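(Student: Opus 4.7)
The plan is to write $f$ as an integral via the fundamental theorem of calculus,
\[
f(p,q,x)\;=\;\int_0^1 \partial_{q_2}B_T(p,(q_1,q_2+tx))\,dt,
\]
which holds for every $x\in\BR$ (being trivial at $x=0$ by definition of $f(p,q,0)$). Continuity of $f$ then follows from continuity of $\partial_{q_2}B_T$ together with dominated convergence, while the desired bound reduces to a uniform-in-$q_2'$ pointwise estimate
\[
|\partial_{q_2}B_T(p,(q_1,q_2'))|\;\leq\; \frac{C(T,\mu)}{1+p^2+q_1^2},
\]
which, once established, can be integrated in $t$ to yield the claim on $f$ immediately.

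For the pointwise estimate, I would view $B_T$ as a function of $u=(p+q)^2$ and $v=(p-q)^2$, writing $B_T(p,q)=1/\tilde K_T(u,v)$ with
\[
\tilde K_T(u,v)=\frac{u+v-2\mu}{f_T(u)+f_T(v)},\qquad f_T(s)=\tanh\!\left(\frac{s-\mu}{2T}\right).
\]
The function $\tilde K_T$ is smooth on $[0,\infty)^2$ (the apparent singularity at $u=v=\mu$ is removable, since numerator and denominator share a common smooth factor there, and along $u+v=2\mu$ the ratio equals $2T\cosh^2((u-\mu)/(2T))$) and by Lemma~\ref{KT-Laplace} satisfies $\tilde K_T(u,v)\geq C_1(1+u+v)$. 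I claim moreover that $\partial_u\tilde K_T$ and $\partial_v\tilde K_T$ are uniformly bounded on $[0,\infty)^2$: they are continuous on compacta, while as $u\to\infty$ (for any $v\in[0,\infty)$) one has $\partial_u\tilde K_T\to 1/(1+f_T(v))$, which is bounded since $f_T(v)\in[-\tanh(\mu/(2T)),1)$. The chain rule combined with $u+v=2(p^2+q^2)$ then yields
\[
|\partial_{q_2}B_T(p,q)|\leq \frac{2|p_2+q_2|\,|\partial_u\tilde K_T|+2|p_2-q_2|\,|\partial_v\tilde K_T|}{\tilde K_T^{2}}\leq \frac{C(|p_2|+|q_2|)}{(1+p^2+q^2)^2}.
\]

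The remaining step is to replace $1+p^2+q^2$ by $1+p^2+q_1^2$ in the denominator via two elementary inequalities. For the $|p_2|$-term, $|p_2|\leq(1+p^2+q_1^2)^{1/2}$ yields $|p_2|/(1+p^2+q^2)^2\leq (1+p^2+q_1^2)^{-3/2}\leq (1+p^2+q_1^2)^{-1}$. For the $|q_2|$-term, the estimate $(1+p^2+q^2)^2\geq (1+p^2+q_1^2)(1+q_2^2)$ (valid because $1+p^2+q_1^2\geq 1$) combined with $|q_2|/(1+q_2^2)\leq 1/2$ gives $|q_2|/(1+p^2+q^2)^2\leq (2(1+p^2+q_1^2))^{-1}$. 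Summing, $|\partial_{q_2}B_T(p,q)|\leq C'/(1+p^2+q_1^2)$ uniformly in $(p,q)\in\BR^4$, from which the lemma follows. The main technical obstacle is the uniform boundedness of $\partial_u\tilde K_T$ near $u=v=\mu$, where the usual quotient $\partial_u\tilde K_T=(1-\tilde K_T f_T'(u))/(f_T(u)+f_T(v))$ is of $0/0$ form; this can be resolved by differentiating the identity $\tilde K_T(f_T(u)+f_T(v))=u+v-2\mu$ and taking the limit carefully, or by Taylor-expanding both numerator and denominator around the singularity.
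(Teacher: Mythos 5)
Your reduction via the fundamental theorem of calculus forces you to prove a derivative bound that is uniform over \emph{all} $q_2'\in\BR$ (since $x$ is unrestricted, $q_2+tx$ can be anywhere), and the key step you use for it is false: the partial derivatives $\partial_u\tilde K_T$, $\partial_v\tilde K_T$ are \emph{not} uniformly bounded on $[0,\infty)^2$. Your argument only checks compact sets and the limit $u\to\infty$ at fixed $v$; it misses the regime where $u$ stays near $\mu$ (so $f_T'(u)\approx 1/(2T)$ is not small) while $v\to\infty$. There one has
\begin{equation*}
\partial_u\tilde K_T(\mu,v)=\frac{1-\tilde K_T(\mu,v)\,f_T'(\mu)}{f_T(\mu)+f_T(v)}\approx 1-\frac{v-\mu}{2T}\;\longrightarrow\;-\infty ,
\end{equation*}
since $\tilde K_T(\mu,v)=(v-\mu)/f_T(v)\sim v$. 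As a consequence your displayed intermediate estimate $|\partial_{q_2}B_T(p,q)|\leq C(|p_2|+|q_2|)/(1+p^2+q^2)^2$ is also false: take $p=(0,R)$, $q=(0,-R+\sqrt{\mu})$, so that $(p+q)^2=\mu$ and $(p-q)^2\approx 4R^2$; a direct computation gives $|\partial_{q_2}B_T|\approx \sqrt{\mu}/(4TR^2)$, whereas your bound would give order $R^{-3}$. (The final target $C/(1+p^2+q_1^2)$ happens to be consistent with this example, but your chain of inequalities does not establish it, and everything after the false step — including the replacement of $1+p^2+q^2$ by $1+p^2+q_1^2$ — rests on it.)

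Note that the paper sidesteps exactly this difficulty: it never needs a derivative bound uniform in $q_2\in\BR$. For $|x|\geq \sqrt{\mu}/4$ it bounds $|f|\leq \frac{1}{|x|}\big(B_T(p,(q_1,q_2+x))+B_T(p,q)\big)$ directly from the elementary bounds \eqref{BT_bound}, and only for $|x|<\sqrt{\mu}/4$ does it estimate $|f|$ by $\sup_{|q_2|\leq Q_2}|\partial_{q_2}B_T|$ with $Q_2=Q_1+\sqrt{\mu}/4$ a \emph{bounded} window, where the derivative is controlled via the Mittag--Leffler (Matsubara) representation. If you want to keep your FTC approach, you must either prove the uniform-in-$q_2$ derivative bound honestly (which requires a careful analysis of precisely the regime ``one of $(p\pm q)^2$ near $\mu$, the other large'', where $\partial_u\tilde K_T$ grows linearly but is compensated by $\tilde K_T^{-2}$), or adopt the paper's splitting in $|x|$ so that only a bounded range of $q_2+tx$ is ever needed.
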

The proof is provided in Section~\ref{sec:pfbdiff}.
Together with $B_T^{-1}(p,q)\leq C(1+p^2+q^2)$ (c.f~Lemma~\ref{KT-Laplace}) the desired bound on $|B_T(p,q)^{-1}-B_T(p,(q_1,q_2'))^{-1}|$ follows.

The function $q_2\to \inf \sigma(H_T^1(q_2))$ is even since $\langle \psi, H_T^1(-q_2) \psi \rangle = \langle \tilde \psi, H_T^1(q_2) \tilde \psi \rangle$, where $\tilde \psi(r,z_1)=\psi ((r_1,-r_2),z_1)$, which follows directly from the definitions of $H_T^1(q_2)$ and $K_T(q_2)$ in \eqref{HT1q} and \eqref{KT1q} using radiality of $V$ and substituting $(p_2,r_2)\to -(p_2,r_2)$.
The divergence of $\inf \sigma(H_T^1(q_2))$ as $|q_2|\to \infty$ follows since the function $B_T(p,q)^{-1}$ in $K_T(q_2)$ is bounded below by $|p^2+q^2-\mu|$, see \eqref{BT_bound}.
\end{proof}

\subsection{Proof of Lemma~\ref{lea:H1ev}}\label{sec:pfH1ev}
\begin{proof}[Proof of Lemma~\ref{lea:H1ev}]
The half-space Birman-Schwinger operator $A_{T}^1(q_2)$ for $q_2\in \BR$ can be decomposed into a term involving $A_T^0(q_2)$ and a perturbation involving $G_T(q_2)$ according to \eqref{AT1_decomposition}.
The operator $A_T^0(q_2)$ has purely essential spectrum and let $a_T^0:=\sup \sigma(A_T^0)$.

Below we shall prove that $G_T(q_2)$ is compact.
The part of the spectrum of $A_T^1$ that lies above $a_T^0$  hence consists of eigenvalues.

We first argue that $A_{T_c^1(\lambda)}^1$ has spectrum above $a_{T_c^1(\lambda)}^0$.
The Birman-Schwinger principle implies
\[
\sup \sigma(A_{T_c^1(\lambda)}^1(\eta(\lambda)))=\lambda^{-1}=a_{T_c^0(\lambda)}^0.
\]
We need to show that $a_{T_c^0(\lambda)}^0>a_{T_c^1(\lambda)}^0$.
The idea is to use that $a_T^0$ is strictly decreasing in $T$ when the supremum of $\sigma(A_T^0)$ is attained at zero total momentum and that $T_c^1(\lambda)>T_c^0(\lambda)$ at weak coupling.
At weak coupling $\lambda<\lambda_1$, $\inf \sigma(H_{T_c^0(\lambda)}^{\Omega_0})$ is attained at zero total momentum and $T_c^0$ is uniquely determined by $\inf \sigma(H^0_{T_c^0(\lambda)})=0$.
The Birman-Schwinger principle implies that the supremum of $\sigma(A_T^0)$ is attained at zero total momentum, i.e.~$a_T^0=\sup \sigma(A_T^0(0))$ for $T< T_c^0(\lambda_1)$. 
At weak enough coupling $\lambda\leq \lambda_0$ we have $T_c^0(\lambda_1)>T_c^1(\lambda)>T_c^0(\lambda)$.
Using the strict monotonicity of $a_T^0$
\[
\sup \sigma(A_{T_c^1(\lambda)}^1(\eta(\lambda)))=a_{T_c^0(\lambda)}^0> a_{T_c^1(\lambda)}^0.
\]
Hence $\lambda^{-1}$ is an eigenvalue of $A_{T_c^1(\lambda)}^1(\eta(\lambda))$ and by the Birman-Schwinger principle $H_{T_c^1(\lambda)}^1(\eta(\lambda))$ has an eigenvalue at the bottom of the spectrum.

To prove compactness of $G_T(q_2)$ defined in \eqref{gt}, we prove that its Hilbert-Schmidt norm is finite.
Writing out the Hilbert-Schmidt norm in terms of the integral kernel of $G_T(q_2)$ and carrying out the integrations over relative and center of mass coordinates, one obtains
\begin{equation}
\lVert G_T(q_2)\rVert_{\rm HS}^2 = \int_{\BR^4} \vert \widehat{V}(0,p_2-p_2')\vert^2 B_T(p,q)B_T((p_1,p_2'),q) \dd p_1 \dd q_1 \dd p_2 \dd p_2'.
\end{equation}
Using $B_T(p,q)\leq C(T,\mu)/(1+p^2+q^2)$ (c.f.~\eqref{BT_bound}) and Young's inequality, this is bounded above by
\begin{equation}\label{GT_HSbound}
C(T,\mu)^2 \left(\int_\BR \vert \widehat{V}(0,\vert p_2 \vert)\vert^{2r}\dd p_2 \right)^{1/r} \int_{\BR}\left(\int_{\BR} \left(\frac{1}{1+p_1^2+q_1^2+p_2^2}\right)^s \dd p_2 \right)^{2/s}\dd p_1 \dd q_1
\end{equation}
where $2=1/r+2/s$.
By assumption $V\in L^1\cap L^t$ for some $t>1$.
Note that $\widehat{V}$ is continuous by Riemann-Lebesgue and $\widehat{V}\in L^{t'} \cap L^\infty$ for some $t'<\infty$ by the Hausdorff-Young inequality.
In particular, due to the radiality of $V$, we can bound $\left(\int_\BR \vert \widehat{V}(0,\vert p_2 \vert) \vert^{2r}\right)^{1/r}\leq \lVert V\rVert_\infty^2+\frac{1}{2\pi}\lVert \widehat{V}\rVert_{2r}^2$, which is finite for the choice $r=t'/2$.
With this choice, we have $s>1$.
Note that $ \left(\int_{\BR}\left(\frac{1}{1+p_1^2+q_1^2+p_2^2}\right)^s \dd p_2 \right)^{2/s} =\frac{C}{(1+p_1^2+q_1^2)^{2-1/s}}$ for some constant $C$.
Hence the integral over $p_1,q_1$ in \eqref{GT_HSbound} is finite for $s>1$.
\end{proof}

\section{Regularity and asymptotic behavior of the half-space ground state}\label{sec:reg_conv}
In this section we collect regularity and convergence results for $\Phi_\lambda$ (defined in Section~\ref{sec:strategy}), which we shall use later to prove Lemmas~\ref{lea:eps_to_0} and \ref{lea:lambdato0}.
The asymptotics of $T_c^0(\lambda)$ and $T_c^1(\lambda)$ for $\lambda \to 0$ are known:
\begin{rem}\label{rem:Tetato0}
At weak enough coupling, $\inf \sigma(H_{T_c^0(\lambda)}^{\Omega_0})$ is attained at zero total momentum \cite[Remark 2.5]{roos_bcs_2023}.
In the case of zero total momentum, the asymptotics of $T_c^0(\lambda)$ were computed in \cite[Theorem 2.5]{henheik_universality_2023} to be $\vert \lambda^{-1}- e_\mu \ln \frac{\mu}{T_c^0(\lambda)}\vert=O(1)$ for $\lambda\to 0$.
Furthermore, \cite[Theorem 1.7]{roos_bcs_2023} implies that $\ln \frac{\mu}{T_c^0(\lambda)}-\ln \frac{\mu}{T_c^1(\lambda)}=o(1)$ for $\lambda \to 0$.
Therefore, $\vert \lambda^{-1}- e_\mu \ln \frac{\mu}{T_c^1(\lambda)}\vert=O(1)$ as well.
In particular, both $T_c^0(\lambda)$ and $T_c^1(\lambda)\to0$ as $ \lambda\to0$ exponentially fast.
\end{rem}
Let $\Psi_\lambda(r,z_1):=\frac{1}{\sqrt{2}} V^{1/2}(r) \Phi_\lambda(r,z_1) \chi_{|r_1|<|z_1|}$ as function on $\BR^3$.
Note that $\lVert \Psi_\lambda \rVert_2=1$ due to the symmetry under $(r_1,z_1)\to -(r_1,z_1)$ and the normalization $\lVert V^{1/2}\chi_{\tilde\Omega_1} \Phi_\lambda\rVert_2=1$.
The first convergence result describes the asymptotic behavior of $\eta(\lambda)$ and $\Psi_\lambda$ as $\lambda\to0$.
According to the Birman-Schwinger principle, $\chi_{\tilde\Omega_1}\Psi_\lambda$ is an eigenvector of $A_{T_c^1(\lambda)}(\eta(\lambda))$ corresponding to the largest eigenvalue.

Let
\begin{equation}\label{jd}
j_2(r):=\frac{1}{2\pi}\int_{\BS^{1}} e^{i \omega \cdot r  \sqrt{\mu}} \dd \omega.
\end{equation}
Due to assumptions~\ref{asptn1}\eqref{aspt.2} and \eqref{aspt.5}, the eigenvector corresponding to the largest eigenvalue $e_\mu$ of $O_\mu$ has angular momentum zero and is given by \cite{roos_bcs_2023}
\begin{equation}
\psi^0(r)=\frac{V^{1/2}(r)j_2(r)}{\Big(\int_{\BR^2}V(r')j_2(r')^2 \dd r'\Big)^{1/2}}.
\end{equation}
Let $\BP:L^2(\BR^3)\to L^2(\BR^3)$ denote the projection onto $\psi^0$ in the $r$-variable, i.e.
\[\BP \psi (r,q_1)= \psi^0(r) \int_{\BR^2} \overline{\psi^0(r')} \psi(r',q_1) \dd r'.\]
For $0\leq \beta<1$ let $\BQ_\beta$ denote the projection onto small momenta in $q_1$, i.e.
\[\BQ_\beta \psi (r,q_1)=\psi (r,q_1) \chi_{\frac{\vert q_1\vert}{\sqrt{\mu}}<\left(\frac{T_c^1(\lambda)}{\mu}\right)^\beta}.\]
Let $\BP^\perp=\BI-\BP$ and $\BQ_\beta^\perp=1-\BQ_\beta$.

Our first convergence result for the minimizer of $H_{T_c^1(\lambda)}^{\Omega_1}$ is that for $\lambda\to0$ the optimal momentum $\eta(\lambda)\to0$ and $\Psi_\lambda$ concentrates at momentum zero in $z_1$ direction and approaches $\psi^0$ in the $r$-variables.
This is made precise in the following Lemma, whose proof  can be found in Section~\ref{sec:etato0}.
\begin{lemma}\label{etato0}
Let $\mu>0$, $V$ satisfy Assumption~\ref{asptn1} and let $0\leq\beta< 1$.
For $\lambda\to0$ we have
\begin{enumerate}[(i)]
\item $\eta(\lambda)=O(T_c^1(\lambda))$ \label{i-eta}
\item $\lVert \BP^\perp F_2 \Psi_\lambda  \rVert_2^2 =O(\lambda)$ \label{i-Pphi}
\item $\lVert \BQ_\beta^\perp F_2 \Psi_\lambda  \rVert_2^2 =O(\lambda)$  \label{i-q-range}
\end{enumerate}
\end{lemma}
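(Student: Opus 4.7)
The plan is to extract the logarithmic singularity of the half-space Birman--Schwinger operator $A_T^1(\eta)$ as $T\to 0$, use it to localize $F_2\Psi_\lambda$ onto the top-eigenvector structure $\BP$ of the leading part, and simultaneously pin down $\eta(\lambda)$. The starting point is the eigenvalue identity: since $\chi_{\tilde\Omega_1}\Psi_\lambda$ is the top eigenvector of $A^1_{T_c^1(\lambda)}(\eta(\lambda))$ with eigenvalue $1/\lambda$, the decomposition \eqref{AT1_decomposition} together with $\iota\iota^\dagger \Psi_\lambda = \Psi_\lambda$ yields
\[
\frac{1}{\lambda} = \langle \Psi_\lambda, A^0_{T_c^1(\lambda)}(\eta(\lambda))\Psi_\lambda\rangle \mp \langle F_2\Psi_\lambda, G_{T_c^1(\lambda)}(\eta(\lambda)) F_2\Psi_\lambda\rangle.
\]
I plan to bound the $G_T$ term by $O(1)$ uniformly as $\lambda\to 0$. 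The Hilbert--Schmidt bound from Lemma \ref{lea:H1ev} degenerates as $T\to 0$, so I would instead exploit the ``swap'' structure of $G_T(q_2)$ by Cauchy--Schwarz against $A_T^0(q_2)$, using that the rank-one logarithmic singularity cancels under the momentum exchange and leaves an $O(1)$ remainder controlled by the regularity of $\widehat V$.

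The technical heart is the asymptotic analysis of $A_T^0(q_2)$. After conjugation by $F_2$, it becomes a direct integral over $k_1\in\BR$ whose fiber is the standard full-space Birman--Schwinger operator $\tilde A_T(k_1,q_2)$ at total center-of-mass momentum $(k_1,q_2)$. Adapting \cite{henheik_universality_2023} to two dimensions with a momentum parameter, I would establish a uniform operator bound of the form
\[
\tilde A_T(q) \leq e_\mu \ln\!\frac{\mu}{\max\{T,\,c\sqrt{\mu}\,|q|\}}\, \BP + R_T(q), \qquad \|R_T(q)\|=O(1),
\]
for bounded $q\in\BR^2$. The inputs are Assumption \ref{aspt.5}, which guarantees that the singularity of $B_T$ on the Fermi circle is carried by the rank-one projection $\BP$ onto $\psi^0$, and the fact that nonzero $q$ thickens the effective Fermi shell, replacing $T$ by $\max\{T,c\sqrt\mu|q|\}$ inside the log.

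Integrating the fiber bound and observing that for $|k_1|/\sqrt\mu \ge (T/\mu)^\beta$ the log coefficient is reduced by a factor of at most $\beta$, one obtains an operator inequality for $A_T^0(q_2)$ with a stronger log coefficient on $\BP\BQ_\beta$ than on $\BP\BQ_\beta^\perp$. Substituting into the starting identity, using $\|F_2\Psi_\lambda\|_2=1$ and $1/\lambda = e_\mu \ln(\mu/T_c^1(\lambda))+O(1)$ from Remark \ref{rem:Tetato0}, a direct rearrangement gives
\[
e_\mu \ln\!\tfrac{\mu}{T_c^1(\lambda)}\Bigl[(1-\beta)\|\BP\BQ_\beta^\perp F_2\Psi_\lambda\|^2 + \|\BP^\perp F_2\Psi_\lambda\|^2\Bigr] \leq O(1),
\]
which yields (ii) and (iii) (the latter via $\|\BQ_\beta^\perp F_2\Psi_\lambda\|^2 \le \|\BP\BQ_\beta^\perp F_2\Psi_\lambda\|^2+\|\BP^\perp F_2\Psi_\lambda\|^2$). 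For (i) I argue analogously: if $|\eta(\lambda)|\gg T_c^1(\lambda)$, the fiber bound forces $\tilde A_{T_c^1}(k_1,\eta)\leq e_\mu\ln(\sqrt\mu/|\eta|)+O(1)$ uniformly in $k_1$, which combined with $1/\lambda=e_\mu\ln(\mu/T_c^1)+O(1)$ is compatible only with $|\eta|=O(T_c^1)$.

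The main obstacle is the uniform fiberwise asymptotic analysis in the second paragraph: isolating the $|p|=\sqrt{\mu-|q|^2}$ singularity of $B_T$ as a rank-one operator onto $\BP$ uniformly in $q$, which relies crucially on the non-degeneracy Assumption \ref{aspt.5} and the self-adjointness ensured by Assumption \ref{aspt.4}. The uniform $O(1)$ bound on $G_T(q_2)$ via its off-diagonal swap structure is a second, somewhat delicate point, since a naive operator-norm bound is insufficient.
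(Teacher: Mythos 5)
Your overall skeleton (eigenvalue identity for $\chi_{\tilde\Omega_1}\Psi_\lambda$, uniform $O(1)$ control of the boundary term $G_T$, a fiberwise logarithmic bound on $A_T^0(q_2)$ in the $q_1$-direct-integral decomposition, then a rearrangement giving (i)--(iii)) is the same as the paper's. However, the pivotal technical claim is false as stated: the operator inequality $\tilde A_T(q)\leq e_\mu \ln\frac{\mu}{\max\{T,c\sqrt{\mu}|q|\}}\,\BP+R_T(q)$ with $\lVert R_T(q)\rVert=O(1)$ cannot hold. The logarithmically divergent part of $V^{1/2}B_T(\cdot,q)V^{1/2}$ is, up to $O(1)$, $\ln(\cdots)\,O_\mu$, and $O_\mu$ is \emph{not} rank one: Assumption~\ref{asptn1}\eqref{aspt.5} only says its top eigenvalue $e_\mu$ is non-degenerate, while the other eigenvalues $e_\mu^{(m)}=\frac{1}{2\pi}\int V|J_m(\sqrt\mu|\cdot|)|^2$ are in general strictly positive (certainly for $V\geq0$, $V\not\equiv0$, they need not vanish). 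Hence $R_T(q)$ must contain a term of size $\alpha e_\mu\ln(\mu/T)$ on $\BP^\perp$, where $\alpha<1$ is the spectral-gap ratio, and its norm diverges as $T\to0$. Consequently your rearranged inequality with coefficient $1$ in front of $\lVert\BP^\perp F_2\Psi_\lambda\rVert^2$ does not follow from your fiber bound. The repair is exactly what the paper does: use the bound \eqref{supdeltaVBV} (from \cite{roos_bcs_2023}), whose log coefficient is $\langle\psi,O_\mu\psi\rangle$ rather than $e_\mu\langle\psi,\BP\psi\rangle$, and then invoke the non-degeneracy to write $\langle\psi,O_\mu\psi\rangle\leq e_\mu(\lVert\BP\psi\rVert^2+\alpha\lVert\BP^\perp\psi\rVert^2)$; the rearrangement then yields $(1-\alpha)e_\mu\ln(\mu/T_c^1)\lVert\BP^\perp F_2\Psi_\lambda\rVert^2=O(1)$, which still gives (ii), and parts (i) and (iii) go through as you indicate with the scalar bound $O_\mu\leq e_\mu$.

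A second, smaller issue is your treatment of the $G_T$ term: the proposed ``Cauchy--Schwarz against $A_T^0(q_2)$ with cancellation of the rank-one singularity under the momentum swap'' rests on the same incorrect rank-one picture, and a Cauchy--Schwarz of this type only produces an $O(\ln\mu/T)$ bound. The uniform bound $\sup_{T,q_2}\lVert G_T(q_2)\rVert<\infty$ is true but requires the kernel analysis of \cite[Lemma 6.1]{roos_bcs_2023} (via the pointwise bound \eqref{BT_bound_2} and the operators $D_{\mu_1,\mu_2}$); the paper simply cites it, and you should too, or reproduce that argument rather than the sketched one.
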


For a function $f$ depending on two variables we define the mixed Lebesgue norm $\lVert f \rVert_{L_i^p L_j^q}$ for $\{i,j\}=\{1,2\}$, as first taking the $L^q$-norm in the $j$-th variable and then taking the $L^p$-norm in the $i$-th variable.
The following estimate is analogous to \cite[Lemma 3.7]{roos_bcs_2023} and follows from the Cauchy-Schwarz inequality.
\begin{lemma}\label{lea:lp_prop}
Let $V\in L^1(\BR^2)$ and $\psi\in L^2(\BR^2\times \BR)$.
Then
\begin{multline}
\lVert \widehat{V^{1/2}\psi} \rVert_{L_1^\infty L_{2}^2}\leq \sup_p \left( \int_\BR \vert \widehat{V^{1/2}\psi}(p,q_1)\vert^2 \dd q_1\right)^{1/2}\\
\leq \lVert \widehat{V^{1/2}\psi} \rVert_{ L_2^2 L_1^\infty}=\left(\int_\BR\sup_p  \vert \widehat{V^{1/2}\psi}(p,q_1)\vert^2 \dd q_1\right)^{1/2}
\leq \frac{\lVert V\rVert_1^{1/2}}{2\pi} \lVert \psi\rVert_2.
\end{multline}
\end{lemma}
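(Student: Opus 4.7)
The lemma has three parts chained together. My plan is to dispose of the first inequality and the central equality quickly, since they amount to bookkeeping, and to focus on the third inequality, which carries the real content.

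For the first inequality I would observe the trivial pointwise bound $\vert \widehat{V^{1/2}\psi}(p,q_1)\vert^2 \leq \sup_{p'} \vert \widehat{V^{1/2}\psi}(p',q_1)\vert^2$ valid for each fixed $q_1$. Integrating in $q_1$, taking $\sup_p$ on the resulting left-hand side, and extracting a square root yields the first inequality; nothing specific to Fourier transforms is used at this step. The middle equality is merely the definition of the mixed-norm $L_2^2 L_1^\infty$ written out.

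The main work lies in the final inequality. The strategy is to split the three-dimensional Fourier transform into the partial Fourier transform $F_2$ in the $z_1$-variable followed by the two-dimensional Fourier transform $F_1$ in the $r$-variable, with the normalisations already introduced in the paper. This yields
\[
\widehat{V^{1/2}\psi}(p,q_1) = \frac{1}{2\pi}\int_{\BR^2} e^{-i p\cdot r}\, V^{1/2}(r)\, F_2\psi(r,q_1)\, \dd r.
\]
Applying the Cauchy--Schwarz inequality in $r$, pairing $V^{1/2}(r)$ with $F_2\psi(r,q_1)$ and using $\lVert V^{1/2}\rVert_{L^2(\BR^2)} = \lVert V\rVert_1^{1/2}$, gives the pointwise-in-$p$ bound
\[
\vert \widehat{V^{1/2}\psi}(p,q_1)\vert \leq \frac{\lVert V\rVert_1^{1/2}}{2\pi}\, \lVert F_2\psi(\cdot,q_1)\rVert_{L^2(\BR^2)}.
\]
The right-hand side is independent of $p$, so squaring and taking $\sup_p$ is free; integrating in $q_1$ then reduces the claim to the identity $\int_\BR \lVert F_2\psi(\cdot,q_1)\rVert_2^2\, \dd q_1 = \lVert \psi\rVert_2^2$, which is immediate from Plancherel applied in the $z_1$-variable alone combined with Fubini.

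I do not anticipate any substantive obstacle. The only housekeeping point is verifying that $F_2\psi(r,q_1)$ is well defined as an $L^2$ function on $\BR^2\times\BR$, so that the Cauchy--Schwarz bound holds for almost every $q_1$ and the subsequent $q_1$-integration is legitimate; this again follows from Plancherel in the $z_1$-variable.
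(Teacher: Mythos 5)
Your proposal is correct and follows essentially the same route the paper intends: the paper gives no details beyond noting the estimate "follows from the Cauchy--Schwarz inequality" (in analogy with Lemma 3.7 of the half-space paper), and your argument -- factor the transform as $F_1$ applied to $V^{1/2}F_2\psi$, apply Cauchy--Schwarz in $r$ with $\lVert V^{1/2}\rVert_2^2=\lVert V\rVert_1$, and use Plancherel in $z_1$ -- is exactly that, with the constant $\lVert V\rVert_1^{1/2}/(2\pi)$ coming out correctly from the chosen normalizations. The remaining inequalities are indeed the trivial sup-versus-integral bookkeeping you describe.
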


To simplify notation, we shall sometimes write $T_c^1,\eta$ instead of $T_c^1(\lambda),\eta(\lambda)$.
Recall the definition of $t(p_1,q_1,r_1,z_1)$ from \eqref{t_halfspace} and note that due to the (anti-)symmetry of $\Phi_\lambda$
\begin{equation}\label{pf_eps0_1}
\frac{1}{(2\pi)^{3/2}}\int_{\tilde \Omega_1} t(p_1,q_1,r_1,z_1)e^{-i p_2 r_2}\Phi_\lambda(r,z_1)\dd r \dd z_1= \frac{1}{2}\widehat{\Phi_\lambda}(p,q_1).
\end{equation}
Combining this with the eigenvalue equation  $\chi_{\tilde\Omega_1} \Phi_\lambda =\lambda (K_{T_c^1(\lambda)}^1(\eta(\lambda))^{-1} V \chi_{\tilde\Omega_1} \Phi_\lambda$ gives
\begin{equation}\label{eval_eq}
\widehat{\Phi_\lambda}(p,q_1)
=\frac {2\lambda} {(2\pi)^{3/2}} \int_{\tilde \Omega_1}B_{T_c^1(\lambda)}(p,(q_1,\eta(\lambda)))
\ t(p_1,q_1,r'_1,z_1')e^{-i p_2 r'_2} V(r') \Phi_\lambda(r',z_1') \dd r' \dd z_1'
\end{equation}
for $(p,q_1)\in \BR^3$.

To describe the asymptotics of $\Phi_\lambda$ for $\lambda\to0$, it is convenient to split the function into different summands with different asymptotic properties.
We use \eqref{eval_eq} together with $\Psi_\lambda=\frac{1}{\sqrt{2}} V^{1/2}\Phi_\lambda  \chi_{|r_1|<|z_1|}$ to split $\Phi_\lambda$ into the sum $\Phi_\lambda^{d}\mp \Phi_\lambda^{ex}$, where the first term uses the first two summands of $t(p_1,q_1,r'_1,z'_1)$
\begin{equation}\label{phi_lambda_j.1}
\Phi_\lambda^{d}(r,z_1)=\sqrt{2}\lambda\int_{\BR^{3}} \frac{e^{i (p\cdot r +q_1 z_1)}}{(2\pi)^{3/2}} B_{T_c^1}(p,(q_1,\eta)) \widehat{V^{1/2}\Psi_\lambda}(p, q_1)\dd p\dd q_1
\end{equation}
and the second term uses the last two summands of $t(p_1,q_1,r'_1,z'_1)$
\begin{equation}\label{phi_lambda_j.2}
\Phi_\lambda^{ex}(r,z_1)=\sqrt{2} \lambda\int_{\BR^{3}}\frac{e^{i (p\cdot r +q_1 z_1)}}{(2\pi)^{3/2}} B_{T_c^1}(p,(q_1,\eta)) \widehat{V^{1/2}\Psi_\lambda}(( q_1,p_2), p_1)\dd p\dd q_1.
\end{equation}
For $j\in \{d,ex\}$ we further split $\Phi^{j}_\lambda=\Phi^{{j},<}_\lambda+\Phi^{{j},>}_\lambda$, where $\Phi^{{j},\#}$ for $\#\in\{<,>\}$ has the characteristic function $\chi_{p^2+q_1^2\#2\mu}$ in the integrand.
Furthermore, let $\Phi^{\#}=\Phi^{d,\#}\mp \Phi^{ex,\#}$.

The following three Lemmas contain regularity properties for $\Phi_\lambda$, which are later used for dominated convergence arguments in the proof of Lemma~\ref{lea:eps_to_0}.
Furthermore, they also contain information about the weak coupling behavior of the different $\Phi_\lambda^{j,\#}$, which is important for the proof of Lemma~\ref{lea:lambdato0}.
The first Lemma is useful to prove that $L_1$ is of order $O(1)$.
\begin{lemma}\label{lea:phi_infty}
Let $\mu>0$, let $V$ satisfy Assumption~\ref{asptn1} and let $0<\lambda\leq \lambda_0$. Then
$
\lVert \Phi_\lambda \rVert_{L_1^\infty L_{2}^2} <\infty.
$
Furthermore, $\lVert \Phi_\lambda^d \rVert_{L_1^\infty L_{2}^2}=O(1)$ and $\lVert\Phi_{\lambda}^{ex,>}\rVert_{L_1^\infty L_{2}^2}=O(\lambda)$ as $\lambda\to 0$.
\end{lemma}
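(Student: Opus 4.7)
The plan is to decompose $\Phi_\lambda=\Phi_\lambda^{d,<}+\Phi_\lambda^{d,>}\mp\Phi_\lambda^{ex,<}\mp\Phi_\lambda^{ex,>}$ according to the direct/exchange and low/high momentum splits and to bound each of the four pieces separately. For any piece, Plancherel in $z_1$ gives $\lVert\Phi_\lambda^{j,\#}(r,\cdot)\rVert_{L^2_{z_1}}^2=\int|F_{z_1\to q_1}\Phi_\lambda^{j,\#}(r,q_1)|^2\,\dd q_1$. Since $F_{z_1\to q_1}\Phi_\lambda^{j,\#}(r,q_1)$ is the inverse $2$-dimensional Fourier transform in $p$ of $\widehat{\Phi_\lambda^{j,\#}}(p,q_1)$, one has the pointwise estimate $\sup_r|F_{z_1\to q_1}\Phi_\lambda^{j,\#}(r,q_1)|\leq\frac{1}{2\pi}\int|\widehat{\Phi_\lambda^{j,\#}}(p,q_1)|\,\dd p$, and the task reduces to estimating the $L^2_{q_1}$-norm of this $L^1_p$-quantity. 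By \eqref{phi_lambda_j.1}--\eqref{phi_lambda_j.2} one has $\widehat{\Phi_\lambda^{d,\#}}(p,q_1)=\sqrt{2}\lambda\,\chi_\#\,B_{T_c^1}(p,(q_1,\eta))\,\widehat{V^{1/2}\Psi_\lambda}(p,q_1)$, and the exchange analogue has the argument of $\widehat{V^{1/2}\Psi_\lambda}$ replaced by $((q_1,p_2),p_1)$.

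For the low-momentum direct piece $\Phi_\lambda^{d,<}$, the standard BCS computation in polar coordinates yields $\int_{p^2+q_1^2<2\mu}B_{T_c^1}(p,(q_1,\eta))\,\dd p\leq C\ln(\mu/T_c^1)$ uniformly in $q_1$ and in small $\eta$. Pulling $\sup_p|\widehat{V^{1/2}\Psi_\lambda}(p,q_1)|$ out of the $p$-integral, squaring, integrating in $q_1$, and invoking Lemma~\ref{lea:lp_prop} together with $\lambda\ln(\mu/T_c^1)=O(1)$ from Remark~\ref{rem:Tetato0}, one obtains $\lVert\Phi_\lambda^{d,<}\rVert_{L_1^\infty L_2^2}=O(1)$. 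For the high-momentum pieces $\Phi_\lambda^{j,>}$ with $j\in\{d,ex\}$, the bound \eqref{BT_bound} gives $\int_{p^2+q_1^2>2\mu}B_{T_c^1}(p,(q_1,\eta))^2\,\dd p\leq C'/(1+q_1^2)$. Cauchy--Schwarz in $p$ bounds $\sup_r|F_{z_1\to q_1}\Phi_\lambda^{j,>}(r,q_1)|^2$ by $C''\lambda^2/(1+q_1^2)$ times $\int|\widehat{V^{1/2}\Psi_\lambda}(\ast)|^2\,\dd p$ with $\ast=(p,q_1)$ (direct) or $((q_1,p_2),p_1)$ (exchange). Integrating in $q_1$, dropping the weight $1/(1+q_1^2)\leq 1$, and applying Plancherel (after relabelling integration variables in the exchange case) bounds the result by a multiple of $\lambda^2\lVert V^{1/2}\Psi_\lambda\rVert_2^2=O(\lambda^2)$, so that $\lVert\Phi_\lambda^{j,>}\rVert_{L_1^\infty L_2^2}=O(\lambda)$ for both $j$. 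In particular $\lVert\Phi_\lambda^d\rVert_{L_1^\infty L_2^2}=O(1)$ and $\lVert\Phi_\lambda^{ex,>}\rVert_{L_1^\infty L_2^2}=O(\lambda)$.

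For the remaining piece $\Phi_\lambda^{ex,<}$, only finiteness for fixed $\lambda$ is needed. Since the $p$-support is the bounded disk $|p|^2<2\mu-q_1^2$, a Cauchy--Schwarz argument using $\int_{<}B_{T_c^1}^2\,\dd p\leq C/T_c^1$ (finite for fixed $\lambda$) together with the sliced estimate $\int_{p_1^2+p_2^2+q_1^2<2\mu}|\widehat{V^{1/2}\Psi_\lambda}((q_1,p_2),p_1)|^2\,\dd p_1\,\dd p_2\leq C\sqrt{(2\mu-q_1^2)_+}$, obtained by integrating $p_1\in\mathbb{R}$ first via Lemma~\ref{lea:lp_prop} (where $p_1$ plays the role of the $L^2$-variable) and then $p_2$ over the bounded slice, yields a finite bound for each $\lambda$. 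Combining with the previous paragraphs gives $\lVert\Phi_\lambda\rVert_{L_1^\infty L_2^2}<\infty$.

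The main obstacle is the exchange term, whose Fourier representation contains $\widehat{V^{1/2}\Psi_\lambda}$ evaluated at the swapped arguments $((q_1,p_2),p_1)$; this precludes a direct application of Lemma~\ref{lea:lp_prop} as in the direct case and necessitates a careful relabelling of the Fourier integration variables before invoking Plancherel, so that the roles of "the first argument" and "the third argument" are exchanged. A secondary technical point is that the $>$-region estimate above requires $V^{1/2}\Psi_\lambda\in L^2(\mathbb{R}^3)$, i.e.\ $V\chi_{\tilde\Omega_1}\Phi_\lambda\in L^2$; under Assumption~\ref{asptn1} this can be verified by a short bootstrap using the eigenvalue equation $\chi_{\tilde\Omega_1}\Phi_\lambda=\lambda K_{T_c^1}^{-1}V\chi_{\tilde\Omega_1}\Phi_\lambda$, the smoothing action of $K_{T_c^1}^{-1}$, and Sobolev embedding in $\mathbb{R}^3$.
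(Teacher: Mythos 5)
Your treatment of the low-momentum pieces is fine: for $\Phi_\lambda^{d,<}$ the estimate $\sup_q\int_{p^2<2\mu}B_{T_c^1}(p,q)\,\dd p=O(\ln\mu/T_c^1)$ together with Lemma~\ref{lea:lp_prop} and $\lambda\ln(\mu/T_c^1)=O(1)$ indeed gives $O(1)$, and your $\Phi_\lambda^{ex,<}$ finiteness argument for fixed $\lambda$ also works. The genuine gap is in the high-momentum pieces $\Phi_\lambda^{d,>}$ and $\Phi_\lambda^{ex,>}$ — including the one for which the lemma actually asserts $O(\lambda)$. Your Cauchy--Schwarz in the full two-dimensional variable $p$, pairing $B_{T_c^1}^2$ against $\vert\widehat{V^{1/2}\Psi_\lambda}\vert^2$ and then dropping the weight and invoking Plancherel, requires $\lVert V^{1/2}\Psi_\lambda\rVert_{L^2(\BR^3)}<\infty$, i.e.\ $V\chi_{\tilde\Omega_1}\Phi_\lambda\in L^2$. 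This is not available under Assumption~\ref{asptn1}: $V$ is only assumed in $L^1\cap L^t$ for some $t>1$, possibly $t<2$, so $V\notin L^2$ is allowed, and then no $L^p$ (not even $L^\infty$) information on $\Phi_\lambda$ can produce $V\Phi_\lambda\in L^2$ by H\"older, since $\tfrac12=\tfrac1t+\tfrac1u$ has no admissible $u$ for $t<2$. Your proposed ``short bootstrap'' via the eigenvalue equation, smoothing of $K_{T_c^1}^{-1}$ and Sobolev embedding therefore cannot close this step: the obstruction is the local singularity of $V$ itself, not a lack of regularity of $\Phi_\lambda$. Note also that the obvious repair of pulling out $\sup_p\vert\widehat{V^{1/2}\Psi_\lambda}(p,q_1)\vert$ instead fails in the region $p^2+q_1^2>2\mu$, because $\int_{\BR^2}(1+p^2+q_1^2)^{-1}\dd p$ diverges; so the $>$ region genuinely needs more structure than $\lVert V\rVert_1$ and $\lVert\Psi_\lambda\rVert_2$ alone in your scheme.

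This is exactly where the paper's proof proceeds differently. For the direct part it does not split into $<$ and $>$ at all: it writes $\lVert\Phi_\lambda^d(r,\cdot)\rVert_2^2$ as a quadratic form, bounds it by the norm of a rank-one projection onto $\phi_{q_1}(r')=\frac{1}{2\pi}F_1B_{T_c^1}(r-r',(q_1,\eta))V^{1/2}(r')$, and so arrives at $\int B_{T_c^1}\,\widehat V(p-p')\,B_{T_c^1}\,\dd p\,\dd p'$; the convolution kernel $\widehat V\in L^\infty\cap L^{t'}$ then supplies the integrability in the high-momentum region via H\"older/Young, with no need for $V^{1/2}\Psi_\lambda\in L^2$. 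For the exchange part the paper applies the Schwarz inequality only in the one-dimensional variables $p_1,p_1'$, so that only the mixed norm $\lVert\widehat{V^{1/2}\Psi_\lambda}\rVert_{L_1^\infty L_2^2}\leq\frac{\lVert V\rVert_1^{1/2}}{2\pi}\lVert\Psi_\lambda\rVert_2$ of Lemma~\ref{lea:lp_prop} enters. To make your approach work you would have to replace your full-$p$ Cauchy--Schwarz by one of these devices (the projection/convolution trick for the direct term, and an argument keeping the exchange term's $L^2$ integration one-dimensional, e.g.\ via operator bounds of the type $\lVert B_T^{ex}\rVert$ used elsewhere in the paper), rather than assuming $V\Phi_\lambda\in L^2$.
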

To understand the asymptotics of $L_2$ the following result comes in handy.
\begin{lemma}\label{lea:Vphi_reg.1}
Let $\mu>0$, let $V$ satisfy Assumption~\ref{asptn1} and let $0<\lambda\leq \lambda_0$.
The function $(r,z)\mapsto V^{1/2}(r) \vert \Phi_\lambda(r_1,z_2,z_1)\vert $ is in $L^2(\BR^4)$.
Furthermore, as $\lambda\to0$, the $L^2(\BR^4)$-norms of the functions $V^{1/2}(r) \vert \Phi_\lambda^{>}(r_1,z_2,z_1)\vert$, $V^{1/2}(r) \vert \Phi_\lambda^{d,<}(r_1,z_2,z_1)\vert $ and $V^{1/2}(r) \vert \Phi_\lambda^{ex,<}(r_1,z_2,z_1)\vert $ are of order $O(\lambda)$, $O(\lambda^{-1/2})$, and $O(\lambda^{1/2})$, respectively.
\end{lemma}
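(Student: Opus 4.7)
The plan is to pass to the Fourier side and estimate each of the four pieces $\Phi_\lambda^{d, <}$, $\Phi_\lambda^{d, >}$, $\Phi_\lambda^{ex, <}$, $\Phi_\lambda^{ex, >}$ separately, using the eigenvalue equation \eqref{eval_eq} and the bounds on $B_{T_c^1}$ in \eqref{BT_bound}. The starting observation is that $\Phi_\lambda(r_1, z_2, z_1)$ does not depend on the $r_2$-variable of $V(r)$, so integrating out $r_2$ and renaming $z_2 \mapsto r_2$ yields
\begin{equation*}
\|V^{1/2}(r)|\Phi_\lambda(r_1, z_2, z_1)|\|_{L^2(\BR^4)}^2 = \int_{\BR^3} V_1(r_1) |\Phi_\lambda(r_1, r_2, z_1)|^2 \dd r_1 \dd r_2 \dd z_1,
\end{equation*}
where $V_1(r_1) := \int_{\BR} V(r_1, r_2) \dd r_2 \in L^1(\BR)$ has norm $\|V\|_1$. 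Plancherel in $(r_2, z_1)$ together with the triangle bound $|\tilde \Phi_\lambda(r_1, p_2, q_1)| \le (2\pi)^{-1/2} \int_{\BR} |\widehat{\Phi_\lambda}(p, q_1)| \dd p_1$ for the partial Fourier transform of $\Phi_\lambda$ in $(r_2, z_1)$ produces the master inequality
\begin{equation*}
\|V^{1/2}(r)|\Phi_\lambda(r_1, z_2, z_1)|\|_{L^2(\BR^4)}^2 \le \frac{\|V\|_1}{2\pi} \int_{\BR^2}\left(\int_{\BR}|\widehat{\Phi_\lambda}(p, q_1)| \dd p_1\right)^2 \dd p_2 \dd q_1,
\end{equation*}
and the same inequality holds with $\Phi_\lambda$ replaced by any of the four pieces.

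From the eigenvalue equation and \eqref{phi_lambda_j.1}--\eqref{phi_lambda_j.2}, $\widehat{\Phi_\lambda^d}(p, q_1) = \sqrt 2 \lambda B_{T_c^1}(p, (q_1, \eta)) \widehat{V^{1/2}\Psi_\lambda}(p, q_1)$ and $\widehat{\Phi_\lambda^{ex}}(p, q_1) = \sqrt 2 \lambda B_{T_c^1}(p, (q_1, \eta)) \widehat{V^{1/2}\Psi_\lambda}((q_1, p_2), p_1)$. For the high-momentum pieces I use $B_{T_c^1}\chi_> \le C(\mu)/(1 + p^2 + q_1^2)$ from \eqref{BT_bound}, Cauchy--Schwarz in $p_1$, the mixed-norm bounds of Lemma~\ref{lea:lp_prop} applied with either ordering of the Fourier variables (to accommodate the exchange case), and the integrability of $(1 + p_2^2 + q_1^2)^{-3/2}$ on $\BR^2$; these together bound the right-hand side of the master inequality by $O(\lambda^2)$, hence $\|V^{1/2}|\Phi_\lambda^>|\|_{L^2(\BR^4)} = O(\lambda)$.

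The low-momentum pieces carry the $1/T_c^1$ blow-up of $B_{T_c^1}$ on the Fermi sphere $p^2 + q_1^2 + \eta^2 = \mu$ and are the heart of the lemma. Since $\sup_{(p_2, q_1)} \int B_{T_c^1}^2 \chi_< \dd p_1 = \infty$, I cannot apply Cauchy--Schwarz pointwise in $(p_2, q_1)$; instead I integrate the singular kernel and $|\widehat{V^{1/2}\Psi_\lambda}|^2$ simultaneously in all three momenta. Using $\int B_{T_c^1}\chi_< \dd p_1 \lesssim \log(1/T_c^1)/\sqrt{\mu - p_2^2 - q_1^2 - \eta^2}$ below the Fermi sphere, polar integration in $(p_2, q_1)$ yields a factor of $\log^3(\mu/T_c^1)$; combined with the uniform bound $\|\widehat{V^{1/2}\Psi_\lambda}\|_\infty \le \|V\|_1^{1/2}/(2\pi)^{3/2}$ and the asymptotics $\log(\mu/T_c^1) \sim 1/(\lambda e_\mu)$ from Remark~\ref{rem:Tetato0}, this gives $\|V^{1/2}|\Phi_\lambda^{d, <}|\|_{L^2(\BR^4)}^2 = O(\lambda^{-1})$. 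For $\Phi_\lambda^{ex, <}$ the same argument would yield only $O(\lambda^{-1})$; the additional factor of $\lambda^2$ comes from the momentum-concentration Lemma~\ref{etato0}\eqref{i-q-range}, which via Lemma~\ref{lea:lp_prop} forces $|\widehat{V^{1/2}\Psi_\lambda}(P, Q_1)|$ to be concentrated on $|Q_1|/\sqrt\mu \lesssim (T_c^1/\mu)^\beta$ in a suitable $L^2$-sense. Splitting the $p_1$-integration at this scale and combining with the low-momentum estimate for $B_{T_c^1}$ yields $\|V^{1/2}|\Phi_\lambda^{ex, <}|\|_{L^2(\BR^4)}^2 = O(\lambda)$.

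The main obstacle is the low-momentum analysis, where the exponentially large $1/T_c^1 \sim e^{1/(\lambda e_\mu)}$ must be converted into an algebraic power of $\lambda$. This conversion rests on the logarithmic integration structure near the Fermi sphere and, for the exchange piece, on the momentum concentration provided by Lemma~\ref{etato0}.
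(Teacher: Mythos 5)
Your reduction and the treatment of three of the four pieces are essentially sound and parallel the paper's proof: the ``master inequality'' obtained by integrating out $r_2$ and using Plancherel plus the triangle inequality in $p_1$ is a valid (if slightly lossy) substitute for the paper's direct momentum-space rewriting, the high-momentum estimates via $B_{T_c^1}\chi_{>}\leq C(\mu)/(1+p^2+q_1^2)$ and Lemma~\ref{lea:lp_prop} give $O(\lambda^2)$ for the squared norms, and for $\Phi_\lambda^{d,<}$ your $\log^3(\mu/T_c^1)$ count is exactly the content of the paper's Lemma~\ref{lea:BB_asy}, yielding $O(\lambda^{-1})$. One slip there: the claimed uniform bound $\lVert\widehat{V^{1/2}\Psi_\lambda}\rVert_\infty\leq \lVert V\rVert_1^{1/2}/(2\pi)^{3/2}$ does not follow from the hypotheses (only $\sup_p|\widehat{V^{1/2}\Psi_\lambda}(p,q_1)|\leq \tfrac{\lVert V\rVert_1^{1/2}}{2\pi}\lVert F_2\Psi_\lambda(\cdot,q_1)\rVert_2$ holds, with the right-hand side merely square-integrable in $q_1$); this is harmless for the direct piece because the $\log^3$ bound is uniform in $q_1$, but you should use the mixed-norm version.

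The genuine gap is in $\Phi_\lambda^{ex,<}$. Your proposed mechanism---momentum concentration from Lemma~\ref{etato0}\eqref{i-q-range} plus splitting the $p_1$-integration at the scale $\sqrt{\mu}(T_c^1/\mu)^\beta$---cannot produce the required $O(\lambda)$ for the squared norm. After bounding $|\widehat{V^{1/2}\Psi_\lambda}((q_1,p_2),p_1)|\leq \tfrac{\lVert V\rVert_1^{1/2}}{2\pi}g(p_1)$ with $g(p_1)=\lVert F_2\Psi_\lambda(\cdot,p_1)\rVert_2$, concentration gives no smallness on the window $|p_1|\leq\sqrt{\mu}(T_c^1/\mu)^\beta$, and there the kernel is genuinely too large for pointwise or Schur-type estimates: e.g.\ on the set $|p_1|\leq\sqrt{T_c^1}$, $|p_2^2+q_1^2-\mu|\leq T_c^1$ one has $B_{T_c^1}\sim 1/T_c^1$, and an admissible $g$ concentrated at scale $\sqrt{T_c^1}$ makes $\int_{\BR^2}\bigl(\int B_{T_c^1}\chi_{<}\,g(p_1)\,\dd p_1\bigr)^2\dd p_2\,\dd q_1\gtrsim (T_c^1)^{-1/2}$, which is exponentially large in $1/\lambda$; similarly Cauchy--Schwarz in $p_1$ on the window costs $\int_{|p_1|\leq\kappa}\int_{\BR^2}B_{T_c^1}^2\chi_{<}\sim \kappa/T_c^1$. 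So no bound that only uses $g\in L^2$ with small tail plus pointwise bounds on $B_{T_c^1}$ can reach $O(\ln\mu/T_c^1)$. The paper instead keeps the full dependence on both one-dimensional momentum slots and uses the operator-norm bound $\sup_{|\xi|<cT}\lVert B_T^{ex,2}(\xi)\rVert=O(\ln\mu/T)$ (Lemma~\ref{lea:phiex_norm}, proved via the $D_{\mu_1,\mu_2}$ estimate of \cite[Lemma 6.5]{roos_bcs_2023}), together with $\eta(\lambda)=O(T_c^1(\lambda))$ from Lemma~\ref{etato0}\eqref{i-eta}; this $L^2\to L^2$ cancellation in the pair $(p_1,q_1)$, not visible at the level of pointwise kernels, is the missing ingredient your argument needs (concentration plays no role in the paper's proof of this lemma).
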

This suggests that the only possible origin for divergence in $L_2$ lies in contributions from $V^{1/2}(r) \vert \Phi_\lambda^{d,<}(r_1,z_2,z_1)|$.
In the proof of Lemma~\ref{lea:lambdato0} we shall show that the $L^2$ norm of this term indeed grows as $\lambda^{-1/2}$, resulting in the $1/\lambda$ divergence of $L_2$.
Furthermore, we need the following for the proof of Lemma~\ref{lea:eps_to_0}.
\begin{lemma}\label{lea:cont_bd}
Let $\mu>0$, let $V$ satisfy Assumption~\ref{asptn1} and let $0<\lambda\leq \lambda_0$.
Define the functions $g_0$, $g_+$ and $g_{-}$ on $\BR^2$ as
\begin{equation}\label{eq:defg0}
g_0(p_2,q_2):= \int_{\BR^2}\overline{\widehat{\Phi_\lambda}(p,q_1)}
\rlap{$\phantom{V} \chi_{\tilde\Omega_1}$}\widehat {V\phantom{\chi_{\Omega}}\Phi_\lambda}
(p_1,q_2,q_1)  \dd p_1 \dd q_1
\end{equation}
and
\begin{equation}
g_\pm (p_2,q_2):=\int_{\BR^2}\overline{\widehat \Phi_\lambda(p,q_1) } \Big[B_{T_c^1}^{-1}(p,q)
-B_{T_c^1}^{-1}(p,(q_1,\eta)) \Big]
\widehat \Phi_\lambda((p_1,\pm q_2),q_1) \dd p_1 \dd q_1.
\end{equation}
The functions $g_0$ and $g_\pm$ are continuous and bounded and $g_\pm (p_2,\eta)=0$ for all $p_2 \in \BR$.
\end{lemma}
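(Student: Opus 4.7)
The proof combines a Cauchy--Schwarz pairing in the variables $(p_1,q_1)$ with the eigenvalue equation \eqref{eval_eq}, which provides crucial pointwise control on $\widehat{\Phi_\lambda}$ through the factor $B_{T_c^1}(p,(q_1,\eta))$.

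For $g_0$, Cauchy--Schwarz in $(p_1,q_1)$ reduces boundedness to showing that the partial $L^2$-norms
\[
\sup_{p_2}\!\int|\widehat{\Phi_\lambda}(p_1,p_2,q_1)|^2\,dp_1\,dq_1 \quad\text{and}\quad \sup_{q_2}\!\int|\widehat{V\chi_{\tilde\Omega_1}\Phi_\lambda}(p_1,q_2,q_1)|^2\,dp_1\,dq_1
\]
are finite. For the first, I would substitute \eqref{eval_eq} to write $\widehat{\Phi_\lambda}(p,q_1)=2\lambda B_{T_c^1}(p,(q_1,\eta))G(p,q_1)$, where $G$ is a linear combination of partial Fourier transforms of $V\chi_{\tilde\Omega_1}\Phi_\lambda$ controlled in mixed $L^\infty_p L^2_{q_1}$ norm by Lemma~\ref{lea:lp_prop}, and then exploit the decay estimate $B_{T_c^1}(p,(q_1,\eta))\leq C\min\{(2T_c^1)^{-1},(1+p^2+q_1^2)^{-1}\}$ from \eqref{BT_bound} to perform the $(p_1,q_1)$-integration uniformly in $p_2$. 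The second supremum is obtained by taking the partial Fourier transform in $r_2$ and combining Plancherel with the $L^2$-bound $\|V^{1/2}\chi_{\tilde\Omega_1}\Phi_\lambda\|_2=1$ and $V\in L^1$. Continuity of $g_0$ then follows from dominated convergence with these same dominating functions.

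For $g_\pm$, the identity $g_\pm(p_2,\eta)=0$ is immediate, since the bracket $B_{T_c^1}^{-1}(p,q)-B_{T_c^1}^{-1}(p,(q_1,\eta))$ vanishes identically at $q_2=\eta$. For boundedness elsewhere, I would split the bracket into its two summands. In the summand involving $-B_{T_c^1}^{-1}(p,(q_1,\eta))$, applying \eqref{eval_eq} directly gives $B_{T_c^1}^{-1}(p,(q_1,\eta))\widehat{\Phi_\lambda}(p,q_1)=2\lambda G(p,q_1)$, and the resulting pairing is controlled exactly as in the $g_0$ case. In the summand involving $B_{T_c^1}^{-1}(p,q)$, I would instead apply \eqref{eval_eq} to the \emph{other} factor, writing $\widehat{\Phi_\lambda}((p_1,\pm q_2),q_1)=2\lambda B_{T_c^1}((p_1,\pm q_2),(q_1,\eta))G((p_1,\pm q_2),q_1)$, and combine this with the mean-value identity
\[
B_{T_c^1}^{-1}(p,q)-B_{T_c^1}^{-1}(p,(q_1,\eta))=(q_2-\eta)\int_0^1 \partial_{q_2}B_{T_c^1}^{-1}(p,(q_1,\eta+s(q_2-\eta)))\,ds,
\]
controlling $\partial_{q_2}B_{T_c^1}^{-1}$ by a polynomial in $(p,q)$ via a Lemma~\ref{bdiff}-type bound on derivatives of $K_T$. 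The decay of $B_{T_c^1}((p_1,\pm q_2),(q_1,\eta))$ supplies sufficient $q_2$-decay to compensate the polynomial growth of the bracket, after which Cauchy--Schwarz in $(p_1,q_1)$ closes the estimate. Continuity of $g_\pm$ then follows again by dominated convergence.

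The main technical difficulty is the absorption argument for $g_\pm$: the bracket $B_{T_c^1}^{-1}(p,q)-B_{T_c^1}^{-1}(p,(q_1,\eta))$ grows polynomially in both $(p,q)$, so $L^2$-integrability of $\widehat{\Phi_\lambda}$ by itself cannot close the estimate. The remedy is the eigenvalue equation \eqref{eval_eq}, which trades each $\widehat{\Phi_\lambda}$ for a product $2\lambda B_{T_c^1}\cdot G$, so that every $B_{T_c^1}^{-1}$ from the bracket is balanced by a $B_{T_c^1}$ coming from a wave function, leaving only bounded $G$-factors whose pairing is routine.
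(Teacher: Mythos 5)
Your high-level idea (use the eigenvalue equation \eqref{eval_eq} to trade each $\widehat{\Phi_\lambda}$ for $\lambda B_{T_c^1}\cdot(\text{Fourier data of }V\chi_{\tilde\Omega_1}\Phi_\lambda)$, note that $g_\pm(p_2,\eta)=0$ because the bracket vanishes) is the same as the paper's, but the quantitative bookkeeping has a genuine gap. Your global Cauchy--Schwarz in $(p_1,q_1)$ for $g_0$ requires the hyperplane bound $\sup_{q_2}\int_{\BR^2}|\widehat{V\chi_{\tilde\Omega_1}\Phi_\lambda}(p_1,q_2,q_1)|^2\,\dd p_1\dd q_1<\infty$, and the justification you offer (partial Fourier transform in $r_2$, Plancherel, $V\in L^1$, $\|V^{1/2}\chi_{\tilde\Omega_1}\Phi_\lambda\|_2=1$) does not deliver it: running that computation one is left with $\sup_{r_1}\int_\BR V(r_1,r_2)\,\dd r_2$ (or the $L^2(\BR)$-norm of $r_1\mapsto\int V(r_1,r_2)\dd r_2$), which is \emph{not} controlled by Assumption~\ref{asptn1}; e.g.\ an admissible radial $V$ behaving like $|r|^{-\alpha}$ near the origin with $3/2<\alpha<2$ lies in $L^1\cap L^t$ for some $t>1$ and has $|\cdot|V\in L^1$, yet $\int V(r_1,r_2)\dd r_2\sim|r_1|^{1-\alpha}$ is neither bounded nor square integrable. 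The only norm of $\widehat{V\chi_{\tilde\Omega_1}\Phi_\lambda}$ actually available is the mixed norm $\sup_p\|\cdot(p,\cdot)\|_{L^2_{q_1}}$ of Lemma~\ref{lea:lp_prop}; this is why the paper does \emph{not} separate the two factors, but keeps the kernel $\lambda B_{T_c^1}(p,(q_1,\eta))\leq C/(1+p_1^2+q_1^2)$ between them, applies Cauchy--Schwarz only in $q_1$ (respectively in $q_1,p_1$ for the exchanged terms), and lets the kernel's decay in $p_1$ produce the convergent $\int\dd p_1/(1+p_1^2)$.

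The same accounting problem breaks your treatment of $g_\pm$. Your closing principle, ``every $B^{-1}$ from the bracket is balanced by a $B$ coming from a wave function, leaving only bounded $G$-factors whose pairing is routine,'' is not sufficient: if the $B$-decay is fully spent cancelling the bracket, what remains is a pairing of two $\widehat{V\chi_{\tilde\Omega_1}\Phi_\lambda}$-type factors over the $(p_1,q_1)$-plane against a merely bounded kernel, and such a pairing is not finite from the available information (these factors are controlled only in $L^\infty_pL^2_{q_1}$ and need not even lie in $L^2(\BR^3)$ unless $V\in L^2$). The point you are missing is the one encoded in the paper's estimate for $L^\pm$: inserting \eqref{eval_eq} into \emph{both} factors gives the product $B_{T_c^1}(p,(q_1,\eta))\,[B_{T_c^1}^{-1}(p,q)-B_{T_c^1}^{-1}(p,(q_1,\eta))]\,B_{T_c^1}((p_1,\pm q_2),(q_1,\eta))$, and since the bracket is $\leq C(1+p^2+q^2)$ by Lemma~\ref{KT-Laplace} while the two $B$'s decay like $(1+p^2+q_1^2)^{-1}$ and $(1+p_1^2+q^2)^{-1}$ by \eqref{BT_bound}, the product still retains the decay $\leq 2C/(1+p_1^2+q_1^2)$ in the integration variables; only then does the $q_1$-Schwarz plus $p_1$-integration argument close, uniformly in $(p_2,q_2)$. (Your splitting of the bracket into its two summands also destroys the cancellation at $q_2=\eta$ for each piece separately, which is harmless for boundedness but is another sign that the bracket should be kept intact.) Finally, for continuity, plain dominated convergence is not enough for the shift sitting \emph{inside} $\widehat{V\chi_{\tilde\Omega_1}\Phi_\lambda}$: one needs the device $W_\eps(r)=V(r)(e^{-i\eps r_2}-1)$ with $\|W_\eps\|_1\leq|\eps|\,\||\cdot|V\|_1$ and $\|\Phi_\lambda\|_{L_1^\infty L_2^2}<\infty$ from Lemma~\ref{lea:phi_infty}, i.e.\ exactly where Assumption~\ref{asptn1}(iii) enters.
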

The proofs of these three Lemmas are given in Sections~\ref{sec:phi_infty} -- \ref{sec:pf_cont_bd}, which may be skipped at first reading.

\subsection{Proof of Lemma~\ref{etato0}}\label{sec:etato0}
\begin{proof}[Proof of Lemma~\ref{etato0}]
Recall the operators $ A_T^0$, and $A_T^1$ from Section~\ref{sec:basic_properties} and let $a_T^j=\sup \sigma(A_T^j)$.
In the proof of \cite[Theorem 1.7]{roos_bcs_2023} it was shown that $a_T^0\leq a_T^1$ for all $T>0$.
Recall the decomposition of $A_T^1(q_2)$ into $A_T^0(q_2)$ and $G_T(q_2)$ in \eqref{AT1_decomposition}.
The operator norm of $G_T(q_2)$ is bounded uniformly in $T$ and $q_2$ according to \cite[Lemma 6.1]{roos_bcs_2023}.
Recall that  $\sqrt{2}\chi_{\tilde\Omega_1} \Psi_\lambda $ is a normalized eigenvector of $A^1_{T_c^1(\lambda)}(\eta(\lambda))$ and note that $\iota \sqrt{2}\chi_{\tilde\Omega_1} \Psi_\lambda =\Psi_\lambda $, where $\iota$ is the isometry extending a function defined on $\tilde\Omega_1$ to $\BR^3$ symmetrically under $(r_1,z_1)\to -(r_1,z_1)$, see \eqref{iota}.
With the asymptotics $T_c^1(\lambda)\to 0$ for $\lambda\to 0$ and $a_T^0=e_\mu \ln(\mu/T)+O(1)$ for $T\to 0 $ discussed in Remark~\ref{rem:Tetato0}, we have for $\lambda \to 0$
\begin{equation}\label{pf_etato0_1}
e_\mu \ln \mu/T_c^1(\lambda)+O(1)= a_{T_c^1(\lambda)}^0 \leq a_{T_c^1(\lambda)}^1 = \langle \Psi_\lambda,  A_{T_c^1(\lambda)}^0(\eta(\lambda )) \Psi_\lambda \rangle  +O(1)
\end{equation}
For $q\in \BR^2$ let $B_{T}(\cdot,q)$ denote the operator on $L^2(\BR^2)$ which acts as multiplication by $B_{T}(p,q)$ (defined in \eqref{BT}) in momentum space.
Note that
\begin{equation}
 \langle \Psi_\lambda,  A_{T_c^1(\lambda)}^0(\eta(\lambda )) \Psi_\lambda \rangle  = \int_\BR \langle F_2 \Psi_\lambda(\cdot , q_1) , V^{1/2}B_{T_c^1(\lambda)}(\cdot, (q_1,\eta(\lambda))) V^{1/2} F_2 \Psi_\lambda (\cdot, q_1) \rangle \dd q_1
\end{equation}
According to \cite[Lemma 6.8]{roos_bcs_2023}, there is a constant $C(\mu,V)$, such that for all $q\in \BR^2$ and $\psi\in L^2_{\rm s}(\BR^2)$ with $\lVert \psi \rVert_2=1$
\begin{equation}\label{supdeltaVBV}
\langle \psi, V^{1/2} B_{T}(\cdot, q) V^{1/2} \psi \rangle
\leq\langle \psi, O_\mu  \psi\rangle \ln \left(\min\left\{\frac{\sqrt{\mu}}{\vert q\vert},\frac{\mu}{T}\right\}\right)\chi_{2<\min\{\mu/T,\sqrt{\mu}/\vert q \vert\}}+ C(\mu,V).
\end{equation}
In combination, we have for $\lambda\to 0$
\begin{equation}\label{phit-aspt-est}
e_\mu \ln \mu/T_c^1(\lambda) \leq \int_{\vert q_1\vert<\sqrt{\mu}/2} \langle  F_2 \Psi_\lambda(\cdot , q_1), O_\mu  F_2 \Psi_\lambda(\cdot , q_1)\rangle  \ln \left(\min\left\{\frac{\sqrt{\mu}}{\sqrt{q_1^2+\eta(\lambda )^2}},\frac{\mu}{T_c^1(\lambda)}\right\}\right)\dd q_1 +O(1)
\end{equation}
We will use this to prove the three parts of the claim.

\eqref{i-eta}
We want to prove a bound on $\eta(\lambda)$.
Since $e_\mu=\sup \sigma(O_\mu)$, we can bound
\[
\langle   F_2 \Psi_\lambda(\cdot , q_1), O_\mu  F_2 \Psi_\lambda(\cdot , q_1)\rangle  \leq  e_\mu \lVert F_2 \Psi_\lambda(\cdot, q_1)\rVert_2^2.
\]
Moreover, clearly $\ln \left(\min\left\{\frac{\sqrt{\mu}}{\sqrt{q_1^2+\eta(\lambda )^2}},\frac{\mu}{T_c^1(\lambda)}\right\}\right) \leq\ln(\sqrt{\mu}/\eta(\lambda)).$
By \eqref{phit-aspt-est} and since $\lVert F_2\Psi_\lambda \rVert_2=1$, there is a constant $c$ such that $e_\mu \ln(\mu /T_c^1(\lambda))\leq e_\mu \ln(\sqrt{\mu}/\eta(\lambda )) +c$ for small $\lambda$.
In particular, $\vert\eta(\lambda)\vert \leq \frac{\exp(c/e_\mu)}{\sqrt{\mu}} T_c^1(\lambda)$, i.e. $\eta(\lambda )=O(T_c^1(\lambda))$.

\eqref{i-Pphi}
We want to bound $\lVert \BP^\perp F_2 \Psi_\lambda  \rVert$.
Denote the ratio of the second highest and the highest eigenvalue of $O_\mu$ by $\alpha$, where $\alpha<1$ by Assumption~\ref{asptn1}\eqref{aspt.5}.
Then
\begin{multline}
\int_\BR \langle   F_2 \Psi_\lambda(\cdot , q_1), O_\mu  F_2 \Psi_\lambda(\cdot , q_1)\rangle \dd q_1  \leq e_\mu \left( \lVert \BP  F_2 \Psi_\lambda\rVert^2 + \alpha \lVert \BP^\perp  F_2 \Psi_\lambda\rVert^2  \right)\\
=e_\mu \left(\lVert  F_2 \Psi_\lambda \rVert^2-(1-\alpha) \lVert \BP^\perp  F_2 \Psi_\lambda \rVert^2  \right)
\end{multline}
Therefore, by \eqref{phit-aspt-est}
\begin{equation}
\ln \mu/T_c^1(\lambda)\leq \left(1-(1-\alpha)\lVert \BP^\perp F_2 \Psi_\lambda\rVert^2 \right)\ln \mu/T_c^1(\lambda)  +O(1)
\end{equation}
for $\lambda \to 0$.
This means that $\lVert \BP^\perp F_2 \Psi_\lambda  \rVert^2 =O(1/\ln \mu/T_c^1(\lambda))$.
According to Remark~\ref{rem:Tetato0}, $\lim_{\lambda\to0} \lambda \ln \mu/T_c^1(\lambda)=e_\mu^{-1}$ and thus $\lVert \BP^\perp F_2 \Psi_\lambda  \rVert^2 =O(\lambda)$.

\eqref{i-q-range}
In this part, we bound $\lVert \BQ_\beta^\perp F_2 \Psi_\lambda  \rVert$.
Let
\[
\epsilon(\lambda)=\lVert \BQ_\beta^\perp F_2 \Psi_\lambda \rVert^2=\int_{\BR^3} \vert F_2 \Psi_\lambda(r,q_1)\vert^2\chi_{\vert q_1\vert>\sqrt{\mu}\left(\frac{T_c^1(\lambda)}{\mu}\right)^\beta} \dd r \dd q_1 .
\]
By \eqref{phit-aspt-est}, we have for small $\lambda$
\begin{equation}
e_\mu\ln \mu/T_c^1(\lambda) \leq  (1-\epsilon(\lambda ))e_\mu\ln \mu/T_c^1(\lambda)+\epsilon(\lambda) e_\mu \ln \frac{\mu^\beta}{T_c^1(\lambda)^\beta} +C
\end{equation}
for some constant $C$.
Hence
\begin{equation}
\epsilon(\lambda)\leq \frac{C}{(1-\beta)e_\mu \ln  \mu/T_c^1(\lambda) } = O(\lambda)
\end{equation}
where we used $\lim_{\lambda\to0} \lambda \ln \mu/T_c^1(\lambda)=e_\mu^{-1}$ (Remark~\ref{rem:Tetato0}) in the last step.
\end{proof}

\subsection{Proof of Lemma~\ref{lea:phi_infty}}\label{sec:phi_infty}
\begin{proof}[Proof of Lemma~\ref{lea:phi_infty}]
The goal is to prove $\lVert \Phi_\lambda \rVert_{L_1^\infty L_{2}^2} <\infty$, as well as $\lVert \Phi_\lambda^d \rVert_{L_1^\infty L_{2}^2}=O(1)$ and $\lVert\Phi_{\lambda}^{ex,>}\rVert_{L_1^\infty L_{2}^2}=O(\lambda)$ as $\lambda\to 0$.
If we show $\lVert \Phi_\lambda^{d} \rVert_{L_1^\infty(\BR^2) L_{2}^2(\BR)} <\infty$ and $\lVert \Phi_\lambda^{ex} \rVert_{L_1^\infty(\BR^2) L_{2}^2(\BR)} <\infty$, the Schwarz inequality implies $\lVert \Phi_\lambda \rVert_{L_1^\infty(\BR^2) L_{2}^2(\BR)} <\infty$ .

We shall first prove that $\lVert \Phi_\lambda^{d} \rVert_{L_1^\infty L_{2}^2} $ is finite and of order $O(1)$ for $\lambda\to 0$.
Using the definition of $\Phi_\lambda^d$ \eqref{phi_lambda_j.1} we have
\begin{multline}\label{l2.33}
\lVert \Phi_\lambda^d(r,\cdot)\rVert_2^2 \\
=2\lambda^2 \int_{\BR^5} \overline{\widehat{V^{1/2}\Psi_\lambda}(p',q_1)} B_{T_c^1}(p',(q_1,\eta)) \frac{e^{i (p-p')\cdot r}}{(2\pi)^2} B_{T_c^1}(p,(q_1,\eta))\widehat{V^{1/2}\Psi_\lambda}(p,q_1)\dd p \dd p' \dd q_1\\
\leq 2\lambda^2\sup_{q_1\in \BR}\sup_{\psi\in L^2(\BR^2),\lVert \psi \rVert_2=1} \int_{\BR^4} \overline{\widehat{V^{1/2}\psi}(p')} B_{T_c^1}(p',(q_1,\eta)) \frac{e^{i (p-p')\cdot r}}{(2\pi)^2} B_{T_c^1}(p,(q_1,\eta))\widehat{V^{1/2}\psi}(p)\dd p \dd p'
\end{multline}
For fixed $r$, the latter integral is the quadratic form corresponding to the projection onto the function $\phi_{q_1}(r')=\frac{1}{2\pi}F_1 B_{T_c^1}(r-r',(q_1,\eta))V^{1/2}(r')$.
Hence, taking the supremum over $\psi$, \eqref{l2.33} equals
\begin{equation}
2\lambda^2\sup_{q_1\in \BR} \lVert \phi_{q_1}\rVert_2^2=2\lambda^2\sup_{q_1\in \BR}\int_{\BR^4} \frac{e^{i (p-p')\cdot r}}{(2\pi)^3} B_{T_c^1}(p,(q_1,\eta))\widehat{V}(p-p')B_{T_c^1}(p',(q_1,\eta)) \dd p \dd p'.
\end{equation}
We split the integration into $p^2>2\mu, p^2<2\mu$ and $p'^2>2\mu, p'^2<2\mu$.
Using the upper bounds on $B_T$ stated in \eqref{BT_bound} leads to the bound
\begin{multline}\label{l2.34}
\lVert \Phi_\lambda^d(r,\cdot)\rVert_2^2\leq \frac{2 \lambda^2}{(2\pi)^3} \Bigg[\lVert \widehat{V}\rVert_\infty \sup_{q_1} \Bigg( \int_{\BR^2} B_{T_c^1}(p,(q_1,\eta)) \chi_{p^2<2\mu} \dd p \Bigg)^2\\
+ 2\sup_{q_1} \int_{\BR^4} B_{T_c^1}(p,(q_1,\eta)) \chi_{p^2<2\mu} |\widehat{V}(p-p')| \frac{C}{1+p'^2}\dd p \dd p'\\
+ \int_{\BR^4}\frac{C}{1+p^2} |\widehat{V}(p-p')| \frac{C}{1+p'^2}\dd p \dd p'\Bigg]
\end{multline}
for a constant $C$ independent of $\lambda$.
We start by considering the first term in the square bracket.
Note that $\lVert \widehat{V}\rVert_\infty<\frac{\lVert V\rVert_1}{2\pi}<\infty$.
For fixed $T>0$, the function $ B_{T}(p,q)$ is bounded, hence the term is finite for fixed $\lambda$.
For $T\to0$ we have
\begin{equation}\label{l2.22}
\sup_{q\in \BR^2}  \int_{\BR^2} B_{T}(p,q) \chi_{p^2<2\mu} \dd p =O(\ln \mu/T).
\end{equation}
To see this, we first apply the inequality \cite[(6.1)]{hainzl_boundary_2023}
\begin{equation}\label{BT-ineq}
B_T(p,q)\leq \frac{1}{2}(B_T(p+q,0)+B_T(p-q,0)).
\end{equation}
This gives the upper bound $\sup_{q\in \BR^2}  \int_{\BR^2} B_{T}(p,0) \chi_{(p-q)^2<2\mu} \dd p $.
The vector $q$ shifts the disk-shaped domain of integration, but does not change its size.
In particular, the contribution with $p^2<2\mu$ is bounded above by $ \int_{\BR^2} B_{T}(p,0) \chi_{p^2<2\mu} \dd p =O(\ln \mu/T) $ \cite[Proposition 3.1]{henheik_universality_2023} while the contribution with $p^2>2\mu$ is uniformly bounded in $T$ since $B_T(p,0) \chi_{p^2>2\mu}\leq C(\mu)/(1+p^2)$ by \eqref{BT_bound}.
Since for $\lambda\to 0$ we have $\ln \mu/T_c^1(\lambda)=O(1/\lambda)$ by Remark~\ref{rem:Tetato0}, the first term in the square bracket in \eqref{l2.34} is of order $1/\lambda^2$ as $\lambda\to0$.
For the second term in the square bracket we use H\"older's inequality in $p'$.
By assumption, $V$ is in $L^{t}(\BR^2)$ for some $t>0$, thus by the Hausdorff-Young inequality we have $\widehat{V}\in L^{t'}$ where $1=1/t'+1/t$.
Hence, the second term is bounded by
\begin{equation}
2\sup_{q_1} \int_{\BR^4} B_{T_c^1}(p,(q_1,\eta)) \chi_{p^2<2\mu}\dd p  \lVert\widehat{V}\rVert_{t'} \left \lVert \frac{C}{1+|\cdot|^2}\right \rVert_{L^{t}(\BR^2)},
\end{equation}
which is finite for fixed $\lambda$ and of order $O(1/\lambda)$ for $\lambda\to0$ by \eqref{l2.22}.
Using Young's inequality, one sees that the third term in the square bracket is bounded.
Taking into account the factor $\lambda^2$ in front of the square bracket, we conclude that $\lVert \Phi_\lambda^d(r,\cdot)\rVert_2^2=O(1)$ uniformly in $r$.

We shall now show that for fixed $\lambda$, $\lVert \Phi^{ex}_\lambda\rVert_{L_1^\infty L_{2}^2}<\infty$ and $\lVert \Phi^{ex,>}_\lambda\rVert_{L_1^\infty L_{2}^2} =O(\lambda)$ as $\lambda\to 0$.
We have
\begin{multline}\label{phi_lambda_j.3}
\lVert \Phi_\lambda^{ex}(r,\cdot)\rVert_2^2=2\lambda^2 \int_{\BR^{2d+1}}\overline{ \widehat{V^{1/2}\Psi_\lambda}(( q_1,p_2'), p_1')}B_{T_c^1}(p',(q_1,\eta)) \frac{e^{i (p-p')\cdot r }}{(2\pi)^{d}} B_{T_c^1}(p,(q_1,\eta))\\
\times \widehat{V^{1/2}\Psi_\lambda}(( q_1,p_2), p_1) \dd p \dd p' \dd q_1
\end{multline}
Similarly, we get an expression for $\lVert \Phi^{ex,>}_\lambda(r,\cdot)\rVert_2^2$ if we multiply the above integrand by the characteristic functions $\chi_{p^2+q_1^2>2\mu} \chi_{p'^2+q_1^2>2\mu}$.
Using the bounds for $B_T$ in \eqref{BT_bound},  we bound $\lVert \Phi_\lambda^{ex}\rVert_{L_1^\infty L_{2}^2}^2$ and $\lVert \Phi^{ex,>}_\lambda\rVert_{L_1^\infty L_{2}^2}^2$ above by
\begin{equation}
C \lambda^2 \int_{\BR^{2d+1}}\vert \overline{ \widehat{V^{1/2}\Psi_\lambda}(( q_1,p_2'), p_1')}\vert\frac{1}{1+p'^2+q_1^2}\frac{1}{1+p^2+q_1^2}\vert \widehat{V^{1/2}\Psi_\lambda}((q_1,p_2),p_1)\vert \dd p \dd p' \dd q_1
\end{equation}
where the constant $C$ depends on $\mu$ and $\lambda$ for the bound on $\lVert \Phi^{ex}_\lambda\rVert_{L_1^\infty L_{2}^2}^2$, but is independent of $\lambda$ for the bound on $\lVert \Phi^{ex,>}_\lambda\rVert_{L_1^\infty L_{2}^2}^2$.
Using the Schwarz inequality in $p_1$ and $p_1'$ and then the bound on the mixed Lebesgue norm in Lemma~\ref{lea:lp_prop} we get the upper bound
\begin{multline}\label{phi_lambda_j.4}
C\lambda^2 \lVert \widehat{V^{1/2}\Psi_\lambda} \rVert_{L_1^\infty L_{2}^2}^2 \int_{\BR^{2d+1}}\left(\int_\BR \frac{1}{(1+p'^2+q_1^2)^2}\dd p'_1\right)^{1/2}\left(\int_\BR \frac{1}{(1+p^2+q_1^2)^2}\dd p_1\right)^{1/2} \dd p_2 \dd p'_2 \dd q_1\\
\leq \tilde C \lambda^2 \lVert V\rVert_1 \lVert \Psi_\lambda \rVert_2^2
\end{multline}
Therefore, $\lVert \Phi^{ex}_\lambda\rVert_{L_1^\infty L_{2}^2}$ is finite and $\lVert \Phi^{ex,>}_\lambda\rVert_{L_1^\infty L_{2}^2}=O(\lambda)$.
\end{proof}

\subsection{Proof of Lemma~\ref{lea:Vphi_reg.1}}
\begin{proof}[Proof of Lemma~\ref{lea:Vphi_reg.1}]
The goal is to show that the function $(r,z)\mapsto V^{1/2}(r) \vert \Phi_\lambda(r_1,z_2,z_1)\vert $ is in $L^2(\BR^4)$ and that for $\lambda \to 0$ the $L^2(\BR^4)$-norms of the functions $V^{1/2}(r) \vert \Phi_\lambda^{>}(r_1,z_2,z_1)\vert$, $V^{1/2}(r) \vert \Phi_\lambda^{d,<}(r_1,z_2,z_1)\vert $ and $V^{1/2}(r) \vert \Phi_\lambda^{ex,<}(r_1,z_2,z_1)\vert $ are of order $O(\lambda)$, $O(\lambda^{-1/2})$, and $O(\lambda^{1/2})$, respectively.

By the Schwarz inequality, it suffices to prove that for $j\in\{d,ex\}$ and $\#\in\{<,>\}$ the integrals $\int_{\BR^4} V(r) \vert \Phi_\lambda^{j,\#} (r_1,z_2,z_1)\vert^2 \dd r \dd z$  are finite for all $\lambda_0\geq \lambda>0$ and that as $\lambda\to0$ we have $\int_{\BR^4} V(r) \vert \Phi_\lambda^{j,>}(r_1,z_2,z_1)\vert^2 \dd r \dd z=O(\lambda^2)$ for $j\in \{d,ex\}$, $\int_{\BR^4} V(r) \vert \Phi_\lambda^{d,<}(r_1,z_2,z_1)\vert^2 \dd r \dd z=O(\lambda^{-1})$ and $\int_{\BR^4} V(r) \vert \Phi_\lambda^{ex,<}(r_1,z_2,z_1)\vert^2 \dd r \dd z=O(\lambda)$.

Using the definitions of the different $\Phi_\lambda^{j,\#}$ (see \eqref{phi_lambda_j.1} and \eqref{phi_lambda_j.2}) one can rewrite for $\#\in \{<,>\}$
\begin{multline}\label{L3_12.6}
\int_{\BR^4}V(r)|\Phi_\lambda^{d,\#}(r_1,z_2,z_1) |^2 \dd r \dd z=
2\lambda^2 \int_{\BR^4} \widehat{V}(p_1-p_1',0) B_{T_c^1}((p_1',p_2),(q_1,\eta))\overline{\widehat{ V^{1/2}\Psi_\lambda}( p_1',p_2, q_1) } \\
\times B_{T_c^1}(p,(q_1,\eta))\widehat{ V^{1/2}\Psi_\lambda}(p,q_1) \chi_{p^2+q_1^2\#2\mu}\chi_{p_1'^2+p_2^2+q_1^2\#2\mu} \dd p_1 \dd p_1' \dd p_2 \dd q_1
\end{multline}
and
\begin{multline}\label{L3_12.8}
\int_{\BR^4}V(r)|\Phi_\lambda^{ex,\#}(r_1,z_2,z_1) |^2 \dd r \dd z
=2 \lambda^2 \int_{\BR^4} \widehat{V}(p_1-p_1',0) B_{T_c^1}((p_1',p_2),(q_1,\eta))\overline{\widehat{ V^{1/2}\Psi_\lambda}(q_1,p_2,p_1')}  \\
\times B_{T_c^1}(p,(q_1,\eta))\widehat{ V^{1/2}\Psi_\lambda}(q_1,p_2, p_1) \chi_{p^2+q_1^2\#2\mu}\chi_{p_1'^2+p_2^2+q_1^2\#2\mu} \dd p_1 \dd p_1' \dd p_2 \dd q_1.
\end{multline}
For $\Phi_\lambda^{d,>}$, with the aid of the bound on $B_T$ in \eqref{BT_bound} and the estimate for mixed Lebesgue norms in Lemma~\ref{lea:lp_prop} the expression is bounded by
\begin{multline}
C \lambda^2 \lVert V \rVert_1 \int_{\BR^4} \frac{1}{1+p_1'^2+p_2^2}\frac{1}{1+p_1^2+p_2^2} \lVert \widehat{V^{1/2}\Psi_\lambda}(\cdot, q_1)\rVert_\infty^2  \dd q_1 \dd p_1' \dd p_1 \dd p_2 \\
\leq \tilde C \lambda^2 \lVert V \rVert_1^2 \lVert  \Psi_\lambda\rVert_2^2  <\infty
\end{multline}
where the constants $C,\tilde C$ depend only on $\mu$.
For $\Phi_\lambda^{ex,>}$  we use the bound on $B_T$ in \eqref{BT_bound} and the Schwarz inequality in $p_1$ and $p_1'$ to bound \eqref{L3_12.8} by
\begin{equation}
C \lambda^2 \lVert V\rVert_1\int_{\BR^2} \left \lVert \frac{1}{1+|\cdot |^2+p_2^2+q_1^2}\right\rVert_{L^2(\BR)}^2 \dd p_2 \dd q_1 \lVert \widehat{V^{1/2}\Psi_\lambda}\rVert_{L_p^\infty L_q^2}^2 \leq \tilde C  \lambda^2 \lVert V\rVert_1^2\lVert\Psi_\lambda \rVert_2^2
\end{equation}
where we used the estimate for mixed Lebesgue norms from Lemma~\ref{lea:lp_prop} in the second step.
Again, the constants $C,\tilde C$ depend only on $\mu$.

For $\Phi_\lambda^{d,<}$ we bound \eqref{L3_12.6} above by
\begin{multline}\label{phid<_asy}
\frac{\lVert V\rVert_1}{\pi} \lambda^2 \int_{\BR^4} B_{T_c^1}(p,(q_1,\eta))B_{T_c^1}((p_1',p_2),(q_1,\eta))\lVert \widehat{V^{1/2}\Psi_\lambda} (\cdot,q_1)\rVert_\infty^2  \chi_{p^2+q_1^2<2\mu}\chi_{p_1'^2+p_2^2+q_1^2<2\mu}\dd p \dd p_1' \dd q_1\\
\leq \frac{\lVert V\rVert_1^2}{4\pi^3} \lambda^2 \sup_{q_1\in \BR} \int_{\BR^3} B_{T_c^1}(p,(q_1,\eta))B_{T_c^1}((p_1',p_2),(q_1,\eta)) \chi_{p^2+q_1^2<2\mu}\chi_{p_1'^2+p_2^2+q_1^2<2\mu}\dd p \dd p_1'
\end{multline}
where we used the bound on mixed Lebesgue norms from Lemma~\ref{lea:lp_prop} and $\lVert \Psi_\lambda\rVert_2=1$ in the second step.
For fixed $\lambda$ this is finite because $B_{T_c^1}$ is a bounded function.
For $\lambda\to 0$ the first part of the following Lemma together with the weak coupling asymptotics of $T_c^1$ stated in Remark~\ref{rem:Tetato0} imply that this is of order $O(\lambda^{-1})$.
\begin{lemma}\label{lea:BB_asy}
Let $\mu,C>0$. For $T\to 0$ we have
\begin{equation}
 \sup_{q,q'\in \BR^2}\int_{\BR^3} B_{T}(p,q)
B_{T}((p'_1,p_2),q')\dd p_1 \dd p_1' \dd p_2
=O(\ln \mu/T)^3.
\end{equation}
Furthermore, for every $0<\delta_1<\mu$ there is a $\delta_2>0$ such that for $T\to 0$
\begin{multline}
 \sup_{|q|,|q'|<\delta_2}\int_{\BR^3}(1-\chi_{\mu-\delta_1<p_2^2<\mu+\delta_1}\chi_{p_1^2<4\delta_1}\chi_{p_1'^2<4\delta_1}) B_{T}(p,q)
B_{T}((p'_1,p_2),q') \dd p_1 \dd p_1' \dd p_2\\
=O(\ln \mu/T)^{5/2}.
\end{multline}
\end{lemma}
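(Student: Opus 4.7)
The plan is to reduce both parts to estimates on the translation-invariant kernel $\tilde B(p):=B_T(p,0)$ by exploiting the pointwise inequality
\[
B_T(p,q)\leq \tfrac{1}{2}\bigl(\tilde B(p+q)+\tilde B(p-q)\bigr)
\]
from \eqref{BT-ineq}. Expanding the product $B_T(p,q)\,B_T((p_1',p_2),q')$ into four terms and changing variables $p_1\to p_1\mp q_1$, $p_2\to p_2\mp q_2$, $p_1'\to p_1'\mp q_1'$ in each, each summand reduces to
\[
\int_{\BR^3}\tilde B(p_1,p_2)\,\tilde B(p_1',p_2+\Delta)\,\dd p_1\,\dd p_1'\,\dd p_2,
\]
where $\Delta\in\BR$ depends on the chosen sign combination. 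Setting $f(p_2):=\int_{\BR}\tilde B(p_1,p_2)\,\dd p_1$, the Cauchy--Schwarz inequality in $p_2$ bounds this by $\lVert f\rVert_2^2$, uniformly in $\Delta$, thereby eliminating the suprema over $q,q'$.

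For the first part I would carry out the standard resonance analysis of $\tilde B(p_1,p_2)\leq 1/\max\{|p_1^2+p_2^2-\mu|,2T\}$, splitting the $p_1$-integration into the on-resonance zone $|p_1^2-(\mu-p_2^2)|\leq 2T$ and its complement. The result is $f(p_2)\leq C\ln(\mu/T)/\sqrt{|\mu-p_2^2|}$ for $|\mu-p_2^2|\geq 2T$ and $p_2^2<\mu$, no logarithm for $p_2^2>\mu$, and $f(p_2)\leq C/\sqrt{T}$ on the narrow band $|\mu-p_2^2|<2T$, with the decay $f(p_2)\leq C/|p_2|$ at infinity. Squaring, the dominant contribution comes from $p_2^2<\mu-2T$, giving $(\ln\mu/T)^2\int\dd p_2/(\mu-p_2^2)=O((\ln\mu/T)^3)$, while the remaining regions contribute $O((\ln\mu/T)^2)$. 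This yields $\lVert f\rVert_2^2=O((\ln\mu/T)^3)$.

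For the second part I would first choose $\delta_2<\sqrt{\delta_1}/2$ so that, after the shifts, the conditions $p_1^2<4\delta_1$, $p_1'^2<4\delta_1$, $|p_2^2-\mu|<\delta_1$ in the original variables are each implied by the weaker conditions in the new ones with constant $c\delta_1$ for some $c>0$. The integration domain is then contained in $R_a\cup R_b\cup R_c$, where $R_a=\{p_1^2\geq c\delta_1\}$, $R_b=\{p_1'^2\geq c\delta_1\}$, $R_c=\{|p_2^2-\mu|\geq c\delta_1\}$, and one treats each piece separately by subadditivity. On $R_c$, both $f(p_2)$ and its counterpart for $p_1'$ are bounded by $C\ln(\mu/T)$ uniformly in $p_2$, and the outer $p_2$-integral is finite (using the decay at infinity), yielding $O((\ln\mu/T)^2)$. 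On $R_a$, split into $p_2^2<\mu-c\delta_1$ versus its complement: in the first sub-case the resonance in $p_1$ still lies inside $\{|p_1|\geq\sqrt{c\delta_1}\}$ so the Part 1 bounds apply, but $\int\dd p_2/(\mu-p_2^2)$ is now over a region bounded away from $p_2^2=\mu$ and contributes only $O(1)$, producing $O((\ln\mu/T)^2)$; in the second sub-case the $p_1$-resonance is excluded so the partial $p_1$-integral loses its logarithm, and the remaining factor $\ln(\mu/T)/\sqrt{|\mu-p_2^2|}$ is integrable in $p_2$, producing $O(\ln\mu/T)$. The region $R_b$ is symmetric. Adding up, the total is $O((\ln\mu/T)^2)$, which is in particular $O((\ln\mu/T)^{5/2})$.

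The main obstacle is the bookkeeping in Part 2: one must carefully track which of the three logarithmically divergent resonances (in $p_1$, in $p_1'$, and in $p_2$) survives on each piece of the partition, so that the exclusion of the simultaneous triple-resonance region translates into the saving of at least half a logarithm relative to Part 1. A secondary technical point is handling the shifts induced by $q,q'$, which requires $\delta_2$ to be chosen small compared with $\sqrt{\delta_1}$ so that the shifted exclusion region still contains a genuine neighborhood of the triple-resonance set.
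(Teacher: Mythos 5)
Your Part 1 is essentially the paper's argument: the reduction via \eqref{BT-ineq}, the elimination of $q,q'$ by shifting $p_1,p_1'$ and absorbing the residual $p_2$-offset with the Cauchy--Schwarz inequality in $p_2$, is exactly what the paper does; the only difference is that you then re-derive the bound $\int_\BR f(p_2)^2\,\dd p_2=O((\ln\mu/T)^3)$, $f(p_2)=\int_\BR B_T((p_1,p_2),0)\,\dd p_1$, by a direct resonance analysis, whereas the paper simply quotes the proof of Lemma 4.10 in \cite{roos_bcs_2023}; your stated profile of $f$ (log over $\sqrt{\mu-p_2^2}$ below the shell, $1/\sqrt{T}$ on the $O(T)$-band, no log above the shell, $1/|p_2|$ decay) is correct and gives the right order. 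Part 2 is where you genuinely diverge: the paper covers the complement of the triple-resonance box by the four single constraints, keeps the $p_2$-constrained terms (which reduce to the known doubly-restricted bound \eqref{bb_asy_halfspace} of order $O((\ln\mu/T)^2)$), and handles the $p_1$- and $p_1'$-constrained terms by one more Cauchy--Schwarz pairing the doubly-restricted integral with the unrestricted $O((\ln\mu/T)^3)$ one, which is what produces the exponent $5/2$; you instead estimate each singly-constrained region directly from the resonance profile of $f$ and obtain $O((\ln\mu/T)^2)$, which is stronger than what the lemma asserts and, as far as I can see, correct in substance. What your route buys is a self-contained proof with a better exponent; what the paper's route buys is brevity, since both inputs are already proved in \cite{roos_bcs_2023} and $5/2$ suffices for Lemma~\ref{lea:lambdato0}. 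Two points in your sketch need tightening. First, your smallness condition $\delta_2<\sqrt{\delta_1}/2$ controls the transfer of the $p_1,p_1'$ windows but not of the $p_2$ window: a shift of $p_2$ by $O(\delta_2)$ near $|p_2|\approx\sqrt\mu$ changes $p_2^2$ by $O(\sqrt\mu\,\delta_2)$, so you also need $\delta_2\lesssim\delta_1/\sqrt\mu$ (the lemma only claims existence of some $\delta_2$, so this is a harmless fix, and it is also needed so that the second factor, evaluated at $p_2+\Delta$, stays off resonance in your $R_c$ analysis). Second, on $R_a$ with $p_2^2\geq\mu-c\delta_1$ the statement that excluding $p_1^2<c\delta_1$ removes the logarithm is not uniform: when $\mu-p_2^2$ is within $O(T)$ of $c\delta_1$ the resonance sits at the edge of the excluded window and a full logarithm survives in the $p_1$-integral; the contribution of this transition strip is still $O(\ln\mu/T)$ after integrating in $p_2$, so your conclusion stands, but the argument should account for it rather than claim a pointwise loss of the log.
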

The second part of this Lemma will be used in the proof of Lemma~\ref{lea:lambdato0} to compute the asymptotics of $L_2$.
The proof of Lemma~\ref{lea:BB_asy} can be found in Section~\ref{pf_lea_BB_asy}.

For $\Phi_\lambda^{ex,<}$ we bound \eqref{L3_12.8} above using the bound on mixed Lebesgue norms in Lemma~\ref{lea:lp_prop} and $\lVert \Psi_\lambda\rVert_2=1$, which gives
\begin{equation}\label{L3_12.11}
\frac{\lambda^2}{2\pi^2} \lVert V\rVert_1^2 \lVert B_{T_c^1}^{{ ex}, 2}(\eta)\rVert
\end{equation}
where $B_{T}^{{ ex}, 2}(\xi)$ is the operator acting on $L^2(-\sqrt{2\mu},\sqrt{2\mu})$ with integral kernel
\begin{equation}
 B_{T}^{{ ex}, 2}(\xi)(p_1',p_1)=\int_{\BR^2} B_T((p_1',p_2),(q_1,\xi)) B_T(p,(q_1,\xi)) \chi_{q_1^2+p_2^2<2\mu}\dd q_1 \dd p_2.
\end{equation}
The superscript 2 indicates that there are two factors of $B_T$, as opposed to $B_T^{ex}$ which is defined later in \eqref{btex}.
The following Lemma together with the asymptotics of $T_c^1(\lambda)$ from Remark~\ref{rem:Tetato0} and the fact that $\eta(\lambda)=O(T_c^1(\lambda))$ (see Lemma~\ref{etato0}\eqref{i-eta}) implies that \eqref{L3_12.11} is bounded for fixed $\lambda$ and of order $O(\lambda)$ for $\lambda\to 0$.
\begin{lemma}\label{lea:phiex_norm}
Let $c,\mu>0$.
Then $\sup_{|\xi|<c T}\lVert  B_{T}^{{ ex}, 2}(\xi)\rVert$
is finite for all $T>0$ and of order $O(\ln \mu /T)$ as $T\to 0$.
\end{lemma}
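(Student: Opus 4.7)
The plan is to recognize $B_T^{{ex},2}(\xi)$ as $L_\xi^* L_\xi$ for an explicit integral operator and then reduce the two claims (finiteness for fixed $T$, $O(\ln \mu/T)$ growth as $T\to 0$) to estimates on $\|L_\xi\|$. Concretely, let $I=(-\sqrt{2\mu},\sqrt{2\mu})$ and $\chi=\{(p_2,q_1)\in\BR^2:p_2^2+q_1^2<2\mu\}$, and define $L_\xi:L^2(I)\to L^2(\chi)$ by the integral kernel $L_\xi(p_2,q_1;p_1)=B_T((p_1,p_2),(q_1,\xi))$. Direct computation of $L_\xi^* L_\xi$ reproduces the kernel of $B_T^{{ex},2}(\xi)$ given in the statement, so $\|B_T^{{ex},2}(\xi)\|=\|L_\xi\|^2$.

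For the finiteness at fixed $T>0$, I would invoke the crude bound $B_T\leq(2T)^{-1}$ from \eqref{BT_bound}. Since both $I$ and $\chi$ are bounded, this gives $\|L_\xi\|_{\rm HS}^2\leq(2T)^{-2}|I|\cdot|\chi|<\infty$, hence $\|B_T^{{ex},2}(\xi)\|\leq\|L_\xi\|_{\rm HS}^2<\infty$. Uniformity in $|\xi|<cT$ is immediate since the bound does not see $\xi$.

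For the asymptotic $O(\ln\mu/T)$, I would start from the sharper bound $B_T((p_1,p_2),(q_1,\xi))\leq 1/\max\{|p_1^2+p_2^2+q_1^2+\xi^2-\mu|,2T\}$ and exploit that the singular set is the $2$-sphere $\{p_1^2+p_2^2+q_1^2=\mu\}$ (shifted by $O(\xi^2)=O(T^2)$). A naive Hilbert--Schmidt estimate gives only $\int_I\int_\chi B_T^2\,\mathrm dp_1\,\mathrm dp_2\,\mathrm dq_1=O(1/T)$, and a naive Schur test gives $O((\ln\mu/T)/\sqrt T)$, so neither is sharp enough. The plan is to combine a Schwarz inequality of the form $|(L_\xi\psi)(p_2,q_1)|^2\leq\bigl(\int_I B_T\,\mathrm dp_1\bigr)\bigl(\int_I B_T|\psi|^2\,\mathrm dp_1\bigr)$ with a geometric splitting of $\chi$ into a thin annulus around $\{p_2^2+q_1^2=\mu\}$ (where the inner $p_1$-integral is radially singular but the angular extent is small) and its complement (where $B_T$ is bounded). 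After Fubini this should bring out a single $\ln\mu/T$ from the two-dimensional integral of $B_T$ over $\chi$ (estimated as in \eqref{l2.22}) while all remaining factors are uniformly bounded, using that the $p_1$-integration is confined to the bounded interval $I$.

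The main obstacle is precisely extracting only a \emph{single} logarithm. The issue is that both $B_T$-factors in the kernel of $B_T^{{ex},2}(\xi)$ share the variables $(p_2,q_1)$, so their singular sets are correlated: a separable Cauchy--Schwarz between the two $B_T$ factors throws away this correlation and produces an $O(1/T)$ error. A refinement is to work in spherical coordinates $(\rho,\omega)$ with $\rho^2=p_1^2+p_2^2+q_1^2$, use that $B_T$ then depends only on $\rho$, and organize the integrations so that the $\rho$-integration provides the single logarithm $\int \rho^2\,\mathrm d\rho/\max\{|\rho^2+\xi^2-\mu|,2T\}=O(\ln\mu/T)$ while the integration over $\omega\in S^2$ subject to the constraints $|p_1|<\sqrt{2\mu}$ and $p_2^2+q_1^2<2\mu$ remains uniformly bounded. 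Assembling these pieces carefully, with $|\xi|<cT$ ensuring $\xi^2\ll T$ so that the shift is harmless, should complete the $O(\ln\mu/T)$ bound uniformly in $\xi$.
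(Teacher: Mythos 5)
Your reduction $B_T^{ex,2}(\xi)=L_\xi^*L_\xi$ and the finiteness claim for fixed $T$ (crude bound $B_T\le (2T)^{-1}$ plus bounded domains) are fine, but the core of the lemma --- the $O(\ln\mu/T)$ bound --- is only sketched, and the sketch cannot be completed along the route you describe. Your plan rests entirely on the radial majorant $B_T(p,q)\le 1/\max\{|p^2+q^2-\mu|,2T\}$ from \eqref{BT_bound} (your statement that ``$B_T$ then depends only on $\rho$'' is false for $B_T$ itself, since $B_T$ depends on $p\cdot q=p_1q_1+p_2\xi$ through the two $\tanh$ factors; only the majorant is radial). Replacing $B_T$ by this majorant is already fatal: the majorized operator genuinely has norm of order $T^{-1/2}$, so no amount of careful splitting into annuli or spherical coordinates can recover a single logarithm afterwards. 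Concretely, take $\psi=T^{-1/4}\chi_{(0,\sqrt T)}$; for every $(p_2,q_1)$ in the annulus $|p_2^2+q_1^2-(\mu-\xi^2)|\le T$ (area $2\pi T$) and every $p_1\in(0,\sqrt T)$ one has $|p_1^2+p_2^2+q_1^2+\xi^2-\mu|\le 2T$, so the inner $p_1$-integral of the majorant against $\psi$ is at least $\tfrac12 T^{-3/4}$, and hence the quadratic form of the majorized operator at $\psi$ is at least $\tfrac{\pi}{2}T^{-1/2}\gg\ln(\mu/T)$. The point you yourself flag --- the correlation of the two factors through $(p_2,q_1)$ --- is not the only obstruction; the deeper one is that $B_T$ is small when $p^2+q^2\approx\mu$ but $|p\cdot q|$ is not small (the numerator nearly cancels), and the radial bound throws exactly this information away.

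The paper's proof keeps this structure by using the two-sided bound \eqref{BT_bound_2}, $B_T(p,q)\le 2/(|(p+q)^2-\mu|+|(p-q)^2-\mu|)$: it writes $\lVert B_T^{ex,2}(\xi)\rVert\le\int_{-\sqrt{2\mu}}^{\sqrt{2\mu}}\lVert B_T((\cdot,p_2),(\cdot,\xi))\rVert^2\,\dd p_2$ (fixing $p_2$ and viewing the kernel in $(p_1,q_1)$ as a one-dimensional operator), bounds each such operator norm by $\min\{4\pi/T,\ \lVert D_{\mu-(p_2+\xi)^2,\mu-(p_2-\xi)^2}\rVert^2\}$ using the Hilbert--Schmidt estimate and the operator-norm bound \eqref{Dmu1mu2_bound} on $D_{\mu_1,\mu_2}$ from \cite[Lemma 6.5]{roos_bcs_2023} (which is $\lesssim 1+|\mu-p_2^2|^{-1/2}$ for $|\xi|<cT$, with no $T$-dependence), and then obtains the single logarithm from the final $p_2$-integration: the $1/T$ bound is used only on the interval $||p_2|-\sqrt\mu|<2cT$ of length $O(T)$, and $\int|\mu-(|p_2|+|\xi|)^2|^{-1}\dd p_2$ over the complement gives $O(\ln\mu/T)$. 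To repair your argument you would need an input of this kind --- some bound reflecting that both $(p\pm q)^2$ must be near $\mu$ for $B_T$ to be large --- rather than the radial bound alone.
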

The proof of Lemma~\ref{lea:phiex_norm} is given in Section~\ref{sec:pf_phiex_norm}.
\end{proof}

\subsection{Proof of Lemma~\ref{lea:cont_bd}}\label{sec:pf_cont_bd}
\begin{proof}[Proof of Lemma~\ref{lea:cont_bd}]
Recall the functions $g_0$, $g_+$ and $g_{-}$ on $\BR^2$ defined as
\begin{equation}\label{eq:defg0.2}
g_0(p_2,q_2):= \int_{\BR^2}\overline{\widehat{\Phi_\lambda}(p,q_1)}
\rlap{$\phantom{V} \chi_{\tilde\Omega_1}$}\widehat {V\phantom{\chi_{\Omega}}\Phi_\lambda}
(p_1,q_2,q_1)  \dd p_1 \dd q_1
\end{equation}
and
\begin{equation}
g_\pm (p_2,q_2):=\int_{\BR^2}\overline{\widehat \Phi_\lambda(p,q_1) } \Big[B_{T_c^1}^{-1}(p,q)
-B_{T_c^1}^{-1}(p,(q_1,\eta)) \Big]
\widehat \Phi_\lambda((p_1,\pm q_2),q_1) \dd p_1 \dd q_1.
\end{equation}
We aim to prove that the functions $g_0$ and $g_\pm$ are continuous and bounded and $g_\pm (p_2,\eta)=0$ for all $p_2 \in \BR$.

For functions $\psi$ on $\BR^3$ let $S\psi(p_1,p_2,q_1)=\psi(p,q_1)+\psi(-p_1,p_2,-q_1)\mp \psi(q_1,p_2,p_1)\mp \psi(-q_1,p_2,-p_1)$.
For $p,q\in\BR^2$ let
\begin{align}
L^0(p,q)&:=\lambda B_{T_c^1}(p,(q_1,\eta)) ,\\
L^{\pm}(p,q)&:=\lambda^2 B_{T_c^1}(p,(q_1,\eta))\Big[B_{T_c^1}^{-1}(p,q)-B_{T_c^1}^{-1}(p,(q_1,\eta)) \Big] B_{T_c^1}((p_1,\pm q_2),(q_1,\eta)
\end{align}
Using the expression for $\widehat{\Phi_\lambda}$ in  \eqref{eval_eq} obtained from the eigenvalue equation we have
\begin{equation}\label{cont_bd.0}
g_0(p_2,q_2)=\int_{\BR^2}\overline{S
\rlap{$\phantom{V} \chi_{\tilde\Omega_1}$}\widehat {V\phantom{\chi_{\Omega}}\Phi_\lambda}
(p,q_1)}L^0(p,q)
\rlap{$\phantom{V} \chi_{\tilde\Omega_1}$}\widehat {V\phantom{\chi_{\Omega}}\Phi_\lambda}
(p_1,q_2,q_1) \dd p_1 \dd q_1
\end{equation}
and
\begin{equation}\label{cont_bd.00}
g_{\pm}(p_2,q_2)=\int_{\BR^2}\overline{S
\rlap{$\phantom{V} \chi_{\tilde\Omega_1}$}\widehat {V\phantom{\chi_{\Omega}}\Phi_\lambda}
(p,q_1)}L^{\pm}(p,q)S
\rlap{$\phantom{V} \chi_{\tilde\Omega_1}$}\widehat {V\phantom{\chi_{\Omega}}\Phi_\lambda}
(p_1,\pm q_2 ,q_1) \dd p_1 \dd q_1 .
\end{equation}
Note that $g_\pm (p_2,\eta)=0$ since $L^\pm(p,(q_1,\eta))=0$.
For measurable functions $\psi_1,\psi_2$ on $\BR^3$ and $p_2,q_2\in \BR$ we obtain using the Schwarz inequality in $q_1$
\begin{multline}\label{cont_bd.1}
\int_{\BR^2} |\psi_1(p_1,p_2,q_1)|\frac{1}{1+p_1^2} |\psi_2(p_1,q_2,q_1)| \dd p_1 \dd q_1\\
\leq \int_{\BR}\frac{1}{1+p_1^2}  \dd p_1 \sup_{p\in \BR^2}\lVert \psi_1(p,\cdot)\rVert_{ L^2(\BR)} \sup_{p\in \BR^2}\lVert \psi_2(p,\cdot)\rVert_{ L^2(\BR)}
\end{multline}
and using the Schwarz inequality in $q_1,p_1$
\begin{multline}\label{cont_bd.2}
\int_{\BR^2} |\psi_1(p_1,p_2,q_1)|\frac{1}{1+p_1^2+q_1^2} |\psi_2(q_1,q_2,p_1)| \dd p_1 \dd q_1 \\
\leq   \int_{\BR}\frac{1}{1+p_1^2}  \dd p_1 \sup_{p\in \BR^2}\lVert \psi_1(p,\cdot)\rVert_{ L^2(\BR)} \sup_{p\in \BR^2}\lVert \psi_2(p,\cdot)\rVert_{ L^2(\BR)}.
\end{multline}
There is a constant $C$ independent of $p,q$ (but dependent on $\lambda$) such that $L^0(p,q)\leq \frac{C}{1+p_1^2+q_1^2}$ by \eqref{BT_bound}.
Similarly, the bounds on $B_T$ in \eqref{BT_bound} and Lemma~\ref{KT-Laplace} imply that there is a constant $C$ independent of $p,q$ but dependent on $\lambda$  such that
\begin{equation}\label{cont_bd.3}
L^\pm(p,q)\leq \frac{C(1+p^2+q^2)}{(1+p^2+q_1^2)(1+p_1^2+q^2)}\leq \frac{2C}{1+p_1^2+q_1^2}
\end{equation}
It follows from \eqref{cont_bd.1} and \eqref{cont_bd.2} that there is a constant $C$ such that for all measurable functions $\psi_1,\psi_2$ on $\BR^3$ and $p_2,p_2',q_2,q_2'\in \BR$
\begin{equation}
\Big \vert \int_{\BR^2} \overline{S\psi_1(p,q_1)} L^0(p_1,p_2', q_1, q_2') \psi_2(p_1,q_2,q_1) \dd p_1 \dd q_1 \Big \vert\leq  C\sup_{p\in \BR^2}\lVert \psi_1(p,\cdot)\rVert_{ L^2(\BR)} \sup_{p\in \BR^2}\lVert \psi_2(p,\cdot)\rVert_{ L^2(\BR)},
\end{equation}
and similarly
\begin{equation}\label{cont_bd.4}
\Big \vert \int_{\BR^2} \overline{S\psi_1(p,q_1)} L^{\pm}(p_1,p_2', q_1, q_2') S\psi_2(p_1,\pm q_2,q_1) \dd p_1 \dd q_1 \Big \vert\leq  C\sup_{p\in \BR^2}\lVert \psi_1(p,\cdot)\rVert_{ L^2(\BR)} \sup_{p\in \BR^2}\lVert \psi_2(p,\cdot)\rVert_{ L^2(\BR)}.
\end{equation}
In particular it follows from \eqref{cont_bd.0} and \eqref{cont_bd.00} with the mixed Lebesgue norm bounds in Lemma~\ref{lea:lp_prop} and the normalization $\lVert V^{1/2}\chi_{\tilde\Omega_1}\Phi_\lambda\rVert_2=1$ that $g_0$ and $g_\pm$ are bounded.

To prove continuity, first note that
\begin{equation}
\rlap{$\phantom{V} \chi_{\tilde\Omega_1}$}\widehat {V\phantom{\chi_{\Omega}}\Phi_\lambda}
(p_1,p_2+\eps,q_1)-
\rlap{$\phantom{V} \chi_{\tilde\Omega_1}$}\widehat {V\phantom{\chi_{\Omega}}\Phi_\lambda}
(p, q_1)=
\rlap{$\phantom{W_\epsilon} \chi_{\tilde\Omega_1}$}\widehat {W_\epsilon\phantom{\chi_{\Omega}}\Phi_\lambda}
(p,q_1)
\end{equation}
where $W_\eps (r)=V(r)(e^{-i \eps r_2}-1)$.
We only spell out the proof for $g_\pm$, the argument for $g_0$ is analogous.
For all $p_2,q_2\in \BR$ we have
\begin{multline}\label{cont_bd.5}
g_{\pm}(p_2+\eps,q_2+\eps')-g_\pm(p_2,q_2)\\
= \int_{\BR^2} \overline{S
\rlap{$\phantom{V} \chi_{\tilde\Omega_1}$}\widehat {V\phantom{\chi_{\Omega}}\Phi_\lambda}
(p_1,p_2+\eps,q_1)} L^\pm(p_1,p_2+\eps, q_1, q_2+\eps') S
\rlap{$\phantom{W_\epsilon} \chi_{\tilde\Omega_1}$}\widehat {W_\epsilon\phantom{\chi_{\Omega}}\Phi_\lambda}
(p_1,\pm q_2,q_1) \dd p_1 \dd q_1\\
+ \int_{\BR^2} \overline{S
\rlap{$\phantom{W_\epsilon} \chi_{\tilde\Omega_1}$}\widehat {W_\epsilon\phantom{\chi_{\Omega}}\Phi_\lambda}
(p,q_1)} L^\pm(p_1,p_2+\eps, q_1, q_2+\eps') S
\rlap{$\phantom{V} \chi_{\tilde\Omega_1}$}\widehat {V\phantom{\chi_{\Omega}}\Phi_\lambda}
(p_1,\pm q_2,q_1) \dd p_1 \dd q_1\\
+ \int_{\BR^2} \overline{S
\rlap{$\phantom{V} \chi_{\tilde\Omega_1}$}\widehat {V\phantom{\chi_{\Omega}}\Phi_\lambda}
(p,q_1)} (L^\pm(p_1,p_2+\eps, q_1, q_2+\eps')-L^\pm(p, q)) S
\rlap{$\phantom{V} \chi_{\tilde\Omega_1}$}\widehat {V\phantom{\chi_{\Omega}}\Phi_\lambda}
(p_1,\pm q_2,q_1) \dd p_1 \dd q_1
\end{multline}
Using \eqref{cont_bd.3} it follows by dominated convergence that the last line vanishes as $\eps,\eps'\to0$.
Furthermore, note that by the mixed Lebesgue norm estimates in Lemma~\ref{lea:lp_prop}
\begin{equation}
\lVert
\rlap{$\phantom{W_\epsilon} \chi_{\tilde\Omega_1}$}\widehat {W_\epsilon\phantom{\chi_{\Omega}}\Phi_\lambda}
\rVert_{L_p^\infty L_{q_1}^2} \leq \frac{\lVert W_\eps\lVert_1^{1/2}}{2\pi} \lVert W_\eps^{1/2}  \chi_{\tilde\Omega_1} \Phi_\lambda\lVert_2\leq \frac{\lVert W_\eps\lVert_1}{2\pi} \lVert \Phi_\lambda\lVert_{L_r^\infty L_{z_1}^2}
\end{equation}
where $ \lVert \Phi_\lambda\lVert_{L_r^\infty L_{z_1}^2}<\infty$ was shown in Lemma~\ref{lea:phi_infty}.
Since $\lVert W_\eps\Vert_1\leq |\eps| \lVert |\cdot|V \rVert_1$ it follows from \eqref{cont_bd.4} that the first two lines in \eqref{cont_bd.5} vanish as $\eps,\eps'\to0$.
In particular, $g_\pm$ are continuous.
\end{proof}

\section{ Proof of Lemma~\ref{lea:eps_to_0}}\label{sec:epsto0}
This section contains the proof of Lemma~\ref{lea:eps_to_0}, where we compute $\lim_{\eps \to 0} \langle \psi_\lambda^\eps, UH^{\Omega_2}_{T_c^1(\lambda)}U^\dagger  \psi_\lambda^\eps \rangle$.
Recall from \eqref{t_halfspace} that 
\begin{equation}
t (p_1,q_1,r_1,z_1)=\frac{1}{2}\left(e^{-i(p_1 r_1+q_1 z_1)}+e^{i(p_1 r_1+q_1 z_1)}\mp e^{-i(p_1 z_1+q_1 r_1)} \mp e^{i(p_1 z_1+q_1 r_1)}\right).
\end{equation}
Let $\tilde \Omega_2=\{(r,z)\in \BR^{2}\times \BR^{2} \vert \vert r_1\vert< z_1,\vert r_2\vert< z_2\}$.
Analogously to the expression for $UK_T^{\Omega_1}U^\dagger$ in \eqref{UK1U} we have
\begin{multline}\label{UH2U}
 \langle \psi_\lambda^\eps, UH^{\Omega_2}_{T}U^\dagger  \psi_\lambda^\eps \rangle =
\int_{\BR^{4}}B_{T}(p,q)^{-1} \left|\int_{\tilde\Omega_2}\frac{1}{(2\pi)^2} t (p_1,q_1,r_1,z_1) t (p_2,q_2,r_2,z_2) \psi_\lambda^\eps(r,z) \dd r \dd z\right|^2\dd p\dd q\\
- \lambda \int_{\tilde \Omega_2} V(r)| \psi_\lambda^\eps(r,z)|^2 \dd r \dd z.
\end{multline}
Since the function $\psi_\lambda^\eps$ defined in \eqref{trial_state} is symmetric under $(r_2,z_2)\to-(r_2,z_2)$ and (anti)symmetric under $(r_2,z_2)\to(z_2,r_2)$, we have
\begin{equation}
\int_{|r_2|<z_2}  t (p_2,q_2,r_2,z_2) \psi_\lambda^\eps(r,z) \dd r_2 \dd z_2 = \frac{1}{2} \int_{\BR^2} e^{-i p_2 r_2-i q_2 z_2} \psi_\lambda^\eps(r,z) \dd r_2 \dd z_2
\end{equation}
and
\begin{equation}
\int_{|r_2|<z_2} V(r) |\psi_\lambda^\eps(r,z)|^2 \dd r_2 \dd z_2 = \frac{1}{4} \int_{\BR^2} (V(r)\chi_{ \vert r_2\vert< \vert z_2\vert }+ V(r_1,z_2)\chi_{\vert  z_2\vert < \vert r_2\vert}) |\psi_\lambda^\eps(r,z)|^2 \dd r_2 \dd z_2 .
\end{equation}
Comparing with the expression for $UK_T^{\Omega_1}U^\dagger$ in \eqref{UK1U} we obtain
\[
\langle \psi_\lambda^\eps, UH^{\Omega_2}_{T_c^1(\lambda)}U^\dagger  \psi_\lambda^\eps \rangle=\frac{1}{4} \langle \psi_\lambda^\eps, H^{2}_{T_c^1(\lambda)} \psi_\lambda^\eps \rangle,
\]
 where the operator $H_T^2$ is given by
\begin{equation}
H^2_T=UK_T^{\Omega_1}U^\dagger-\lambda V(r)\chi_{ \vert r_2\vert< \vert z_2\vert }-\lambda V(r_1,z_2)\chi_{\vert  z_2\vert < \vert r_2\vert}
\end{equation}
acting on $L^2(\tilde \Omega_1\times \BR)$ functions symmetric in $r$ and antisymmetric/symmetric under swapping $r_2\leftrightarrow z_2$ for Dirichlet/Neumann boundary conditions, respectively.
Let us define $K_T^2:=UK_T^{\Omega_1}U^\dagger$.

The trial state $\psi_\lambda^\eps$ has four summands, which we number from one to four in the order they appear in \eqref{trial_state} and refer to as $|j\rangle$ for $j\in\{1,2,3,4\}$.
By symmetry under $(z_2,r_2)\to -(z_2,r_2)$ and $(r_2,z_2)\to (z_2,r_2)$ we have
\begin{equation}
\langle \psi_\lambda^\eps, H^2_{T_c^1} \psi_\lambda^\eps \rangle = 4 \sum_{j=1}^4 \langle 1,   H^2_{T_c^1}  j\rangle
\end{equation}
For each $j\in\{1,2,3,4\}$ we write
\begin{equation}\label{eq:1Hj}
 \langle 1,   H^2_{T_c^1}  j\rangle =  \langle 1,   (K_{T_c^1}^{2}-\lambda V(r)  )j\rangle+\langle 1,  ( \lambda V(r) \chi_{|z_2|<|r_2|} +\lambda V(r_1,z_2)\chi_{|r_2|<|z_2|}) j\rangle-\langle 1,  \lambda V(r_1,z_2)\ j\rangle
\end{equation}
We shall prove that
\begin{equation}\label{l0_pf}
\lim_{\eps\to 0}\sum_{j=1}^4 \langle 1,   (K_{T_c^1}^{2}-\lambda V(r) ) j\rangle=0,
\end{equation}
\begin{equation}\label{l1_pf}
L_1=\lim_{\eps\to 0}\sum_{j=1}^4 \langle 1,   (V(r) \chi_{|z_2|<|r_2|} + V(r_1,z_2)\chi_{|r_2|<|z_2|} )j\rangle,
\end{equation}
and
\begin{equation}\label{l2_pf}
L_2=-\lim_{\eps\to 0}\sum_{j=1}^4 \langle 1,   V(r_1,z_2)\ j\rangle
\end{equation}
where $L_1$ and $L_2$ are the expressions in \eqref{eq:eps_to_0.2} and  \eqref{eq:eps_to_0.3}.
In particular, it follows that
\[\lim_{\eps\to 0}\langle \psi_\lambda^\eps, U H^{\Omega_2}_{T_c^1} U^\dagger \psi_\lambda^\eps \rangle =\lambda(L_1+L_2).
\]

\subsection{Proof of \eqref{l0_pf}:}
We argue that all summands vanish as $\eps\to0$.

{\bf j=1:}
We first show that
\begin{equation}\label{l0_pf.1}
 \langle 1,   (K_{T_c^1}^{2}-\lambda V(r) ) 1\rangle = \frac{1}{2\pi}\int_{\BR^4} \Bigg[ B_{T_c^1}^{-1}(p,(q_1,q_2+\eta)) -B_{T_c^1}^{-1}(p,(q_1,\eta))  \Bigg] \frac{\eps^2}{(\eps^2+q_2^2)^2}|\widehat{\Phi_\lambda}(p,q_1)|^2\dd p \dd q
\end{equation}
Using eigenvalue equation $K_{T_c^1}^{1}(\eta)\chi_{\tilde\Omega_1}\Phi_\lambda=\lambda V \chi_{\tilde\Omega_1}\Phi_\lambda$ together with the expressions  \eqref{UK1U} and  \eqref{KT1q}  for $K_T^{\Omega_1}$ and $K_T^1(q_2)$, respectively, we observe that
\begin{multline}
 \langle 1,   (K_{T_c^1}^{2}-\lambda V(r) ) 1\rangle\\
=\frac{1}{(2\pi)^4}\int_{( \tilde \Omega_1\times \BR)^2\times \BR^3} \overline{\Phi_\lambda}(r,z_1)\overline{t(p_1,q_1,r_1,z_1)}e^{i p_2 r_2}\Bigg[\int_\BR B_{T_c^1}^{-1}(p,q) e^{i (\eta-q_2) (z_2'-z_2)-\eps (|z_2|+|z_2'|)}\dd q_2\\
-B_{T_c^1}^{-1}(p,(q_1,\eta))e^{-2\eps |z_2|} 2\pi \delta(z_2-z_2')  \Bigg]t(p_1,q_1,r'_1,z_1')e^{-i p_2 r'_2}\Phi_\lambda(r',z_1')\dd r \dd z \dd r' \dd z' \dd p \dd q_1
\end{multline}
We shall carry out the $r,r',z,z'$ integrations.
Integration of $\frac{1}{(2\pi)^{3/2}} t\cdot  e^{-i p_2 r_2} \Phi_\lambda$ over $r,z_1$ gives $\frac{1}{2} \widehat{\Phi_\lambda}$ (c.f. \eqref{pf_eps0_1}) and for the integration over $z_2, z_2'$ we observe
\begin{align*}
\int_\BR e^{i (\eta-q_2) z_2-\eps |z_2|} \dd z_2&=\frac{2\eps}{\eps^2+(\eta-q_2)^2}, \\
2\pi \int_\BR e^{-2\eps |z_2|}=2\pi \eps^{-1}&= \int_\BR \frac{4\eps^2}{(\eps^2+(\eta-q_2)^2)^2} \dd q_2.
\end{align*}
In total, we obtain
\begin{equation}
\langle 1,(  K_{T_c^1}^{2}-\lambda V(r) ) 1\rangle
= \frac{1}{2\pi}\int_{\BR^4}\Bigg[ B_{T_c^1}^{-1}(p,q) -B_{T_c^1}^{-1}(p,(q_1,\eta))  \Bigg] \frac{\eps^2}{(\eps^2+(\eta-q_2)^2)^2} |\widehat{\Phi_\lambda}(p,q_1)|^2 \dd p \dd q
\end{equation}
and substituting $q_2\to q_2+\eta$ we arrive at \eqref{l0_pf.1}.

For $|q_2|>1$, using $B_T^{-1}(p,q)\leq \tilde C (1+p^2+q^2)$ (see Lemma~\ref{KT-Laplace}) we bound the integrand in \eqref{l0_pf.1} above by $\frac{C \eps^2 (1+p^2+q_1^2)}{q_2^2}|\widehat{\Phi_\lambda}(p,q_1)|^2$ .
Since $\Phi_\lambda \in H^1(\BR^3)$, the integral vanishes as $\eps \to 0$.
For $|q_2|<1$ substitute $q_2\to \eps q_2$ and use that
\[
q_2^{-1}( B_{T_c^1}^{-1}(p,(q_1,q_2+\eta) -B_{T_c^1}^{-1}(p,(q_1,\eta))) =-f(p,(q_1,\eta),q_2) B_{T_c^1}^{-1}(p,(q_1,q_2+\eta) B_{T_c^1}^{-1}(p,(q_1,\eta))
\]
 where $f$ is defined as in Lemma~\ref{bdiff}.
The integral then equals
\begin{equation}
-\frac{1}{2\pi}\int_{\BR^4}\chi_{|q_2|<\eps^{-1}} f(p,(q_1,\eta),\eps q_2) B_{T_c^1}^{-1}(p,(q_1,\eps q_2+\eta) B_{T_c^1}^{-1}(p,(q_1,\eta))
  \frac{q_2}{(1+q_2^2)^2}|\widehat{\Phi_\lambda}(p,q_1)|^2\dd p \dd q.
\end{equation}
By Lemma~\ref{bdiff} and Lemma~\ref{KT-Laplace} the integrand is bounded above by the integrable function
\begin{equation}
C (1+p^2+q_1^2) \frac{|q_2|}{(1+q_2^2)^2}|\widehat{\Phi_\lambda}(p,q_1)|^2.
\end{equation}
Thus by dominated convergence, continuity of $f$ and $B_T$ and since $\int_\BR \frac{q_2}{(1+q_2^2)^2}  \dd q_2=0$ we have $\lim_{\eps \to 0} \langle 1,   K_{T_c^1}^{2}-\lambda V(r)  1\rangle=0$.

{\bf j=2:}
We distinguish the cases $\eta(\lambda)=0$ and $\eta(\lambda)\neq 0$.
If $\eta(\lambda)=0$, $\Phi_\lambda(r,z_1)$ is either even or odd in $r_2$. The term for $j=2$ hence agrees with the term for $j=1$ or its negative and hence vanishes in the limit.
For $\eta(\lambda)\neq 0$, the intuition is that integration over $z_2,z_2'$ approximately gives a product of delta functions $\delta(q_2-\eta)\delta(q_2+\eta)=0$.
Using that the integral of $\frac{1}{(2\pi)^{3/2}} t \cdot e^{-i p_2 r_2} \Phi_\lambda$ over $r,z_1$ gives $\frac{1}{2} \widehat{\Phi_\lambda}$ (see \eqref{pf_eps0_1}) and $e^{-i p_2 r_2}=e^{-i (-p_2)(-r_2)}$ 
we have
\begin{multline}
 \langle 1,  ( K_{T_c^1}^{2}-\lambda V(r)  )2\rangle\\
=\frac{1}{8\pi}\int_{\BR^6}\overline{\widehat \Phi_\lambda(p,q_1) } B_{T_c^1}^{-1}(p,q) e^{-i (\eta-q_2) z_2-i (\eta+q_2) z_2'-\eps (|z_2|+|z_2'|)}\widehat \Phi_\lambda((p_1,-p_2),q_1)\dd z_2 \dd z_2' \dd p \dd q\\
-\int_{\tilde \Omega_1\times \BR} \overline{\Phi_\lambda(r,z_1)} \lambda V(r) \Phi_\lambda(r_1,-r_2,z_1) e^{-2i \eta z_2-2\eps |z_2|} \dd r \dd z
\end{multline}
Carrying out the $z_2$ and $z_2'$ integrations gives
\begin{multline}\label{eq:pf4.8.2}
 \langle 1,   (K_{T_c^1}^{2}-\lambda V(r) ) 2\rangle\\
=\frac{1}{2\pi}\int_{\BR^4}\overline{\widehat \Phi_\lambda(p,q_1) } B_{T_c^1}^{-1}(p,q) \frac{\eps^2}{(\eps^2+(\eta-q_2)^2)(\eps^2+(\eta+q_2)^2)}\widehat \Phi_\lambda((p_1,-p_2),q_1)\dd p \dd q\\
-\int_{\tilde \Omega_1} \overline{\Phi_\lambda(r,z_1)} \lambda V(r) \Phi_\lambda(r_1,-r_2,z_1) \frac{\eps}{\eps^2+\eta^2} \dd r \dd z_1
\end{multline}
Using the Schwarz inequality in the $r_2$ variable, we bound the absolute value of the second term by $\frac{ \eps\lambda}{\eta^2}\int_{\tilde \Omega_1}V(r)\vert \Phi_\lambda(r,z_1)|^2 \dd r \dd z_1 \leq \frac{ \eps\lambda}{\eta^2}\lVert V\rVert_1 \lVert \Phi_\lambda\rVert_{L_1^\infty L_{2}^2}^2$. 
It was shown in Lemma~\ref{lea:phi_infty} that $ \lVert \Phi_\lambda\rVert_{L_1^\infty L_{2}^2}<\infty$ and hence the term vanishes for $\eps \to 0$ .
To bound the absolute value of the first term in \eqref{eq:pf4.8.2}, we first use that $B_T^{-1}(p,q)\leq C(1+p^2+q^2)$ by Lemma~\ref{KT-Laplace} and the Schwarz inequality in the $p_2$ variable, and then use symmetry to restrict to $q_2>0$ and distinguish the cases $|q_2-\eta|\lessgtr \eps$:
\begin{multline}
C \int_{\BR^4} \frac{\eps^2(1+p^2+q^2)}{(\eps^2+(\eta-q_2)^2)(\eps^2+(\eta+q_2)^2)}|\widehat \Phi_\lambda(p,q_1)|^2\dd p \dd q\\
\leq 2C \int_{\BR^3} \Bigg(\int_0^\infty \Bigg[\frac{\chi_{|q_2-\eta|<\eps}(1+p^2+q^2)}{(\eta-q_2)^2+(\eta+q_2)^2}+ \frac{\chi_{|q_2-\eta|>\eps}\eps^2(1+p^2+q^2)}{(\eta-q_2)^2(\eta+q_2)^2}\Bigg] \dd q_2 \Bigg)|\widehat \Phi_\lambda(p,q_1)|^2\dd p \dd q_1.
\end{multline}
There is a constant $C(\eta)$ such that the first term in the square brackets is bounded above by $C(\eta)\chi_{|q_2-\eta|<\eps}(1+p^2+q_1^2)$, and the second term is bounded by $C(\eta)\frac{\chi_{|q_2-\eta|>\eps}\eps^2(1+p^2+q_1^2)}{(\eta-q_2)^2}$.
This gives the upper bound
\begin{equation}
\tilde C  \Bigg(\int_0^\infty \Bigg[\chi_{|q_2-\eta|<\eps}+ \frac{\chi_{|q_2-\eta|>\eps}\eps^2}{(\eta-q_2)^2}\Bigg] \dd q_2 \Bigg) \lVert \Phi_\lambda\rVert_{H^1(\BR^3)}^2
\end{equation}
The remaining integral is of order $O(\eps)$ as $\eps\to0$, and thus the term vanishes in the limit $\eps\to0$.

{\bf j=3,4:}
Using the eigenvalue equation $K_{T_c^1(\lambda)}^{1}(\eta)\chi_{\tilde\Omega_1} \Phi_\lambda=\lambda V\chi_{\tilde\Omega_1} \Phi_\lambda$ and that the integral of $\frac{1}{(2\pi)^{3/2}}t\cdot \Phi_\lambda$ over the spatial variables gives $\frac{1}{2}\widehat{\Phi_\lambda}$, see \eqref{pf_eps0_1}, we have
\begin{multline}
| \langle 1,   (K_{T_c^1}^{2}-\lambda V(r))  j\rangle|\\
=\Big \vert \frac{1}{8\pi}\int_{\BR^6}\overline{\widehat \Phi_\lambda(p,q_1) } \left(B_{T_c^1}^{-1}(p,q)-B_{T_c^1}^{-1}(p,(q_1,\eta))\right) e^{-i (\eta-q_2) z_2-i (\mp \eta+p_2) r_2'-\eps (|z_2|+|r_2'|)}\\
\times \widehat \Phi_\lambda((p_1,\pm q_2),q_1)\dd z_2 \dd r_2' \dd p \dd q\Big \vert
\end{multline}
where the upper signs correspond to $j=3$ and the lower ones to $j=4$, respectively.
Carrying out the integration over $r_2'$ and $z_2$ and substituting $q_2\to \eps q_2+\eta, p_2\to \eps p_2\pm \eta$ we obtain
\begin{multline}
| \langle 1,   (K_{T_c^1}^{2}-\lambda V(r))  j\rangle|\\
=\Big \vert\frac{1}{2\pi}\int_{\BR^4}\overline{\widehat \Phi_\lambda((p_1,\eps p_2\pm \eta),q_1) } \frac{1}{1+ p_2^2}\frac{1}{1+q_2^2}\Big[B_{T_c^1}^{-1}((p_1,\eps p_2\pm \eta),(q_1,\eps q_2+\eta))\\
-B_{T_c^1}^{-1}((p_1,\eps p_2\pm \eta),(q_1,\eta)) \Big]
\widehat \Phi_\lambda((p_1,\pm(\eps q_2+ \eta)),q_1) \dd p \dd q\Big \vert
\end{multline}
With the definition of $g_\pm$ as in Lemma~\ref{lea:cont_bd}, the latter equals
\begin{equation}
\Big \vert\frac{1}{2\pi}\int_{\BR^2} \frac{g_{\pm}(\eps p_2\pm \eta ,\eps q_2+\eta)}{(1+ p_2^2)(1+q_2^2)}\dd p_2 \dd q_2\Big \vert
\end{equation}
With Lemma~\ref{lea:cont_bd} it follows by dominated convergence that $\lim_{\eps\to0} \langle 1, (  K_{T_c^1}^{2}-\lambda V(r)  )j\rangle=0$.

\subsection{Proof of \eqref{l1_pf}:}
We have
\begin{multline}
\sum_{j=1}^4 \langle 1,   (V(r) \chi_{|z_2|<|r_2|} + V(r_1,z_2)\chi_{|r_2|<|z_2|} )j\rangle
=\int_{\tilde \Omega_1\times \BR}(V(r)\chi_{|z_2|<|r_2|}+V(r_1,z_2)\chi_{|r_2|<|z_2|} ) \overline{\Phi_\lambda(r,z_1)}\\
\times \Big(\Phi_\lambda(r,z_1)e^{-2\eps|z_2|}+ \Phi_\lambda(r_1,-r_2,z_1)e^{-2\eps|z_2|-2i\eta z_2}
\mp \Phi_\lambda(r_1,z_2,z_1)e^{-\eps(|r_2|+|z_2|)-i\eta(z_2- r_2)} \\
\mp \Phi_\lambda(r_1,- z_2,z_1)e^{-\eps(|r_2|+|z_2|)-i\eta(z_2+ r_2)}\Big)\dd r \dd z
\end{multline}
The claim follows from dominated convergence provided that
\begin{multline}
\int_{\BR^4}(V(r)\chi_{|z_2|<|r_2|}+V(r_1,z_2)\chi_{|r_2|<|z_2|} )|\Phi_\lambda(r,z_1)|\Big(|\Phi_\lambda(r,z_1)|+ |\Phi_\lambda(r_1,-r_2,z_1)|\\
+| \Phi_\lambda(r_1,z_2,z_1)|+| \Phi_\lambda(r_1,- z_2,z_1)|\Big)\dd r \dd z
\end{multline}
is finite.
Using the Schwarz inequality in $z_1$ and carrying out the integration over $z_2$, this is bounded above by
\begin{equation}
4 \int_{\BR^3}(V(r)\chi_{|z_2|<|r_2|}+V(r_1,z_2)\chi_{|r_2|<|z_2|} ) \lVert \Phi_\lambda \rVert_{L_1^\infty L_{2}^2}\dd r \dd z_2
\leq 16 \int_{\BR^2} V(r)|r_2|\dd r  \lVert \Phi_\lambda \rVert_{L_1^\infty L_{2}^2}
\end{equation}
This is finite since $\lVert \Phi_\lambda \rVert_{L_1^\infty L_{2}^2}<\infty$ was shown in Lemma~\ref{lea:phi_infty} and $|\cdot |V \in L^1$ by assumption.

\subsection{Proof of \eqref{l2_pf}:}
{\bf j=1,2:}
We have
\begin{equation}
\langle 1,   V(r_1,z_2) 1\rangle=\int_{\tilde \Omega_1\times \BR} V(r_1,z_2)|\Phi_\lambda(r,z_1)|^2e^{-2\eps|z_2|}\dd r \dd z
\end{equation}
and
\begin{equation}
\langle 1,   V(r_1,z_2) 2\rangle=\int_{\tilde \Omega_1\times \BR} V(r_1,z_2) \overline{\Phi_\lambda(r,z_1)}\Phi_\lambda(r_1,-r_2,z_1)e^{-2\eps|z_2|-2i\eta z_2}\dd r \dd z
\end{equation}
In both cases we can apply dominated convergence since $V(r_1,z_2)|\Phi_\lambda(r,z_1)|^2\in L^1(\BR^4)$ by Lemma~\ref{lea:Vphi_reg.1} (and using additionally the Schwarz inequality in the second case) and obtain the first two terms in $L_2$.

{\bf j=3,4:}
We start with the case of Neumann boundary conditions.
Rewriting the expression in momentum space we have
\begin{multline}
\langle 1,   V(r_1,z_2) j\rangle=\int_{\BR^4} V(r_1,z_2)  \chi_{\tilde \Omega_1} \overline{\Phi_\lambda(r,z_1)}\Phi_\lambda(r_1,\pm z_2,z_1)e^{-\eps|z_2|-i\eta z_2}e^{-\eps|r_2|\pm i\eta r_2}\dd r \dd z\\
=\frac{2}{\pi} \int_{\BR^4}\overline{\widehat{\Phi_\lambda}(p,q_1)}\rlap{$\phantom{V} \chi_{\tilde\Omega_1}$}\widehat {V\phantom{\chi_{\Omega}}\Phi_\lambda}
(p_1,p_2',q_1) \frac{\eps^2}{(\eps^2 +(p_2\mp \eta)^2)(\eps^2 +(p_2'\mp \eta)^2)} \dd p_1 \dd p_2 \dd p_2' \dd q_1\\
=\frac{2}{\pi}  \int_{\BR^2}g_0(\eps p_2\pm \eta,\eps p_2'\pm \eta )\frac{1}{(1 +p_2^2)(1 +p_2'^2)} \dd p_2 \dd p_2'
\end{multline}
where the upper/lower signs correspond to $j=3$ and $j=4$, respectively, and $g_0$ is defined as in Lemma~\ref{lea:cont_bd}.
It follows from Lemma~\ref{lea:cont_bd}, dominated convergence and $\int_\BR \frac{1}{1+x^2} \dd x =  \pi$ that
\begin{equation}
\lim_{\eps\to0} \langle 1,   V(r_1,z_2) j\rangle=2\pi  g_0(\pm \eta,\pm \eta )
\end{equation}
For Dirichlet boundary conditions this comes with a minus sign.

\section{Weak coupling asymptotics}\label{sec:lambdato0}
In this section we shall prove Lemma~\ref{lea:lambdato0}.
We prove the desired asymptotic bounds $L_1=O(1)$ and $L_2\leq -C/\lambda$ as $\lambda\to 0$ in Sections \ref{sec:l1_bd} and \ref{sec:l2_bd}, respectively.

\subsection{Asymptotics of $L_1$}\label{sec:l1_bd}
We recall the definition of $L_1$
\begin{multline}\label{eq:eps_to_0.2.b}
L_1=  \int_{\tilde \Omega_1\times \BR}  \chi_{\vert z_2\vert <\vert r_2\vert}V(r)\Bigg(\vert \Phi_\lambda(r_1,r_2,z_1)\vert^2+\vert \Phi_\lambda(r_1,z_2,z_1)\vert^2 \\
+\overline{\Phi_\lambda(r_1,r_2,z_1)}\Phi_\lambda(r_1,-r_2,z_1)e^{-2i\eta(\lambda) z_2}
+\overline{\Phi_\lambda(r_1,z_2,z_1)}\Phi_\lambda(r_1,-z_2,z_1)e^{-2i\eta(\lambda) r_2}\\
\mp \overline{\Phi_\lambda(r_1,r_2,z_1)}\Phi_\lambda(r_1,z_2,z_1)e^{i\eta(\lambda)(r_2- z_2)}
\mp \overline{\Phi_\lambda(r_1,z_2,z_1)}\Phi_\lambda(r_1,r_2,z_1)e^{-i\eta(\lambda) (r_2-z_2)}\\
\mp \overline{\Phi_\lambda(r_1,r_2,z_1)}\Phi_\lambda(r_1,-z_2,z_1)e^{-i\eta(\lambda)(r_2+ z_2)}
\mp \overline{\Phi_\lambda(r_1,z_2,z_1)}\Phi_\lambda(r_1,-r_2,z_1)e^{i\eta(\lambda) (-r_2+z_2)}
\Bigg)\dd r \dd z
\end{multline}
The goal is to show that $ L_1$ is of order $O(1)$ as $\lambda\to0$.
By the Schwarz inequality, it suffices to prove that $ \int_{\tilde \Omega_1\times \BR} \chi_{|z_2|<|r_2|} V(r) (|\Phi_\lambda(r_1,r_2,z_1)|^2 +|\Phi_\lambda(r_1,z_2,z_1)|^2)\dd r \dd z=O(1)$.
Furthermore, since $\Phi_\lambda=\Phi_\lambda^d\mp \Phi_\lambda^{ex,<}\mp \Phi_\lambda^{ex,>}$ (see \eqref{phi_lambda_j.1} and \eqref{phi_lambda_j.2} for the definitions), again by the Schwarz inequality it suffices to prove
\begin{equation}\label{l2.1}
 \int_{ \tilde \Omega_1\times \BR} \chi_{|z_2|<|r_2|} V(r) |\Phi^j_\lambda(r_1,r_2,z_1)|^2 \dd r \dd z =O(1)
\end{equation}
and
\begin{equation}\label{l2.2}
 \int_{ \tilde \Omega_1\times \BR} \chi_{|z_2|<|r_2|} V(r) |\Phi^j_\lambda(r_1,z_2,z_1)|^2 \dd r \dd z =O(1)
\end{equation}
for $j\in\{d,(ex,<),(ex,>)\}$.

{\bf Case $j\in \{d,(ex,>)\}$:}
In Lemma~\ref{lea:phi_infty} we show that $\sup_{r\in \BR^2}\int_\BR |\Phi^j_\lambda(r,z_1)|^2\dd z_1 = O(1)$.
Both \eqref{l2.1} and \eqref{l2.2} follow since $|\cdot | V \in L^1$.

{\bf Case $j=(ex,<)$:}
Let $W_1(r):=2|r_2| V(r)$ and $W_2(r):=\int_{\BR} V(r_1,z_2)\chi_{|r_2|<|z_2|} \dd z_2$.
We have $W_1,W_2\in L^1(\BR^2)$.
Note that
\begin{equation}
\int_{ \tilde \Omega_1\times \BR} \chi_{|z_2|<|r_2|} V(r) |\Phi^{ex,<}_\lambda(r_1,r_2,z_1)|^2 \dd r \dd z = \int_{\tilde \Omega_1} W_1(r) |\Phi^{ex,<}_\lambda(r_1,r_2,z_1)|^2 \dd r \dd z_1
\end{equation}
and
\begin{equation}
 \int_{\tilde \Omega_1\times \BR} \chi_{|z_2|<|r_2|} V(r) |\Phi^{ex,<}_\lambda(r_1,z_2,z_1)|^2 \dd r \dd z =\int_{\tilde \Omega_1} W_2(r) |\Phi^{ex,<}_\lambda(r_1,r_2,z_1)|^2 \dd r \dd z_1,
\end{equation}
where we renamed $z_2\leftrightarrow r_2$.
For any $L^1$-function $W\geq 0$ we have
\begin{multline}\label{l2.3}
\left(\int_{\tilde \Omega_1} W(r) |\Phi^{ex,<}_\lambda(r_1,r_2,z_1)|^2 \dd r \dd z_1\right)^{1/2}=\lVert W^{1/2} \Phi^{ex,<}_\lambda \rVert_2=\sup_{\psi\in L^2(\tilde \Omega_1),\lVert \psi\rVert_2=1}| \langle \psi, W^{1/2} \Phi^{ex,<}_\lambda\rangle|\\
\leq \sqrt{2}\lambda \sup_{\psi_1,\psi_2\in L^2(\BR^3),\lVert \psi_1\rVert=\lVert \psi_2\rVert=1}\int_{\BR^3}\big| \widehat{W^{1/2} \psi_1}(p,q_1) B_{T_c^1}(p,(q_1,\eta)) \chi_{p_2^2<2\mu}\\
 \times\widehat{V^{1/2} \psi_2}(( q_1,p_2), p_1)\big| \dd p \dd q_1
\end{multline}
where we used the definition of $\Phi^{ex,<}_\lambda$, see \eqref{phi_lambda_j.2}, and the normalization $\lVert \Psi_\lambda  \rVert=1$ in the last step.
We bound $|\widehat{W^{1/2} \psi_1}(p,q_1)|\leq \lVert W \rVert_1^{1/2} \lVert F_2\psi_1(\cdot,q_1)\rVert_2$, and similarly for $|\widehat{V^{1/2} \psi_2}(p,q_1)|$.
Thus \eqref{l2.3} is bounded above by
\begin{equation}\label{l2.4}
\sqrt{2}\lambda \lVert W \rVert_1^{1/2}\lVert V \rVert_1^{1/2}\lVert B_T^{ex}(\eta) \rVert
\end{equation}
where $B_T^{ ex}(q_2)$ is the operator on $L^2(\BR)$ with integral kernel
\begin{equation}\label{btex}
B_T^{ ex}(q_2)(p_1,q_1)=\int_\BR B_{T}(p,q)\chi_{p_2^2<2\mu} \dd p_2.
\end{equation}
It was shown in \cite[Proof of Lemma 6.1]{roos_bcs_2023} (see Eq. (6.16) and rest of argument), that
\begin{equation}
\sup_T \sup_{q_2} \lVert B_T^{ ex}(q_2)\rVert<\infty.
\end{equation}
In particular, we conclude that $\int_{\tilde \Omega_1} W_k(r) |\Phi^{ex,<}_\lambda(r_1,r_2,z_1)|^2 \dd r \dd z_1 =O(\lambda^2)$ for $k\in\{1,2\}$.

\subsection{Asymptotics of $L_2$}\label{sec:l2_bd}
Recall that
\begin{multline}\label{eq:eps_to_0.3.b}
L_2= - \int_{\tilde \Omega_1\times \BR}  V(r)\Bigg(\vert \Phi_\lambda(r_1,z_2,z_1)\vert^2
+\overline{\Phi_\lambda(r_1,z_2,z_1)}\Phi_\lambda(r_1,-z_2,z_1)e^{-2i\eta(\lambda) r_2}\Bigg) \dd r \dd z\\
\mp 2\pi \int_{\BR^2}\Bigg(\overline{\widehat{\Phi_\lambda}(p_1, \eta (\lambda),q_1)}
\rlap{$\phantom{V} \chi_{\tilde\Omega_1}$}\widehat {V\phantom{\chi_{\Omega}}\Phi_\lambda}
(p_1,\eta(\lambda),q_1)
+\overline{\widehat{\Phi_\lambda}(p_1, -\eta(\lambda) ,q_1)}
\rlap{$\phantom{V} \chi_{\tilde\Omega_1}$}\widehat {V\phantom{\chi_{\Omega}}\Phi_\lambda}
(p_1,-\eta(\lambda),q_1)  \Bigg)\dd p_1 \dd q_1.
\end{multline}
The goal is to prove that $L_2$ diverges like $-\lambda^{-1}$ to negative infinity as $\lambda\to0$.
We shall prove that the second line in \eqref{eq:eps_to_0.3.b} is of order $O(1)$ as $\lambda\to0$.
For the first line in \eqref{eq:eps_to_0.3.b} we shall prove that it is bounded above by $-c\lambda^{-1}$ for some $c>0$ as $\lambda\to0$.

{\bf Second line of \eqref{eq:eps_to_0.3.b}:}
Let $\xi\in\{\eta,-\eta\}$.
Consider the expression
\[
\Big \vert \int_{\BR^2} \overline{\widehat{\Phi_\lambda}(p_1, \xi ,q_1)}
\rlap{$\phantom{V} \chi_{\tilde\Omega_1}$}\widehat {V\phantom{\chi_{\Omega}}\Phi_\lambda}
(p_1,\xi,q_1) \dd p_1 \dd q_1 \Big \vert
\]
which agrees with $|g_0(\xi, \xi)|$ in \eqref{eq:defg0}.
Recalling the expression for $g_0$ in \eqref{cont_bd.0} involving $L^0$ and $S$ defined at the beginning of Section~\ref{sec:pf_cont_bd} we have
\begin{multline}
\Big \vert \int_{\BR^2} \overline{\widehat{\Phi_\lambda}(p_1, \xi ,q_1)}
\rlap{$\phantom{V} \chi_{\tilde\Omega_1}$}\widehat {V\phantom{\chi_{\Omega}}\Phi_\lambda}
(p_1,\xi,q_1) \dd p_1 \dd q_1 \Big \vert \leq  \lambda \int_{\BR^2} (|
\rlap{$\phantom{V} \chi_{\tilde\Omega_1}$}\widehat {V\phantom{\chi_{\Omega}}\Phi_\lambda}
(p_1,\xi,q_1)\vert +|
\rlap{$\phantom{V} \chi_{\tilde\Omega_1}$}\widehat {V\phantom{\chi_{\Omega}}\Phi_\lambda}
(-p_1,\xi,-q_1)\vert\\
+|
\rlap{$\phantom{V} \chi_{\tilde\Omega_1}$}\widehat {V\phantom{\chi_{\Omega}}\Phi_\lambda}
(q_1,\xi,p_1)\vert  )+|
\rlap{$\phantom{V} \chi_{\tilde\Omega_1}$}\widehat {V\phantom{\chi_{\Omega}}\Phi_\lambda}
(-q_1,\xi,-p_1)\vert )
B_{T_c^1}((p_1,\xi),(q_1,\eta))|
\rlap{$\phantom{V} \chi_{\tilde\Omega_1}$}\widehat {V\phantom{\chi_{\Omega}}\Phi_\lambda}
(p_1,\xi,q_1)\vert \dd p_1 \dd q_1
\end{multline}
Using the Schwarz inequality and $|
\rlap{$\phantom{V} \chi_{\tilde\Omega_1}$}\widehat {V\phantom{\chi_{\Omega}}\Phi_\lambda}
(p_1,\xi,q_1)\vert \leq \lVert
\rlap{$\phantom{V} \chi_{\tilde\Omega_1}$}\widehat {V\phantom{\chi_{\Omega}}\Phi_\lambda}
(\cdot ,q_1)\rVert_\infty$ this is bounded above by
\begin{multline}\label{L3.2.1}
 4 \lambda \int_{\BR^2} B_{T_c^1}((p_1,\xi),(q_1,\eta)) \lVert
 \rlap{$\phantom{V} \chi_{\tilde\Omega_1}$}\widehat {V\phantom{\chi_{\Omega}}\Phi_\lambda}
(\cdot ,q_1)\rVert_\infty^2\dd p_1\dd q_1\\
\leq 4 \lambda \sup_{q_1\in \BR}\int_{\BR} B_{T_c^1}((p_1,\xi),(q_1,\eta))\dd p_1 \lVert
\rlap{$\phantom{V} \chi_{\tilde\Omega_1}$}\widehat {V\phantom{\chi_{\Omega}}\Phi_\lambda}
\rVert_{L_{2}^2(\BR)L_1^\infty(\BR^2)}^2,
\end{multline}
where in the second step we used that $\int_\BR B_{T_c^1}((p_1,\xi),(q_1,\eta)) \dd p_1$ acts as multiplication operator on $\lVert \rlap{$\phantom{V} \chi_{\tilde\Omega_1}$}\widehat {V\phantom{\chi_{\Omega}}\Phi_\lambda}
(\cdot ,q_1)\rVert_\infty$.
Using the bound on the mixed Lebesgue norm in Lemma~\ref{lea:lp_prop} and since $\lVert V^{1/2} \chi_{\tilde\Omega_1} \Phi_\lambda \rVert_2=1$ we have $\lVert
\rlap{$\phantom{V} \chi_{\tilde\Omega_1}$}\widehat {V\phantom{\chi_{\Omega}}\Phi_\lambda}
\rVert_{L_{2}^2(\BR)L_1^\infty(\BR^2)}^2\leq \lVert V\rVert_1$.
The following Lemma together with the weak coupling asymptotics of $T_c^1(\lambda)$ and $\eta(T)$ in Remark~\ref{rem:Tetato0} and Lemma~\ref{etato0}\eqref{i-eta} imply that \eqref{L3.2.1} is of order $O(1)$.
\begin{lemma}\label{lea:l3_34}
Let $\xi(T),\xi'(T)$ be functions of $T$ with $\lim_{T\to 0} \xi(T)=\lim_{T\to 0} \xi'(T)=0$.
Then as $T\to 0$,
\begin{equation}
 \sup_{q_1}\int_{\BR} B_{T}((p_1,\xi(T)),(q_1,\xi'(T)))\dd p_1 =O(\ln \mu/T).
\end{equation}
\end{lemma}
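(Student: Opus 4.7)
The plan is to reduce the bound to a one-dimensional integral of $B_T(\cdot,0)$ by exploiting translation invariance, and then to estimate the resulting singular integral directly.

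First, I would apply the pointwise inequality \eqref{BT-ineq} with $p=(p_1,\xi(T))$ and $q=(q_1,\xi'(T))$ to get
\[
B_T((p_1,\xi),(q_1,\xi')) \leq \frac{1}{2}\bigl(B_T((p_1+q_1,\xi+\xi'),0) + B_T((p_1-q_1,\xi-\xi'),0)\bigr).
\]
Integrating over $p_1$ and translating $p_1 \mapsto p_1 \mp q_1$ in the two summands eliminates the $q_1$-dependence entirely, reducing the statement to the claim that for any function $\xi_0(T) \to 0$ as $T\to 0$,
\[
\int_\BR B_T\bigl((p_1,\xi_0(T)),0\bigr)\,dp_1 = O(\ln \mu/T).
\]

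Next, I would split this integral at $p_1^2 + \xi_0^2 = 2\mu$. On the tail, the second estimate in \eqref{BT_bound} gives $B_T((p_1,\xi_0),0) \leq C(\mu)/(1+p_1^2+\xi_0^2)$, whose $p_1$-integral is bounded uniformly in $T$ and $\xi_0$. All logarithmic growth must therefore come from the bulk $p_1^2+\xi_0^2 \leq 2\mu$. There I would use the first estimate in \eqref{BT_bound}, namely $B_T((p_1,\xi_0),0) \leq 1/\max\{|p_1^2+\xi_0^2-\mu|,2T\}$. For $T$ small enough that $\xi_0(T)^2 \leq \mu/2$, setting $\alpha := \mu - \xi_0^2 \in [\mu/2, \mu]$ and substituting $p_1 = \sqrt{\alpha}\,u$ reduces the bulk contribution to a constant multiple of
\[
\frac{1}{\sqrt{\alpha}} \int_{|u| \leq \sqrt{(\mu+\alpha)/\alpha}} \frac{du}{\max\{|u^2-1|,\,2T/\alpha\}}.
\]

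The remaining computation is elementary. The region where the maximum in the denominator equals $2T/\alpha$ is a union of two intervals of length $O(T/\alpha)$ around $u = \pm 1$, contributing $O(1)$. On the complement, a local bound $|u^2-1|^{-1} \leq C/|u\mp 1|$ near each root together with an explicit integration yields a contribution of order $\ln(\alpha/T) = O(\ln \mu/T)$, using that $\alpha$ is bounded above and below uniformly in small $T$. Combining the bulk and tail estimates gives the claim. I do not anticipate any serious obstacle: the mild shifts $\xi(T), \xi'(T)$ stay inside any fixed neighbourhood of the origin for $T$ small, so the singular locus of $B_T(\cdot,0)$ on the circle $p^2 = \mu$ is intersected transversally by the line of integration, and this is precisely what produces the $O(\ln \mu/T)$ rate.
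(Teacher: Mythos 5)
Your proposal is correct, and its first half is exactly the paper's reduction: apply \eqref{BT-ineq} with $p=(p_1,\xi(T))$, $q=(q_1,\xi'(T))$ and translate $p_1\mapsto p_1\mp q_1$ to kill the $q_1$-dependence, leaving integrals of $B_T((p_1,\xi(T)\pm\xi'(T)),0)$. Where you diverge is in how that one-variable integral is bounded. The paper observes that $\int_\BR B_T((p_1,\xi_0),0)\,\dd p_1$ is literally the one-dimensional integral $\int_\BR B_{T,\mu-\xi_0^2}(p_1,0)\,\dd p_1$ with shifted chemical potential, and then quotes the known asymptotics $\int_\BR B_{T,\mu}(p_1,0)\,\dd p_1=\tfrac{2}{\sqrt\mu}(\ln(\mu/T)+O(1))$ from \cite[Lemma 3.5]{hainzl_boundary_2023}, using $\xi(T),\xi'(T)\to 0$ so that the shifted $\mu$ stays bounded away from $0$. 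You instead estimate the integral by hand: tail via the second bound in \eqref{BT_bound}, bulk via the first bound with the substitution $p_1=\sqrt{\alpha}\,u$, $\alpha=\mu-\xi_0^2\in[\mu/2,\mu]$ for small $T$, splitting at $|u^2-1|\lessgtr 2T/\alpha$ to get $O(1)+O(\ln(\alpha/T))=O(\ln\mu/T)$. Both arguments are sound; the paper's route is shorter because it leans on the earlier half-space paper and incidentally records the sharp constant $2/\sqrt\mu$, while yours is self-contained and uses only the bounds \eqref{BT_bound} already stated in this paper, which is all the lemma needs.
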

The proof can be found in Section~\ref{pf:lea_l3_34}.

{\bf First line of \eqref{eq:eps_to_0.3.b}:}
Recall from Section~\ref{sec:reg_conv} that $\Phi_\lambda=\Phi_\lambda^{>}+\Phi_\lambda^{d,<}\mp \Phi_\lambda^{ex,<}$.
We show in Lemma~\ref{lea:Vphi_reg.1} that the $L^2$-norms of $V^{1/2}(r)\Phi_\lambda^{>}(r_1,z_2,z_1)$, $V^{1/2}(r)\Phi_\lambda^{d,<}(r_1,z_2,z_1)$, and $V^{1/2}(r)\Phi_\lambda^{ex,<}(r_1,z_2,z_1)$ are of order $O(\lambda)$, $O(\lambda^{-1/2})$, and $O(\lambda^{1/2})$, respectively.
It follows with the Schwarz inequality that the first line of $L_2$ in \eqref{eq:eps_to_0.3.b} equals
\begin{equation}\label{L3_12.4}
- \int_{\tilde \Omega_1\times \BR}  V(r)\Bigg(\vert \Phi_\lambda^{d,<}(r_1,z_2,z_1)\vert^2
+\overline{\Phi_\lambda^{d,<}(r_1,z_2,z_1)}\Phi_\lambda^{d,<}(r_1,-z_2,z_1)e^{-2i\eta(\lambda) r_2}\Bigg) \dd r \dd z +O(1)
\end{equation}
Note that $\Phi_\lambda^{d,<}(r_1,z_2,z_1)=\Phi_\lambda^{d,<}(-r_1,z_2,-z_1)$.
We rewrite the expression in \eqref{L3_12.4} as
\begin{multline}\label{L3_12.1}
-\frac{1}{2}\int_{\BR^4} V(r)\overline{\Phi_\lambda^{d,<}(r_1,z_2,z_1)}\Bigg( \Phi_\lambda^{{d,<}}(r_1,z_2,z_1)+\Phi_\lambda^{{d,<}}(r_1,-z_2,z_1)e^{-2i\eta(\lambda) r_2}\Bigg) \chi_{|r_1|<|z_1|} \dd r \dd z\\
=-\frac{1}{2}\int_{\BR^4}V(r)\overline{\Phi_\lambda^{d,<}(r_1,z_2,z_1)}\Bigg( \Phi_\lambda^{{d,<}}(r_1,z_2,z_1)+\Phi_\lambda^{{d,<}}(r_1,-z_2,z_1)e^{-2i\eta(\lambda) r_2}\Bigg) \dd r \dd z\\
+\frac{1}{2}\int_{\BR^4} V(r)\overline{\Phi_\lambda^{d,<}(r_1,z_2,z_1)}\Bigg( \Phi_\lambda^{{d,<}}(r_1,z_2,z_1)+\Phi_\lambda^{{d,<}}(r_1,-z_2,z_1)e^{-2i\eta(\lambda) r_2}\Bigg) \chi_{|z_1|<|r_1|} \dd r \dd z
\end{multline}
We first consider the last line in \eqref{L3_12.1} with the restriction to $|z_1|<|r_1|$.
We prove that this term is of order $O(1)$ as $\lambda\to0$.
Second, we will prove that the expression on the second line in \eqref{L3_12.1} is bounded above by $-c \lambda^{-1}$ for some constant $c>0$ as $\lambda\to0$.

{\bf Asymptotics of third line in \eqref{L3_12.1}:}
Define $W\in L^1(\BR^3)$ by $W(r,z_1):= V(r) \chi_{|z_1|<|r_1|}$.
By the Schwarz inequality it suffices to prove that $\int_{\BR^4}W(r,z_1)|\Phi_\lambda^{d,<}(r_1,z_2,z_1)|^2 \dd r \dd z=O(1)$ for $\lambda\to0$.
Using the definition of $\Phi_\lambda^{d,<}$ we have
\begin{multline}
\int_{\BR^4}W(r,z_1)|\Phi_\lambda^{d,<}(r_1,z_2,z_1)|^2 \dd r \dd z=\frac{2\lambda^2}{(2\pi)^{1/2}} \int_{\BR^5} \widehat{W}((p_1-p_1',0),q_1-q_1')B_{T_c^1}(p,(q_1,\eta))\\
\times \overline{\widehat{V^{1/2}\Psi_\lambda}(p,q_1)}B_{T_c^1}((p'_1,p_2),(q'_1,\eta)) \widehat{V^{1/2}\Psi_\lambda}(p'_1,p_2,q'_1)\chi_{p^2+q_1^2< 2\mu}\chi_{p_1'^2+p_2^2+q_1'^2< 2\mu}\dd p \dd p_1'  \dd q_1 \dd q_1'
\end{multline}
Using $|\widehat{W}(p,q_1)|\leq \frac{\lVert W \rVert_1}{(2\pi)^{3/2}}$ and $\lVert \widehat{V^{1/2}\Psi_\lambda}(\cdot ,q_1)\rVert_\infty\leq \lVert V\rVert_1^{1/2} \lVert F_2 \Psi_\lambda(\cdot,q_1)\rVert_2$ we bound this from above by
\begin{multline}\label{L3_12.22}
\frac{\lambda^2}{2\pi^2}\lVert W \rVert_1 \lVert V\rVert_1\int_{\BR^5} B_{T_c^1}(p,(q_1,\eta)) B_{T_c^1}((p'_1,p_2),(q'_1,\eta))\chi_{p^2+q_1^2< 2\mu}\chi_{p_1'^2+p_2^2+q_1'^2< 2\mu}\\
\times  \lVert F_2\Psi_\lambda(\cdot ,q_1)\rVert_2 \lVert F_2\Psi_\lambda(\cdot ,q'_1)\rVert_2 \dd p \dd p_1'  \dd q_1 \dd q_1'\\
\leq \frac{\lambda^2}{2\pi^2}\lVert W \rVert_1\lVert V\rVert_1\Bigg[\sup_{q_1,q_1'\in \BR}\int_{\BR^3} B_{T_c^1}(p,(q_1,\eta))B_{T_c^1}((p'_1,p_2),(q'_1,\eta)) \chi_{p_1'^2+q_1'^2+p_2^2<2\mu}\chi_{p^2+q_1^2<2\mu} \dd p \dd p_1'\Bigg] \\
\times \Big( \int_\BR \lVert F_2\Psi_\lambda(\cdot ,q_1)\rVert_2  \chi_{q_1^2<2\mu}  \dd q_1 \Big)^2
\end{multline}
The integral over the product of the two $B_{T_c^1}$ terms is of order $O((\ln \mu /T_c^1(\lambda))^3)$ by Lemma~\ref{lea:BB_asy}.
Together with the asymptotics of $T_c^1(\lambda)$ in Remark~\ref{rem:Tetato0}, the term in the square bracket in \eqref{L3_12.22} is thus of order $O(\lambda^{-3})$.
Splitting the domain of integration into $|q_1|/\sqrt{\mu}\gtrless (T_c^1/\mu)^\beta$ for some $0< \beta<1$ and using the Schwarz inequality we observe that
\begin{equation}
\int_\BR  \lVert F_2\Psi_\lambda(\cdot ,q_1)\rVert_2  \chi_{q_1^2< 2\mu}  \dd q_1 \leq (2\sqrt{\mu}(T_c^1/\mu)^\beta)^{1/2} \lVert \Psi_\lambda \rVert_2 +(2\sqrt{2\mu})^{1/2}  \lVert F_2\Psi_\lambda \chi_{|q_1|/\sqrt{\mu}>(T_c^1/\mu)^\beta} \rVert_2
\end{equation}
It was shown in Lemma~\ref{etato0}\eqref{i-q-range} that $\lVert F_2\Psi_\lambda \chi_{|q_1|/\sqrt{\mu}>(T_c^1/\mu)^\beta}  \rVert_2=O(\lambda^{1/2})$.
With the asymptotics of $T_c^1(\lambda)$ in Remark~\ref{rem:Tetato0} we have $(T_c^1/\mu)^{\beta/2}\leq O((\ln \mu/T_c^1)^{-1})=O(\lambda)$.
Thus,
\[
\Big( \int_\BR \lVert F_2\Psi_\lambda(\cdot ,q_1)\rVert_2  \chi_{q_1^2<2\mu}  \dd q_1 \Big)^2=O(\lambda)
\]
and \eqref{L3_12.22} is of order $O(1)$.

{\bf Asymptotics of second line in \eqref{L3_12.1}:}
Writing out the definition of $\Phi_\lambda^{d,<}$, we have
\begin{multline}\label{L3_12.7}
\int_{\BR^4}V(r)\overline{\Phi_\lambda^{d,<}(r_1,z_2,z_1)}\Phi_\lambda^{d,<}(r_1,-z_2,z_1)e^{-2i\eta(\lambda) r_2} \dd r \dd z
=2\lambda^2\int_{\BR^4}\widehat V(p_1+p_1',2\eta) B_{T_c^1}(p,(q_1,\eta)) \\
\times \overline{ \widehat{ V^{1/2}\Psi_\lambda}(p,q_1)}B_{T_c^1}((p_1',p_2),(q_1,\eta))\widehat{ V^{1/2}\Psi_\lambda}(p_1',p_2,q_1) \chi_{p^2+q_1^2<2\mu}\chi_{p_1'^2+p_2^2+q_1^2<2\mu}\dd p \dd p_1' \dd q_1
\end{multline}
We can thus write
\begin{equation}
\frac{1}{2}\int_{\BR^4}V(r)\overline{\Phi_\lambda^{<,d}(r_1,z_2,z_1)}\Big(\Phi_\lambda^{<,d}(r_1,z_2,z_1) +\Phi_\lambda^{<,d}(r_1,-z_2,z_1)e^{-2i\eta(\lambda) r_2}\Big) \dd r \dd z
=\langle F_2 \Psi_\lambda, M_{\lambda} F_2 \Psi_\lambda \rangle ,
\end{equation}
where $M_\lambda$ is the operator acting on $L^2(\BR^3)$ given by
\begin{multline}
\langle \psi, M_\lambda \psi \rangle=
\lambda^2\int_{\BR^4}(\widehat V(p_1-p_1',0)+\widehat V(p_1+p_1',2\eta)) B_{T_c^1}(p,(q_1,\eta))\overline{ F_1 V^{1/2}\psi (p,q_1)} \chi_{p^2+q_1^2<2\mu}\\
\times B_{T_c^1}((p_1',p_2),(q_1,\eta))\chi_{p_1'^2+p_2^2+q_1^2<2\mu} F_1 V^{1/2}\psi (p_1',p_2,q_1)\dd p \dd p_1' \dd q_1
\end{multline}
By the same argument as in the proof of $\int_{\BR^4} V(r) \vert \Phi_\lambda^{d,<}(r_1,z_2,z_1)\vert^2 \dd r \dd z=O(\lambda^{-1})$ in Lemma~\ref{lea:Vphi_reg.1} (see \eqref{phid<_asy}) we have $\lVert M_\lambda \rVert = O(\lambda^{-1})$.
Recall the projections $\BP$ and $\BQ_\beta$ from Section~\ref{sec:reg_conv}.
Let $\BT$ be the projection $\BT=\BP \BQ_\beta$ for some $0<\beta<1$ and $\BT^\perp=1-\BT$.
We have
\begin{equation}\label{L3_12.9}
\langle F_2 \Psi_\lambda, M_\lambda F_2 \Psi_\lambda \rangle = \langle \BT F_2 \Psi_\lambda, M_\lambda \BT F_2 \Psi_\lambda \rangle + \langle \BT F_2 \Psi_\lambda, M_\lambda \BT^\perp F_2 \Psi_\lambda \rangle + \langle \BT^\perp F_2 \Psi_\lambda, M_\lambda F_2 \Psi_\lambda \rangle
\end{equation}
Since $\BP $ and $\BQ_\beta$ commute, we have $\lVert \BT^\perp F_2 \Psi_\lambda\rVert=\lVert \BQ_\beta^\perp F_2\Psi_\lambda+\BQ_\beta \BP^\perp F_2 \Psi_\lambda \rVert =O(\lambda^{1/2})$ according to the asymptotics for $\lVert\BQ^\perp F_2 \Psi_\lambda\rVert$ and $\lVert\BP^\perp F_2 \Psi_\lambda\rVert$ proved in Lemma~\ref{etato0}\eqref{i-Pphi} and \eqref{i-q-range}.
In particular, the last two terms in \eqref{L3_12.9} are of order $O(\lambda^{-1/2})$.
The remaining term in \eqref{L3_12.9} is bounded below by
\begin{multline}\label{jMj.1}
 \langle \BT F_2 \Psi_\lambda, M_\lambda \BT F_2 \Psi_\lambda \rangle \\
\geq \inf_{|q_1|/\sqrt{\mu}<(T_c^1/\mu)^\beta} \lambda^2\int_{\BR^3}(\widehat V(p_1-p_1',0)+\widehat V(p_1+p_1',2\eta)) B_{T_c^1}((p_1,p_2),(q_1,\eta)) \widehat{ V j_2} (p) \chi_{p^2+q_1^2<2\mu}\\
\times B_{T_c^1}((p_1',p_2),(q_1,\eta))\chi_{p_1'^2+p_2^2+q_1^2<2\mu}\widehat{Vj_2} (p_1',p_2)\dd p \dd p_1'  \lVert \BT F_2 \Psi_\lambda \rVert_2^2\lVert V^{1/2}j_2\rVert_2^{-2}
\end{multline}
The remainder of the proof follows the same ideas as the proof of \cite[Lemma 4.11]{roos_bcs_2023}.
Since $V\geq 0$ we have $\widehat{V}(0)>0$.
Furthermore, the eigenvalue equation $e_\mu V^{1/2}j_2=O_\mu V^{1/2}j_2=\widehat{Vj_2}(|p|=\sqrt{\mu}) V^{1/2}j_2 $ implies that $\widehat{Vj_2}(|p|=\sqrt{\mu})=e_\mu>0$.
By continuity of $\widehat{V}$ and $\widehat{Vj_2}$ and since $\eta(\lambda)\to 0$ for $\lambda\to 0$ (see  Lemma~\ref{etato0}\eqref{i-eta}), there exist $\tilde \lambda>0$, $0<\delta<\mu$ and $c_1>0$ such that for all $\sqrt{\mu-\delta}<p_2<\sqrt{\mu+\delta}, p_1^2<4\delta,p_1'^2<4 \delta$ and $\lambda<\tilde \lambda$ we have
\begin{equation}
(\widehat V(p_1-p_1',0)+\widehat V(p_1+p_1',2\eta))  \widehat{ Vj_2} (p) \widehat{Vj_2} (p_1',p_2)\chi_{p^2+q_1^2<2\mu}\chi_{p_1'^2+p_2^2+q_1^2<2\mu}\lVert V^{1/2}j_2\rVert_2^{-2}>c_1.
\end{equation}

Using the second part of Lemma~\ref{lea:BB_asy} and the boundedness of $\widehat{V}, \widehat{Vj_2}$, it follows that up to an error of order $O(\lambda^2 (\ln \mu/T_c^1)^{5/2})=O(\lambda^{-1/2})$ we may restrict the domain of integration in \eqref{jMj.1} to $\sqrt{\mu-\delta}<p_2<\sqrt{\mu+\delta}, p_1^2<4\delta,p_1'^2<4 \delta$ .
Since $\lVert \BT F_2 \Psi_\lambda \rVert_2^2=1-O(\lambda) \geq \frac{1}{2}$ for small $\lambda$, we obtain
\begin{multline}
 \langle \BT F_2 \Psi_\lambda, M_\lambda \BT F_2 \Psi_\lambda \rangle
\geq \frac{c_1}{2}\inf_{|q_1|/\sqrt{\mu}<(T_c^1/\mu)^\beta} \lambda^2\int_{\BR^3} B_{T_c^1}((p_1,p_2),(q_1,\eta))\\
\times B_{T_c^1}((p_1',p_2),(q_1,\eta))\chi_{\mu-\delta<p_2^2<\mu+\delta}\chi_{p_1^2<4 \delta} \chi_{p_1'^2<4 \delta}\dd p \dd p_1'  +O(\lambda^{-1/2})
\end{multline}
Using Lemma~\ref{lea:BB_asy} once more, we may leave away the characteristic functions at the expense of an error of order $O(\lambda^{-1/2})$.
Since $\eta(\lambda)=O(T_c^1(\lambda))$, there is a $c_2>0$ such that $\eta^2+(\sqrt{\mu}(T_c^1/\mu)^\beta)^2\leq c_2^2 \mu (T_c^1/\mu)^{2\beta}$ for $T_c^1<\mu$.
The following Lemma, whose proof is given in Section~\ref{pf:lea_5.2},
 thus concludes the proof of Lemma~\ref{lea:lambdato0}.
\begin{lemma}\label{bb_asy_lowerbd}
Let $\mu,c_2>0$, $0<\beta<1$ and $\eps:=c_2 \sqrt{\mu} (T/\mu)^{\beta}$ for $T>0$.
Then there are constants $T_0,C>0$ such that
\begin{equation}\label{jMj.2}
\inf_{|q|<\eps} \int_\BR \left(\int_{\BR} B_{T}(p,q) \dd p_1\right)^2\dd p_2\geq C(\ln \mu/T)^3
\end{equation}
for all $0<T<T_0$.
\end{lemma}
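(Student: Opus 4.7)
The strategy is to obtain a pointwise lower bound on $F(p_2, q) := \int_{\BR} B_T((p_1, p_2), q) \, \dd p_1$ uniformly in $|q| \leq \eps$ and for $p_2$ in a suitable range, and then to integrate $F^2$ in $p_2$. With $a := (p+q)^2 - \mu$ and $b := (p-q)^2 - \mu$ viewed as functions of $p_1$ with $p_2$ and $q$ fixed, we have $a + b = 2(p_1^2 + p_2^2 + q^2 - \mu)$ and $a - b = 4 p \cdot q$. For $|p_2| < \sqrt{\mu}$ set $\alpha := \mu - p_2^2$.

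On the slice $p_1 = \sqrt{\alpha} + s$ with $s \in [\eta,\, \sqrt{\alpha}/2]$ and $\eta := 8 \eps \sqrt{\mu}/\sqrt{\alpha}$, the bounds $2s^2 \leq \sqrt{\alpha}\, s$ and $2q^2 \leq \sqrt{\alpha}\, s/4$ (both valid for $s \geq \eta$) yield $4\sqrt{\alpha}\, s \leq a+b \leq 6\sqrt{\alpha}\, s$, while the Cauchy-Schwarz estimate $|p \cdot q| \leq |p||q| \leq 2\sqrt{\mu}\, \eps$ gives $|a-b| \leq 8\sqrt{\mu}\, \eps \leq \tfrac12 (a+b)$. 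Hence $a$ and $b$ have the same (positive) sign and $\min(a,b) \geq 12 \sqrt{\mu}\, \eps$. Since $\sqrt{\mu}\, \eps / T = c_2 \mu^{1-\beta} T^{\beta - 1} \to \infty$ as $T \to 0$, for $T$ small enough $\min(a,b) \geq T$, whence $\tanh(a/(2T)) + \tanh(b/(2T)) \geq 2 \tanh(1/2)$ and
\begin{equation*}
B_T((p_1, p_2), q) \geq \frac{2 \tanh(1/2)}{a+b} \geq \frac{\tanh(1/2)}{3 \sqrt{\alpha}\, s}.
\end{equation*}
Integrating in $s$ over $[\eta, \sqrt{\alpha}/2]$ and adding the identical contribution from $p_1 = -\sqrt{\alpha} - s$ (for which the bounds above proceed in the same way, since $a+b$ depends on $p_1$ only through $p_1^2$ and the estimate on $|p \cdot q|$ is unchanged) yields
\begin{equation*}
F(p_2, q) \geq \frac{2 \tanh(1/2)}{3 \sqrt{\alpha}} \ln \frac{\alpha}{16 \eps \sqrt{\mu}}.
\end{equation*}

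Now restrict to $\alpha \in I_T := [32 e\, \eps \sqrt{\mu},\, \mu/2]$, on which the logarithm above is $\geq 1$. Changing variables from $p_2 > 0$ to $\alpha = \mu - p_2^2$ one has $|\dd p_2| = \dd \alpha/(2 p_2) \geq \dd \alpha/(2 \sqrt{\mu})$, so
\begin{equation*}
\int_{\BR} F(p_2, q)^2 \, \dd p_2 \geq \frac{2 \tanh(1/2)^2}{9 \sqrt{\mu}} \int_{32 e\, \eps \sqrt{\mu}}^{\mu / 2} \frac{(\ln (\alpha/(16 \eps \sqrt{\mu})))^2}{\alpha} \, \dd \alpha.
\end{equation*}
Substituting $u = \ln(\alpha/(16 \eps \sqrt{\mu}))$ transforms the remaining integral into $\int_{\ln(2e)}^{u_{\max}} u^2 \, \dd u$ with $u_{\max} = \ln(\sqrt{\mu}/(32 \eps)) = \beta \ln(\mu/T) + O(1)$, using $\sqrt{\mu}/\eps = c_2^{-1}(\mu/T)^\beta$. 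This evaluates to $u_{\max}^3/3 + O(u_{\max}^2) = \tfrac13 \beta^3 (\ln \mu/T)^3 + O((\ln \mu/T)^2)$, yielding the lemma for any $C < 2\tanh(1/2)^2 \beta^3/(27 \sqrt{\mu})$ and $T_0$ sufficiently small.

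The main technical obstacle is establishing the pointwise lower bound above with the correct logarithmic prefactor $\ln(\alpha/(\eps \sqrt{\mu}))$, rather than the full $\ln(\sqrt{\alpha}/T)$ arising in the $q = 0$ case of Lemma~\ref{lea:BB_asy}. The nonzero $q$ shifts the zeros of $a$ and $b$ away from the unperturbed Fermi-sphere zero $p_1 = \sqrt{\alpha}$ by displacements of order $\eps \sqrt{\mu}/\sqrt{\alpha}$, and in a neighborhood of this size the functions $a$ and $b$ may have opposite signs, causing the crucial near-cancellation $\tanh(a/(2T)) + \tanh(b/(2T)) \approx 0$ under which the lower bound $B_T \geq c/(a+b)$ fails. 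Excluding precisely such a neighborhood through the choice of $\eta$ costs a factor $\beta$ in the logarithm, which enters cubed in the final bound but remains strictly positive since $\beta > 0$ is fixed.
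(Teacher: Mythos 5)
Your proof is correct and follows essentially the same strategy as the paper: restrict (uniformly in $|q|\le\eps$) to a region where both $\tanh$ factors in the numerator of $B_T$ are bounded below, so that $B_T\gtrsim (p^2+q^2-\mu)^{-1}$, integrate in $p_1$ to gain a factor $\beta\ln(\mu/T)+O(1)$, and then integrate the square in $p_2$ to gain the third logarithm. The differences are only technical — you slice the band $p_1=\pm(\sqrt{\mu-p_2^2}+s)$, $s\in[\eta,\sqrt{\alpha}/2]$, and compute an explicit $\int \dd s/s$, whereas the paper reduces to $q\in[0,\infty)^2$ by symmetry, integrates over the exterior region $p^2>(\sqrt\mu+\eps)^2+T$ via the $\artanh$ antiderivative, and uses the inner $p_2$-cutoff $\gamma=\mu(T/\mu)^{\beta/2}$ — and both routes produce a positive constant times $\beta^3(\ln\mu/T)^3$.
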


\section{Proof of Theorem~\ref{thm2}}\label{sec:pfthm2}
This Section is dedicated to the proof of Theorem~\ref{thm2}, which states that the relative difference of $T_c^2$ and $T_c^0$ vanishes in the weak coupling limit.
It has been shown in \cite[Theorem 1.7]{roos_bcs_2023} that the relative difference of $T_c^1$ and $T_c^0$ vanishes in the weak coupling limit and we follow the same proof strategy here.
We first switch to the Birman-Schwinger picture.
Recall the Birman-Schwinger operator $A_T^0$ corresponding to $H_T^{\Omega_0}$ defined in \eqref{eq:defAT0}.
Furthermore, recall the notation $ t$, $\tilde \Omega_2$ and the representation of $UH_T^{\Omega_2}U^\dagger$ in \eqref{UH2U} from Section~\ref{sec:epsto0}.
The corresponding Birman-Schwinger operator $A_T^2:L^2_{\textrm{s}}(\tilde \Omega_2)\to L^2_{\textrm{s}}(\tilde \Omega_2)$ is given by
\begin{equation}\label{a2}
\langle \psi, A_T^2 \psi \rangle =
\int_{\BR^{4}}B_{T}(p,q) \left|\int_{\tilde\Omega_2}\frac{1}{(2\pi)^2} t (p_1,q_1,r_1,z_1) t (p_2,q_2,r_2,z_2) V^{1/2}(r)\psi(r,z) \dd r \dd z\right|^2\dd p\dd q
\end{equation}
and it follows from the Birman-Schwinger principle that $\sgn \inf \sigma(H_T^{\Omega_2}) = \sgn (1/\lambda-\sup \sigma(A_T^{2}))$.
Let $a_T^j=\sup \sigma(A_T^j)$.
For $\lambda\to 0$ asymptotically $a_T^0=e_\mu \ln(\mu/T)+O(1)$, see e.g.~\cite[Section 6]{roos_bcs_2023}.
It is a straightforward generalization of \cite[Lemma 4.1]{hainzl_boundary_2023} that the claim \eqref{eq:treldiff} is equivalent to
\begin{equation}\label{eq:at0-at2}
\lim_{T\to0} (a_T^0-a_T^2)=0
\end{equation}
and we refer to \cite{hainzl_boundary_2023} for the proof.

To verify \eqref{eq:at0-at2}, the first step is to argue that $a_T^2\geq a_T^0$ for all $T>0$.
Lemma~\ref{t2geqt1} together with \cite[Lemma 2.3]{roos_bcs_2023} imply that $\inf \sigma(H_T^{\Omega_2})\leq \inf \sigma(H_T^{\Omega_0})$ for all $\lambda,T>0$.
Using the Birman-Schwinger principle, it follows that $a_T^2\geq a_T^0$ for all $T>0$.
For details we refer to the proof of \cite[Theorem 1.7]{roos_bcs_2023}.

It remains to show that $\lim_{T\to0} (a_T^0-a_T^2)\geq 0$.
We decompose $A_T^2$ in the same spirit as we decomposed $A_T^1(q_2)$ in \eqref{AT1_decomposition}.
For $A_T^1$, the decomposition consisted of the ``unperturbed" term $A_T^0$ and the ``perturbation term" $G_T$, where the first components of the momentum variables were swapped.
For $A_T^2$ we additionally get the terms arising from swapping the variables in the second component, which leads to four terms in total.
Let $\tilde\iota: L^2(\tilde \Omega_2)\to  L^2(\BR^{4})$ be the isometry
\begin{multline}\label{tildeiota}
\tilde \iota \psi(r,z)
= \frac{1}{2} \Big(\psi(r,z) \chi_{\tilde \Omega_2}(r,z) +\psi(-r_1, r_2,-z_1,z_2) \chi_{\tilde \Omega_2}(-r_1,r_2,-z_1,z_2)\\
+\psi(r_1, -r_2,z_1,-z_2) \chi_{\tilde \Omega_2}(r_1,-r_2,z_1,-z_2)+\psi(-r,-z) \chi_{\tilde \Omega_2}(-r,-z) \Big).
\end{multline}
Using the definition of $ t$ and evenness of $V$ in $r_1$ and $r_2$ we rewrite \eqref{a2} as
\begin{multline}\label{at2r}
\langle \psi, A_{T}^2\psi \rangle=\int_{\BR^4} B_T(p,q)\Big \vert \frac{1}{2}(\rlap{$\phantom{V^{1/2}} \tilde\iota$}\widehat {V^{1/2}\phantom{\iota}\psi} (p,q)\mp  \rlap{$\phantom{V^{1/2}} \tilde\iota$}\widehat {V^{1/2}\phantom{\iota}\psi} ((q_1,p_2),(p_1,q_2))\\
\mp  \rlap{$\phantom{V^{1/2}} \tilde\iota$}\widehat {V^{1/2}\phantom{\iota}\psi} ((p_1,q_2),(q_1,p_2)) +\rlap{$\phantom{V^{1/2}} \tilde\iota$}\widehat {V^{1/2}\phantom{\iota}\psi} (q,p))\Big \vert^2 \dd p \dd q
\end{multline}
Define the self-adjoint operators $G_T^1, G_T^2$, and $N_T$ on $L^2(\BR^{4})$ through
\begin{align}
\langle \psi, G_{T}^1\psi \rangle&=\int_{\BR^{4}} \overline{F_1 V^{1/2}\psi((q_1,p_2),(p_1,q_2))} B_{T}(p,q) F_1 V^{1/2}\psi(p,q) \dd p \dd q, \label{gt1}\\
\langle \psi, G_{T}^2\psi \rangle&=\int_{\BR^{4}} \overline{F_1 V^{1/2}\psi((p_1,q_2),(q_1,p_2))} B_{T}(p,q) F_1 V^{1/2}\psi(p,q) \dd p \dd q, \ \mathrm{and} \label{gt2}\\
\langle \psi, N_{T}\psi \rangle&=\int_{\BR^{4}} \overline{F_1 V^{1/2}\psi(q,p)} B_{T}(p,q) F_1 V^{1/2}\psi(p,q) \dd p \dd q. \label{nt}
\end{align}
We slightly abuse notation and write $F_2$ for the Fourier transform in the second variable also when the second variable has two components, i.e.~$F_2 \psi (r, q)=\frac{1}{2\pi} \int_{\BR^2} e^{-iq \cdot z} \psi(r,z) \dd z$.
It follows from \eqref{at2r} and $B_T(p,q)=B_T((q_1,p_2),(p_1,q_2))=B_T(q,p)$ that
\begin{equation}\label{AT2_decomposition}
A_T^2= \tilde \iota^\dagger (A_T^0- F_2^\dagger   R_T F_2) \tilde \iota,
\end{equation}
where $R_T=\pm G_T^1 \pm G_T^2 - N_T$.
Let $B_T(\cdot,q):L^2(\BR^2)\to L^2(\BR^2)$ denote multiplication by $B_T(p,q)$ in momentum space and define the function $E_T(q)$ on $\BR^2$ through
\begin{equation}
E_T(q):=a_T^0-\lVert V^{1/2} B_T(\cdot,q) V^{1/2}\rVert_{\textrm{s}},
\end{equation}
where $\lVert \cdot \rVert_{\textrm{s}}$ denotes the operator norm of the operator restricted to even functions.
Note that $a_T^0=\sup_{q\in \BR^2}\lVert V^{1/2} B_T(\cdot,q) V^{1/2}\rVert_{\textrm{s}}$ and therefore $E_T(q)\geq 0$.
For $\psi\in L^2(\BR^4)$ let $E_T \psi (r,q)=E_T(q) \psi(r,q)$.
We get the operator inequality $a_T^0 \BI -A_T^0 \geq F_2^\dagger  E_T F_2$, where $\BI$ denotes the identity operator on  $L_{\textrm{s}}^2(\BR^4)$.
Using \eqref{AT2_decomposition}, the above inequality and that $\lVert F_2 \tilde \iota \psi \rVert_2=\lVert \psi \rVert_2$ we obtain
\begin{equation}
a_T^0-a_T^2 \geq \inf_{\psi \in L^2_s(\tilde \Omega_2), \lVert \psi\rVert_2=1} \langle F_2\tilde \iota \psi,( E_T +R_T) F_2 \tilde \iota \psi \rangle
\geq  \inf_{\psi \in L^2(\BR^4), \lVert \psi\rVert_2=1} \langle \psi,(  E_T +R_T) \psi \rangle .
\end{equation}
Therefore, it suffices to show that $\lim_{T\to 0} \inf \sigma ( E_{T}+R_T)\geq 0$.
The proof relies on the following three Lemmas.
\begin{lemma}\label{corner_w_lea3}
Let $\mu>0$ and let $V$ satisfy Assumption~\ref{asptn1}. Then $\sup_{T>0} \lVert R_{T} \rVert<\infty$.
\end{lemma}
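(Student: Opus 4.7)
The plan is to bound each of $\lVert G_T^1 \rVert$, $\lVert G_T^2 \rVert$, and $\lVert N_T \rVert$ uniformly in $T$, so that the triangle inequality applied to $R_T = \pm G_T^1 \pm G_T^2 - N_T$ yields the claim. The strategy adapts the argument used in \cite[Lemma 6.1]{roos_bcs_2023} to establish the analogous uniform bound $\sup_{T,q_2} \lVert G_T(q_2) \rVert < \infty$ for the half-space operator appearing in the decomposition \eqref{AT1_decomposition}.

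First, I would compute the explicit integral kernel of each operator by carrying out the momentum integrations. For $N_T$ one obtains
\begin{equation*}
K_{N_T}(r,z;r',z') = \frac{V^{1/2}(r) V^{1/2}(r')}{(2\pi)^2}\, B_T(z,z')\, e^{i(z' \cdot r - z \cdot r')},
\end{equation*}
with analogous expressions for $G_T^1$ and $G_T^2$ involving only partial swaps of the momentum components. Throughout, I would exploit the symmetries $B_T(p,q) = B_T(q,p)$ and $B_T((q_1,p_2),(p_1,q_2)) = B_T(p,q)$, both consequences of $K_T(a,b)$ depending only on $\lvert a \rvert^2$ and $\lvert b \rvert^2$ and being symmetric in its arguments. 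These symmetries ensure that after a Cauchy--Schwarz step on the quadratic form, one obtains (possibly permuted) copies of the same integral $\int B_T(p,q) \lvert F_1 V^{1/2}\psi(p,q)\rvert^2 dp\,dq$ on both sides.

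The next step is to estimate the resulting integral via a Schur test or Young's inequality, split according to the high-momentum region $\{p^2 + q^2 > 2\mu\}$ and the low-momentum region $\{p^2 + q^2 \leq 2\mu\}$. On the high-momentum region, the pointwise bound $B_T(p,q) \leq C(\mu)/(1 + p^2 + q^2)$ from \eqref{BT_bound}, combined with $V \in L^1(\BR^2) \cap L^t(\BR^2)$ from Assumption~\ref{asptn1}\eqref{aspt.1} and the Hausdorff--Young inequality to control $\widehat V$, immediately yields a $T$-independent estimate.

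The main obstacle is the low-momentum region, where $B_T$ can be as large as $1/(2T)$ near the intersection $\{\lvert p+q\rvert^2 = \mu\} \cap \{\lvert p-q\rvert^2 = \mu\}$ of Fermi spheres. A naive estimate would give only a bound of order $\ln(\mu/T)$, matching the growth of $\lVert A_T^0 \rVert$ in \eqref{supdeltaVBV}; to upgrade this to a uniform bound, one must exploit the specific swap structure of $G_T^1, G_T^2, N_T$ to extract cancellations, as in \cite[Lemma 6.1]{roos_bcs_2023}. Concretely, the swap operations shift the arguments of $F_1 V^{1/2}\psi$ transversally to the singular locus of $B_T$; combined with the oscillatory factors in the kernels displayed above and the regularity of $V$ (in particular $\lvert \cdot \rvert V \in L^1$ from Assumption~\ref{asptn1}\eqref{aspt.3}, which provides the continuity of $\widehat V$ needed to exploit this transversality), these cancellations suffice to absorb the logarithmic singularity and yield the uniform-in-$T$ bound on each of the three operators.
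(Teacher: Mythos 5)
Your overall architecture --- triangle inequality on $R_T=\pm G_T^1\pm G_T^2-N_T$, adapting or citing \cite[Lemma 6.1]{roos_bcs_2023} for $G_T^1,G_T^2$, and the kernel formula you display for $N_T$ --- is consistent with the paper, and your observation that a plain Cauchy--Schwarz step only reproduces the integral $\int B_T|F_1V^{1/2}\psi|^2$, i.e.\ the $A_T^0$ quadratic form with its $\ln(\mu/T)$ growth, is correct. The gap is in the step that is supposed to beat this logarithm. The paper's proof of $\sup_T\lVert N_T\rVert<\infty$ (the only genuinely new term here) uses no cancellation, no oscillatory phases, and no continuity of $\widehat V$ (which anyway follows from $V\in L^1$ by Riemann--Lebesgue; Assumption~\ref{asptn1}\eqref{aspt.3} plays no role in this lemma). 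It is an absolute-value argument: split $N_T=N_T^<+N_T^>$ with $\chi_{p^2,q^2<2\mu}$. For $N_T^>$ the $T$-uniform bound $B_T(p,q)\chi_{p^2+q^2>2\mu}\leq C(\mu)/(1+p^2+q^2)$ of \eqref{BT_bound} reduces the claim to boundedness of $V^{1/2}(1-\Delta)^{-1}V^{1/2}$. For $N_T^<$ one uses the $T$-independent pointwise bound $B_T(p,q)\leq 2/(|(p+q)^2-\mu|+|(p-q)^2-\mu|)$ of \eqref{BT_bound_2} together with $|F_1V^{1/2}\psi(p,q)|\leq\lVert V\rVert_1^{1/2}\lVert\psi(\cdot,q)\rVert_2$. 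The swap structure of $N_T$ then matters not through any phase cancellation but because it couples $\lVert\psi(\cdot,p)\rVert_2$ with $\lVert\psi(\cdot,q)\rVert_2$: in the dangerous low-momentum region one is left with an integral operator in the second variable, rather than the multiplication operator (in $q$) with a $\ln(1/T)$-sized symbol that one gets for $A_T^0$. The $(p_1,q_1)$-integration is controlled by the operator norm of $D_{\mu-(p_2+q_2)^2,\,\mu-(p_2-q_2)^2}$ from \eqref{Dmu1mu2}, bounded in \eqref{Dmu1mu2_bound} by $C_\alpha\bigl(1+|\mu-(|p_2|+|q_2|)^2|^{-1/2-\alpha}\bigr)$, and a Schur test with constant test function shows the remaining one-dimensional kernel has finite norm for $\alpha<1/2$, uniformly in $T$.

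As written, your proposal contains neither this nor any substitute mechanism: ``the swap operations shift the arguments transversally to the singular locus'' and ``oscillatory factors \dots extract cancellations'' are not arguments, and your Cauchy--Schwarz step in fact discards the swap structure, which is exactly what produces the divergent bound you subsequently need to repair. Unless you supply quantitative estimates of the above type --- a $T$-uniform pointwise bound on $B_T$ off and near the Fermi surface, the $\lVert V\rVert_1$ reduction to a low-dimensional integral operator, and an operator-norm/Schur-test bound for that operator --- the proof is incomplete at its central step.
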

\begin{lemma}\label{corner_w_lea4}
Let $\mu>0$ and let $V$ satisfy Assumption~\ref{asptn1}.
Let $\BI_{\leq \epsilon}$ act on $L^2(\BR^{4})$ as $\BI_{\leq \epsilon} \psi(r,q)=\psi(r,q)\chi_{\vert q\vert\leq \epsilon}$.
Then $\lim_{\epsilon \to 0} \sup_{T>0} \lVert \BI_{\leq \epsilon} R_{T} \BI_{\leq \epsilon} \lVert =0$.
\end{lemma}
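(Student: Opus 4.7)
Decompose $R_T = \pm G_T^1 \pm G_T^2 - N_T$; by the triangle inequality it suffices to establish $\lim_{\epsilon \to 0}\sup_{T > 0}\lVert \BI_{\le\epsilon}\,X\,\BI_{\le\epsilon}\rVert = 0$ separately for each of the three summands.

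The operator $N_T$ is handled by a direct Hilbert--Schmidt estimate. Inserting the definition of $F_1$ into \eqref{nt}, a short computation shows that the integral kernel of $N_T$ in the $(r,q)$-representation of $L^2(\BR^4)$ is
\[
K_{N_T}(r',q';r,q)=\frac{V^{1/2}(r)V^{1/2}(r')}{(2\pi)^2}\,e^{i(q\cdot r' - q'\cdot r)}\,B_T(q',q).
\]
For $\epsilon<\sqrt{\mu/4}$, the restriction $|q|,|q'|\le\epsilon$ gives $(q'+q)^2+(q'-q)^2=2(q^2+q'^2)<\mu$, so $|q^2+q'^2-\mu|\ge\mu/2$, and \eqref{BT_bound} yields $B_T(q',q)\le 2/\mu$ uniformly in $T$. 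Hence $\lVert\BI_{\le\epsilon}N_T\BI_{\le\epsilon}\rVert\le\lVert\BI_{\le\epsilon}N_T\BI_{\le\epsilon}\rVert_{\mathrm{HS}}=O(\epsilon^2)$ uniformly in $T$.

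The operators $G_T^1$ and $G_T^2$ are not compact individually (being translation invariant in $q_2$ and $q_1$, respectively), so HS bounds are unavailable; instead, one uses a direct integral decomposition. A short computation from \eqref{gt1} shows that the kernel of $G_T^1$ contains a factor $\delta(q_2-q'_2)$, giving $G_T^1 = \int^{\oplus}_{\BR} G_T(q_2)\,dq_2$ with fibers equal to the half-space operator \eqref{gt}; the symmetric statement $G_T^2 = \int^{\oplus}_{\BR} \widetilde G_T(q_1)\,dq_1$, with $\widetilde G_T$ the analogue of $G_T$ obtained by interchanging the two coordinate axes, holds by the same argument. Since $\BI_{\le\epsilon}$ acts fiberwise as multiplication by $\chi_{q_1^2+q_2^2\le\epsilon^2}\le \chi_{|q_1|\le\epsilon}$, one obtains
\[
\lVert \BI_{\le\epsilon}\,G_T^j\,\BI_{\le\epsilon}\rVert \le \sup_{|q_2|\le\epsilon}\lVert \chi_{|q_1|\le\epsilon}\,G_T(q_2)\,\chi_{|q_1|\le\epsilon}\rVert,\qquad j=1,2,
\]
reducing the question to the corresponding one-dimension-lower half-space statement.

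The main obstacle is the uniform-in-$T$ smallness of this half-space quantity as $\epsilon \to 0$. A direct HS estimate fails because the kernel of $G_T(q_2)$ involves the integral $\int B_T((q'_1,p_2),(q_1,q_2))\,dp_2$, which diverges logarithmically in $T$. Following the strategy from \cite{roos_bcs_2023}, one splits $B_T = B_T\chi_{p^2+q^2>2\mu} + B_T\chi_{p^2+q^2<2\mu}$: on the outer region \eqref{BT_bound} provides a $T$-uniform HS bound which, combined with the restriction $|q_1|,|q'_1|\le\epsilon$, gives an $O(\epsilon)$ contribution; on the inner region a Birman--Schwinger-type reformulation isolates the logarithmic Fermi-surface singularity as a rank-one contribution whose range is localized on the sphere $|p|^2=\mu$, which is bounded away from the support of the cutoff for small $\epsilon$, together with a $T$-uniformly bounded remainder that is treated by standard HS techniques. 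Summing the three contributions yields the claim.
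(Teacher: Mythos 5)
Your treatment of $N_T$ is correct and is the genuinely new part of this lemma: the kernel you compute is right, and for $\epsilon<\sqrt{\mu}/2$ the bound $B_T(q',q)\le 2/\mu$ from \eqref{BT_bound} gives a $T$-uniform Hilbert--Schmidt bound of order $\epsilon^2$. (The paper instead applies the Schwarz inequality twice and bounds $\sup_{|q|\le\epsilon}\int_{|p|\le\epsilon}B_T(p,q)\,\dd p$; both routes work.) Your reduction of $G_T^1,G_T^2$ to the fibered half-space operators via the direct integral and $\chi_{q_1^2+q_2^2\le\epsilon^2}\le\chi_{|q_1|\le\epsilon}$ is also fine.

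The gap is in your inner-region argument for the half-space quantity. The cutoff $\BI_{\le\epsilon}$ restricts only the center-of-mass momenta ($q_2$ and, through the swap structure, the fiber variables $p_1,q_1$); it does not restrict the relative momentum, and in particular the $p_2$-integration in the kernel of $G_T(q_2)$ still runs straight across the Fermi surface $p_2^2\approx\mu$. So the claim that the logarithmically divergent contribution is a rank-one piece ``whose range is localized on the sphere $|p|^2=\mu$, which is bounded away from the support of the cutoff'' is not correct: the singular contribution is not removed by support separation, and as described this step fails. The actual mechanism (this is the content of \cite[Lemma 6.2]{roos_bcs_2023}, which the paper simply cites, and which applies verbatim to $G_T^1$ and, after exchanging the coordinate axes, to $G_T^2$) is structural rather than geometric: after taking suprema in the unconstrained momentum slots, one is left with an integral kernel in the cutoff variables $(p_1,q_1)$ of the type appearing in $B_T^{ex}$ in \eqref{btex}, whose $T$-divergent part is controlled by $\ln(\sqrt{\mu}/|p_1|)+\ln(\sqrt{\mu}/|q_1|)$, i.e.\ a sum of two product (rank-one) kernels in $p_1$ and $q_1$ separately; restricting such a kernel to $|p_1|,|q_1|\le\epsilon$ gives a norm of order $\epsilon\ln(1/\epsilon)$ uniformly in $T$, while the remainder is $T$-uniformly Hilbert--Schmidt and $O(\epsilon)$. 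Either reproduce that argument or, more simply, invoke \cite[Lemma 6.2]{roos_bcs_2023} directly as the paper does; as written, your sketch does not establish the required $T$-uniform smallness of the inner region.
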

\begin{lemma}\label{corner_w_lea2}
Let $\mu>0$ and let $V$ satisfy Assumption~\ref{asptn1}.
Let $0<\epsilon<\sqrt{\mu}$.
There are constants $c_1,c_2,T_0>0$ such that for $0<T<T_0$ and $|q|>\eps$ we have $ E_{T}(q)>c_1 \vert \ln(c_2/T)\vert$.
\end{lemma}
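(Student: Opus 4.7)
The plan is to combine two facts: the known leading-order asymptotic $a_T^0 = e_\mu \ln(\mu/T) + O(1)$ as $T\to 0$, which grows unboundedly, and the estimate \eqref{supdeltaVBV} from \cite{roos_bcs_2023}, which shows that $\lVert V^{1/2} B_T(\cdot,q) V^{1/2}\rVert_{\rm s}$ is uniformly bounded in $T$ as long as $|q|$ is bounded away from zero. The difference $E_T(q)$ will therefore be dominated by the $a_T^0$ term at small $T$.

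In more detail, I would proceed as follows. Fix $0<\eps<\sqrt{\mu}$. Applying \eqref{supdeltaVBV} (taking the supremum over normalized even $\psi$, which gives $\sup\langle\psi,O_\mu\psi\rangle = e_\mu$ by Assumption~\ref{asptn1}\eqref{aspt.5}) yields the pointwise bound
\begin{equation}
\lVert V^{1/2} B_T(\cdot,q) V^{1/2}\rVert_{\rm s} \;\leq\; e_\mu \ln\!\left(\min\!\left\{\tfrac{\sqrt{\mu}}{|q|},\tfrac{\mu}{T}\right\}\right)\chi_{2<\min\{\mu/T,\sqrt\mu/|q|\}} + C(\mu,V).
\end{equation}
For $|q|>\eps$ one has $\sqrt{\mu}/|q|<\sqrt{\mu}/\eps$, so $\ln(\min\{\sqrt{\mu}/|q|,\mu/T\})\leq \max\{0,\ln(\sqrt{\mu}/\eps)\}$ for all $T>0$. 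Consequently there is a constant $C_\eps=C_\eps(\mu,V)$, independent of $T$, with
\begin{equation}
\sup_{|q|>\eps}\lVert V^{1/2} B_T(\cdot,q) V^{1/2}\rVert_{\rm s} \;\leq\; C_\eps.
\end{equation}

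Combining this uniform upper bound with the asymptotics $a_T^0 = e_\mu \ln(\mu/T) + O(1)$ recalled at the beginning of Section~\ref{sec:pfthm2}, one obtains, for every $|q|>\eps$,
\begin{equation}
E_T(q) \;=\; a_T^0 - \lVert V^{1/2} B_T(\cdot,q) V^{1/2}\rVert_{\rm s} \;\geq\; e_\mu \ln(\mu/T) - C_\eps - C'
\end{equation}
for some constant $C'$ from the $O(1)$ error. Choose $T_0>0$ small enough that $e_\mu \ln(\mu/T) - C_\eps - C' \geq \tfrac{e_\mu}{2}\ln(\mu/T)$ for all $0<T<T_0$; then the conclusion follows with $c_1=e_\mu/2$ and $c_2=\mu$.

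There is no real obstacle here: the whole argument is that the $q$-dependence of the norm $\lVert V^{1/2} B_T(\cdot,q) V^{1/2}\rVert_{\rm s}$ is harmless away from $q=0$ thanks to \eqref{supdeltaVBV}, while $a_T^0$ diverges logarithmically. The only mild point to keep in mind is that one must invoke the symmetric-functions version of the Birman–Schwinger norm and hence the restriction $\sup_{\psi\text{ even}}\langle\psi,O_\mu\psi\rangle = e_\mu$, which is guaranteed by Assumption~\ref{asptn1}\eqref{aspt.5}; the radiality of $V$ in Assumption~\ref{asptn1}\eqref{aspt.2} ensures that $B_T(\cdot,q)$ commutes with the relevant symmetry.
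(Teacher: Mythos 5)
Your argument is correct: the uniform-in-$T$ bound on $\lVert V^{1/2}B_T(\cdot,q)V^{1/2}\rVert_{\rm s}$ for $|q|>\eps$ from \eqref{supdeltaVBV} (using only $\langle\psi,O_\mu\psi\rangle\leq e_\mu$, so the equality claim for even $\psi$ is not even needed), combined with $a_T^0=e_\mu\ln(\mu/T)+O(1)$, gives the stated lower bound on $E_T(q)$. The paper itself does not reprove this lemma but cites \cite[Lemma 6.3]{roos_bcs_2023}, and your proof is essentially that argument, reconstructed from ingredients already quoted in the paper.
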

The first two Lemmas are extensions of \cite[Lemma 6.1 and Lemma 6.2]{roos_bcs_2023} and proved in Sections~\ref{sec:pf_tcrel_1} and \ref{sec:pf_tcrel_2}, respectively.
The third Lemma is contained in \cite[Lemma 6.3]{roos_bcs_2023}.

With these Lemmas, the claim follows completely analogously to the proof of \cite[Theorem 1.2 (ii)]{hainzl_boundary_2023} and we provide a sketch for completeness.
Using that $E_{T}(q)\geq 0$, we write
\begin{equation}\label{E+G+N+d}
E_{T}+ R_{T}+\delta=\sqrt{E_{T}+\delta}\left(\BI + \frac{1}{\sqrt{E_{T}+\delta}}R_{T}\frac{1}{\sqrt{E_{T}+\delta}}\right)\sqrt{E_{T}+\delta}
\end{equation}
for any $\delta>0$.
It suffices to prove that for all $\delta>0$ the norm of the second term in the bracket vanishes in the limit $T\to0$.
With the notation from Lemma~\ref{corner_w_lea4} we estimate for all $0<\epsilon<\sqrt{\mu}$
\begin{multline}
\left\lVert \frac{1}{\sqrt{E_{T}+\delta}}R_{T}\frac{1}{\sqrt{E_{T}+\delta}}\right\rVert \leq
\left\lVert \BI_{\leq\epsilon} \frac{1}{\sqrt{E_{T}+\delta}}R_{T}\frac{1}{\sqrt{E_{T}+\delta}}\BI_{\leq\epsilon}\right\rVert\\
+\left\lVert \BI_{\leq\epsilon} \frac{1}{\sqrt{E_{T}+\delta}}R_{T}\frac{1}{\sqrt{E_{T}+\delta}}\BI_{>\epsilon}\right\rVert
+\left\lVert \BI_{>\epsilon} \frac{1}{\sqrt{E_{T}+\delta}}R_{T}\frac{1}{\sqrt{E_{T}+\delta}}\right\rVert\,.
\end{multline}
Lemma~\ref{corner_w_lea2} and $E_{T}\geq 0$ imply
\begin{equation}
\lim_{T\to 0}\left\lVert \frac{1}{\sqrt{E_{T}+\delta}}R_{T}\frac{1}{\sqrt{E_{T}+\delta}}\right\rVert \leq \sup_{T>0} \frac{1}{\delta}\left\lVert \BI_{\leq\epsilon} R_{T} \BI_{\leq\epsilon}\right\rVert+\lim_{T\to 0} \frac{2}{(\delta  c_1 \vert \ln(c_2/T)\vert)^{1/2}} \lVert R_{T} \rVert.
\end{equation}
The first term can be made arbitrarily small by Lemma~\ref{corner_w_lea4} and the second term vanishes by Lemma~\ref{corner_w_lea3}.
Hence, Theorem~\ref{thm2} follows.

\section{Proofs of Auxiliary Lemmas}\label{sec:aux}
\subsection{Proof of Lemma~\ref{bdiff}}\label{sec:pfbdiff}
\begin{proof}[Proof of Lemma~\ref{bdiff}]
Using the Mittag-Leffler series (as in \cite[(2.1)]{hainzl_boundary_2023}) one can write
\begin{multline}
f(p,q,x)=2T\sum_{n\in \BZ} \Xi_n^{-1}\Big[(2q_2+ x ) (2\mu  -2q^2-2p^2-x^2+2(p_2-q_2)x)\\
+2p_2(4 p \cdot q- 2 i w_n+2(p_2-q_2)x -x^2)\Big]
\end{multline}
where
\begin{multline}
\Xi_n=\left(\left(p+q+(0,x)\right)^2-\mu-iw_n\right)\left(\left(p-q-(0,x)\right)^2-\mu+iw_n\right)\\
\times \left(\left(p+q\right)^2-\mu-iw_n\right)\left(\left(p-q\right)^2-\mu+iw_n\right)
\end{multline}
and $w_n=(2n+1)\pi T$.
Continuity of $f$ follows from dominated convergence.
For $x>\sqrt{\mu}/4$ the bound on $f$ follows from the bound on $B_T$ in \eqref{BT_bound}.
Let $Q_2=Q_1+\sqrt{\mu}/4$.
For $x<\sqrt{\mu}/4$ we have
\begin{equation}
|f(p,q,x)|\leq \sup_{|q_2|\leq Q_2} |\frac{\p}{\p q_2} B_T(p,q)|=\sup_{|q_2|\leq Q_2}| f(p,q,0)|.
\end{equation}
To bound $| f(p,q,0)|$, first note that for $x=0$ with the notation $y=(p+q)^2-\mu$, $z=(p-q)^2-\mu$ and $v=\max\{\left(|p_1|+|q_1|\right)^2+(|p_2|-|q_2|)^2-\mu,0\} $,
\begin{equation}
|\Xi_n| = \left(y^2+w_n^2\right)\left(z^2+w_n^2\right) \geq  \left(v^2+w_n^2\right)\left(\max\{(|p_2|-|q_2|)^2-\mu,0\})^2+w_n^2\right).
\end{equation}
Furthermore,
\begin{multline}
\sup_{(p,q)\in \BR^{4}, \vert q_2 \vert\leq Q_2}\left \vert \frac{4 i w_n p_2}{\max\{(|p_2|-|q_2|)^2-\mu,0\})^2+w_n^2}\right \vert\\
\leq \sup_{(p,q)\in \BR^{4}, \vert q_2 \vert<Q_2} \frac{ 4\vert p_2\vert }{\sqrt{\max\{(|p_2|-|q_2|)^2-\mu,0\})^2+w_0^2}}=:c_1 <\infty
\end{multline}
There is a constant $c_2>\mu$ depending only on $\mu$ and $Q_2$ such that $|p_2|^2\leq 4(\min\{y,z\}+c_2)$ for $|q_2|\leq Q_2$ and all $p_1,q_1\in \BR$.
One obtains that for $|q_2|\leq Q_2$
\begin{equation}\label{pf_bdiff.1}
\vert f(p,q,0) \vert \leq 2T\sum_{n\in \BZ} \frac{2Q_2|y+z|+4 \sqrt{\min\{y,z\}+c_2} |y-z|}{(y^2+w_n^2)(z^2+w_n^2)}\\
+ 2T\sum_{n\in \BZ}\frac{c_1}{v^2+w_n^2}
\end{equation}
Since the summands are decreasing in $n$, we can estimate the sums by integrals.
The second term is bounded by
\begin{multline}
4T c_1\left[ \frac{1}{v^2 +w_0^2} +\int_{1/2}^\infty  \frac{1}{v^2+4\pi^2T^2 x^2} \dd x \right]
=4T c_1 \left[ \frac{1}{v^2+w_0^2} +\frac{\arctan \left(\frac{v}{\pi T}\right)}{2\pi T v}\right]\\
 <\frac{C}{1+p_1^2+q_1^2+p_2^2}
\end{multline}
for some constant $C$ independent of $p$ and $q_1$, since $\sup_{(p,q)\in \BR^{4}, \vert q_2 \vert\leq Q_2} \frac{1+p_1^2+q_1^2+p_2^2}{1+v}<\infty$.
The first term in \eqref{pf_bdiff.1} is bounded by
\begin{multline}
16T (Q_2+2 \sqrt{\min\{|y|,|z|\}+c_2} )\max\{|y|,|z|\}\Bigg[ \frac{1}{(y^2+w_0^2)(z^2+w_0^2)} \\
+\int_{1/2}^\infty  \frac{1}{(y^2+4\pi^2T^2 x^2)(z^2+4\pi^2T^2 x^2)}\dd x \Bigg]
\end{multline}
Note that $y+z+2\mu+1=1+2p^2+2q^2$.
The claim thus follows if we prove that for $c_3>0$
\begin{equation}
\sup_{y>z>0}(1+y+z)(1+\sqrt{z+1})y\Bigg[ \frac{1}{(y^2+1)(z^2+1)} +\int_{c_3}^\infty  \frac{1}{(y^2+ x^2)(z^2+x^2)}\dd x \Bigg]<\infty
\end{equation}
The supremum over the first summand is obviously finite.
The supremum over the second summand is bounded by
\begin{equation}
\sup_{y>z>0}\frac{(1+2y)y}{y^2+ c_3^2} \frac{1+\sqrt{z+1}}{(z^2+c_3^2)^{1/4}}\int_{c_3}^\infty  \frac{1}{x^{3/2}}\dd x <\infty.
\end{equation}
\end{proof}

\subsection{Proof of Lemma~\ref{lea:BB_asy}}\label{pf_lea_BB_asy}
\begin{proof}[Proof of Lemma~\ref{lea:BB_asy}]
Using the inequality \eqref{BT-ineq} and substituting $p_1 \pm q_1\to p_1, p_1'\pm q_1' \to p_1'$ , we have
\begin{multline}
 \int_{\BR^3} B_{T}(p,q)
B_{T}((p'_1,p_2),q') \dd p_1 \dd p_1' \dd p_2
\leq \frac{1}{4}\int_{\BR^3}( B_{T}((p_1,p_2+q_2),0) +B_{T}((p_1,p_2-q_2),0) ) \\
\times(B_{T}((p'_1,p_2+q_2'),0) +B_{T}((p'_1,p_2-q_2'),0) ) \dd p_1 \dd p_1' \dd p_2
\end{multline}
One can bound this from above by
\begin{multline}
\sup_{q_2,q_2'\in \BR}\int_{\BR}\left(\int_\BR B_{T}((p_1,p_2+q_2),0)\dd p_1 \right) \left(\int_\BR B_{T}((p'_1,p_2+q_2'),0)  \dd p_1' \right)\dd p_2\\
\leq \sup_{q_2\in \BR}\int_{\BR^3}B_{T}((p_1,p_2+q_2),0) B_{T}((p'_1,p_2+q_2),0) \dd p_1 \dd p_1' \dd p_2\\
=\int_{\BR^3}B_{T}((p_1,p_2),0) B_{T}((p'_1,p_2),0) \dd p_1 \dd p_1' \dd p_2
\end{multline}
where in the second step we used the Schwarz inequality in $p_2$.
The latter expression is of order $O(\ln(\mu/T)^3)$ for $T\to 0$, as was shown in the proof of \cite[Lemma 4.10]{roos_bcs_2023}.

To prove the second statement, we shall use that for fixed $0<\delta<\mu$
\begin{equation}\label{bb_asy_halfspace}
\int_{\BR^3}(1-\chi_{\mu-\delta<p_2^2<\mu}\chi_{p_1^2<2\delta}\chi_{p_1'^2<2\delta}) B_{T}(p,0)
B_{T}((p'_1,p_2),0) \dd p_1 \dd p_1' \dd p_2 = O((\ln \mu/T)^2)
\end{equation}
for $T\to 0$ as was shown in the proof of \cite[Lemma 4.10]{roos_bcs_2023}.
We choose $\delta_2$ and $\delta$ small enough, such that for all $q^2<\delta_2$, if $ p_1^2>4\delta_1$ we have $(p_1+q_1)^2>2\delta$ and if $p_2^2<\mu-\delta_1$ or $p_2^2>\mu+\delta_1$ we have $(p_2+q_2)^2<\mu-\delta$ or $(p_2+q_2)^2>\mu$, respectively.
Using the same inequality \eqref{BT-ineq} as above, we have
\begin{multline}\label{pflea4.6.1}
 \sup_{q^2,q'^2<\delta_2}\int_{\BR^3}(1-\chi_{\mu-\delta_1<p_2^2<\mu+\delta_1}\chi_{p_1^2<4\delta_1}\chi_{p_1'^2<4\delta_1}) B_{T}(p,q)
B_{T}((p'_1,p_2),q') \dd p_1 \dd p_1' \dd p_2\\
\leq  \sup_{q^2,q'^2<\delta_2}\int_{\BR^3}(1-\chi_{\mu-\delta_1<p_2^2<\mu+\delta_1}\chi_{p_1^2<4\delta_1}\chi_{p_1'^2<4\delta_1}) B_{T}(p+q,0)
B_{T}((p'_1,p_2)+q',0) \dd p_1 \dd p_1' \dd p_2
\end{multline}
Note that $1-\chi_{\mu-\delta_1<p_2^2<\mu+\delta_1}\chi_{p_1^2<4\delta_1}\chi_{p_1'^2<4\delta_1} \leq \chi_{\mu-\delta_1>p_2^2}+ \chi_{\mu+\delta_1<p_2^2}+ \chi_{p_1^2>4\delta_1}+ \chi_{p_1'^2>4\delta_1}$.
Using the Schwarz inequality in $p_2$ we bound \eqref{pflea4.6.1} above by
\begin{multline}
  \sup_{q^2<\delta_2}\int_{\BR^3}( \chi_{\mu-\delta_1>p_2^2}+ \chi_{\mu+\delta_1<p_2^2}) B_{T}(( p_1+q_1, p_2+q_2),0) B_{T}(( p'_1+q_1, p_2+q_2),0) \dd  p_1 \dd  p_1' \dd  p_2\\
+ 2 \sup_{q^2,q'^2<\delta_2} \Big(\int_{\BR^3} B_T((p_1+q_1,p_2+q_2),0) B_T((p_1'+q_1,p_2+q_2),0)\chi_{p_1^2>4\delta_1}\chi_{p_1'^2>4\delta_1}\dd  p_1 \dd  p_1' \dd  p_2\Big)^{1/2} \\
\times\Big(\int_{\BR^3} B_T((p_1,p_2+q_2),0) B_T((p_1',p_2+q_2),0)\dd  p_1 \dd  p_1' \dd  p_2\Big)^{1/2}
\end{multline}
Substituting $p_j+q_j\to p_j$ and by choice of $\delta_2$ and $\delta$, this is bounded above by
\begin{multline}
 \int_{\BR^3}( \chi_{\mu-\delta>p_2^2}+ \chi_{\mu<p_2^2}) B_{T}(p,0) B_{T}(( p'_1, p_2),0) \dd  p_1 \dd  p_1' \dd  p_2\\
+ 2 \Big(\int_{\BR^3} B_T(p,0) B_T((p_1',p_2),0)\chi_{p_1^2>2\delta}\chi_{p_1'^2>2\delta}\dd  p_1 \dd  p_1' \dd  p_2\Big)^{1/2} \\
\times\Big(\int_{\BR^3} B_T(p,0) B_T((p_1',p_2),0)\dd  p_1 \dd  p_1' \dd  p_2\Big)^{1/2}
\end{multline}
By \eqref{bb_asy_halfspace} and the first part of this Lemma, this is of order $O((\ln \mu/T)^2)+O((\ln \mu/T)(\ln \mu/T)^{3/2})=O((\ln \mu/T)^{5/2})$.
\end{proof}

\subsection{Proof of Lemma~\ref{lea:phiex_norm}}\label{sec:pf_phiex_norm}
\begin{proof}[Proof of Lemma~\ref{lea:phiex_norm}]
For $p_2,q_2\in \BR$ let $B_{T}((\cdot,p_2),(\cdot,q_2)) $ denote the self-adjoint operator on $L^2((-\sqrt{2\mu},\sqrt{2\mu}))$ acting as $\langle \psi,B_{T}((\cdot,p_2),(\cdot,q_2))  \psi\rangle = \int_{-\sqrt{2\mu}}^{\sqrt{2\mu}}\int_{-\sqrt{2\mu}}^{\sqrt{2\mu}}\overline{\psi(p_1)}B_T(p,q) \psi(q_1) \dd p_1\dd q_1$.
Enlarging the domain of integration for $(q_1,p_2)$ from a disk to square we have
\begin{multline}
\lVert B_{T}^{{ ex}, 2}(\xi) \rVert \leq \sup_{\lVert \psi \rVert_2=1} \int_{(-\sqrt{2\mu},\sqrt{2\mu})^4} \overline{\psi(p_1')}B_{T}((p_1',p_2),(q_1,\xi)) B_{T}(p,(q_1,\xi))\psi(p_1) \dd p_1 \dd p_1' \dd q_1 \dd p_2\\
= \sup_{\lVert \psi \rVert_2=1} \int_{-\sqrt{2\mu}}^{\sqrt{2\mu}}\langle \psi, B_{T}((\cdot,p_2),(\cdot,\xi)) ^2\psi \rangle \dd p_2.
\end{multline}
By the triangle inequality,
\begin{equation}
\lVert B_{T}^{{ ex}, 2}(\xi) \rVert\leq  \int_{-\sqrt{2\mu}}^{\sqrt{2\mu}}\lVert B_{T}((\cdot,\xi),(\cdot,p_2)) \rVert^2 \dd p_2.
\end{equation}
For fixed $p_2,q_2$ we derive two bounds on $\lVert B_{T}((\cdot,p_2),(\cdot,q_2)) \rVert^2$.
For the first bound we estimate the Hilbert-Schmidt norm using the bounds on $B_T$ \eqref{BT_bound}:
\begin{multline}
\lVert B_{T}((\cdot,p_2),(\cdot,q_2)) \rVert^2\leq \lVert B_{T,\mu}((\cdot,p_2),(\cdot,q_2)) \rVert_{\rm HS}^2\\
\leq \int_{-\sqrt{2\mu}}^{\sqrt{2\mu}}\int_{-\sqrt{2\mu}}^{\sqrt{2\mu}}\frac{1}{\max\{|p_1^2+q_1^2+p_2^2+q_2^2-\mu|^2,T^2\}} \dd p_1 \dd q_1 \\
\leq 2\pi \int_0^{2\sqrt{\mu}} \frac{r}{\max\{|r^2+p_2^2+q_2^2-\mu|^2,T^2\}} \dd r
\leq \pi \int_\BR \frac{1}{\max\{x^2,T^2\}} \dd x=\frac{4\pi}{T}
\end{multline}
where we first switched to angular coordinates and then substituted $x=r^2+p_2^2+q_2^2-\mu$.

For the second bound the idea is to apply \cite[Lemma 6.5]{roos_bcs_2023}.
For $\mu_1,\mu_2\in \BR$ let $D_{\mu_1,\mu_2}$ be the operator on $L^2(\BR)$ with integral kernel
\begin{equation}\label{Dmu1mu2}
D_{\mu_1,\mu_2}(p_1,q_1)=\frac{2}{|(p_1+q_1)^2-\mu_1|+|(p_1-q_1)^2-\mu_2|}.
\end{equation}
It was shown in \cite[Lemma 4.6]{hainzl_boundary_2023} that
\begin{equation}\label{BT_bound_2}
B_T(p,q)\leq \frac{2}{|(p+q)^2-\mu|+|(p-q)^2-\mu|}.
\end{equation}
In particular, we have $\lVert B_{T}((\cdot,p_2),(\cdot,q_2)) \rVert\leq \lVert D_{\mu-(p_2+q_2)^2,\mu-(p_2-q_2)^2}\rVert$ and
\begin{equation}
\lVert B_T^{{ ex},2}(\xi) \rVert \leq \int_{-\sqrt{2\mu}}^{\sqrt{2\mu}}\min\Big\{ \frac{4\pi}{T},\lVert D_{\mu-(\xi+q_2)^2,\mu-(\xi-q_2)^2}\rVert ^2\Big\} \dd q_2
\end{equation}
According to \cite[Lemma 6.5]{roos_bcs_2023}, for $\mu_1,\mu_2\leq \mu$ there is a constant $C>0$ such that
\begin{equation}\label{Dmu1mu2_bound}
\lVert D_{\mu_1,\mu_2} \rVert \leq C+\frac{C \mu^{1/2}}{|\min\{\mu_1,\mu_2\}|^{1/2}}\Bigg[1+\chi_{\min\{\mu_1,\mu_2\}<0<\max\{\mu_1,\mu_2\} }\ln \left(1+\frac{\max\{\mu_1,\mu_2\}}{\vert \min\{\mu_1,\mu_2\}\vert}\right)\Bigg]
\end{equation}
The condition $\mu-(|q_2|+|\xi|)^2<0<\mu-(|q_2|-|\xi|)^2 $ can only be satisfied for $\sqrt{\mu}-|\xi| \leq |q_2|\leq \sqrt{\mu}+|\xi|$.
We get the bound
\begin{multline}
\sup_{|\xi|<cT} \lVert B_T^{{ ex},2}(\xi)  \rVert \leq C \Bigg(\int_{||q_2|-\sqrt{\mu}|<2cT}\frac{1}{T} \dd q_2\\
+ \sup_{|\xi|<cT}\int_{-\sqrt{2\mu}}^{\sqrt{2\mu}} \chi_{||q_2|-\sqrt{\mu}|>2cT}\left[1+\frac{1}{|\mu-(|q_2|+|\xi|)^2|^{1/2}}\right]^2\dd q_2\Bigg)
\leq \tilde C(1+\ln \mu/T)
\end{multline}
\end{proof}

\subsection{Proof of Lemma~\ref{lea:l3_34}}\label{pf:lea_l3_34}
\begin{proof}[Proof of Lemma~\ref{lea:l3_34}]
Applying the inequality \eqref{BT-ineq}, we have
\begin{multline}
 \sup_{q_1}\int_{\BR} B_{T,\mu}((p_1,\xi(T)),(q_1,\xi'(T)))\dd p_1 \\
\leq \frac{1}{2} \left[\int_{\BR} B_{T,\mu}((p_1,\xi(T)+\xi'(T)),0)\dd p_1 +\int_{\BR} B_{T,\mu}((p_1,\xi(T)-\xi'(T)),0)\dd p_1\right]
\end{multline}
The first integral equals
\begin{equation}
\int_{\BR} B_{T,\mu-(\xi(T)+ \xi'(T))^2}(p_1,0)\dd p_1,
\end{equation}
where here $B_{T,\mu}$ is understood as the function defined through the same expression as $B_T$ in \eqref{BT} on $\BR\times \BR$ instead of $\BR^2\times \BR^2$.
For the second integral replace $ \xi'(T)$ by $- \xi'(T)$.
The claim follows from the asymptotics
\begin{equation}
\int_{\BR} B_{T,\mu}(p_1,0)\dd p_1= \frac{2}{\sqrt{\mu}}(\ln(\mu/T)+O(1))
\end{equation}
for $T/\mu\to0$, see e.g.~\cite[Lemma 3.5]{hainzl_boundary_2023}.
\end{proof}

\subsection{Proof of Lemma~\ref{bb_asy_lowerbd}}\label{pf:lea_5.2}
\begin{proof}[Proof of Lemma~\ref{bb_asy_lowerbd}]
Let $\gamma=\mu(T/\mu)^{\beta/2}$.
By invariance of $B_T(p,q)$ under $(p_j,q_j)\to -(p_j,q_j) $ for $j\in \{1,2\}$, we may assume without loss of generality that $q\in [0,\infty)^2$.
For a lower bound, we restrict the integration to $p_1,p_2>0$, $p_2^2<\mu-\eps^2-\gamma$ and $p_1^2>(\sqrt{\mu}+\eps)^2+T-p_2^2$.
For $p,q\in [0,\infty)^2$ with $|q|<\eps$ and $p^2>(\sqrt{\mu}+\eps)^2+T$, we have $(p-q)^2-\mu\geq ||p|-|q||^2-\mu \geq 0$ and $(p+q)^2-\mu\geq p^2+q^2-\mu \geq T$.
Therefore, in this regime
\begin{equation}
B_T(p,q)\geq \frac{1}{2}\frac{\tanh(1/2)}{p^2+q^2-\mu}.
\end{equation}
This is minimal if $|q|=\eps$.
Since  for $a>b>0$
\[
\int_a^\infty \frac{1}{p_1^2-b^2} \dd p_1=\frac{\artanh(b/a)}{b}=\frac{1}{b}\artanh\left(\sqrt{1-(a^2-b^2)/a^2}\right),
\]
the left hand side of \eqref{jMj.2} is bounded below by
\begin{equation}
\frac{\tanh(1/2)^2}{4}\int_{\sqrt{\mu-\delta}}^{\sqrt{\mu-\eps^2-\gamma}} \frac{\artanh\left(\sqrt{1-\frac{2\sqrt{\mu}\eps+2\eps^2+T}{(\sqrt{\mu}+\eps)^2+T-p_2^2}}\right)^2}{\mu-\eps^2-p_2^2}\dd p_2
\end{equation}
By monotonicity of $\artanh$, the $\artanh$ term in the integrand is minimal for $p_2=\sqrt{\mu-\eps^2-\gamma}$.
Since $\int_{\sqrt{\mu-\delta}}^{\sqrt{\mu-\eps^2-\gamma}} \frac{1}{\mu-\eps^2-p_2^2}\dd p_2=\frac{1}{\mu-\eps^2}(\artanh(\sqrt{1-(\eps^2+\gamma)/\mu})-\artanh(\sqrt{1-\delta/\mu}))$, the left hand side of \eqref{jMj.2} is bounded below by
\begin{equation}
\frac{\tanh(1/2)^2}{4(\mu-\eps^2)} \artanh\left(\sqrt{1-\frac{2\sqrt{\mu}\eps+2\eps^2+T}{2\sqrt{\mu}\eps+2\eps^2+T+\gamma}}\right)^2\Bigg[\artanh\left(\sqrt{1-\frac{\eps^2+\gamma}{\mu}}\right)-\artanh\left(\sqrt{1-\frac{\delta}{\mu}}\right)\Bigg]
\end{equation}
With $\artanh(\sqrt{1-x})=\frac{1}{2}\ln (4/x)+o(1)$ as $x\to0$, we have for $T\to0$
\begin{equation}
\artanh\left(\sqrt{1-\frac{2\sqrt{\mu}\eps+2\eps^2+T}{2\sqrt{\mu}\eps+2\eps^2+T+\gamma}}\right)=\frac{\beta}{4}\ln(\mu/T)+O(1)
\end{equation}
and
\begin{equation}
\artanh\left(\sqrt{1-\frac{\eps^2+\gamma}{\mu}}\right)=\frac{\beta}{4}\ln(\mu/T)+O(1).
\end{equation}
Hence, the left hand side of \eqref{jMj.2} is bounded below by $\frac{\tanh(1/2)^2}{4^3}\frac{ \beta^3}{\mu}(\ln \mu/T)^3+O(\ln\mu/T)^2$, and the claim follows.
\end{proof}

\subsection{Proof of Lemma~\ref{corner_w_lea3}}\label{sec:pf_tcrel_1}
\begin{proof}[Proof of Lemma~\ref{corner_w_lea3}]
According to \cite[Lemma 6.1]{roos_bcs_2023}, $\sup_T\lVert G_T^j\rVert<\infty$ for $j\in\{1,2\}$ and it suffices to prove $\sup_T\lVert N_T\rVert<\infty$.
We have $\lVert N_T \rVert \leq \lVert N_T^<\rVert + \lVert N_T^>\rVert$, where
\begin{equation}
\langle \psi, N_{T}^<\psi \rangle=\int_{\BR^{4}} \overline{F_1 V^{1/2}\psi(q,p)} B_{T}(p,q)\chi_{p^2,q^2<2\mu} F_1 V^{1/2}\psi(p,q) \dd p \dd q \label{nt<}
\end{equation}
and for $N_T^>$ replace the characteristic function by $1-\chi_{p^2,q^2<2\mu}$.

To bound $\lVert N_T^> \rVert$, we first use the Schwarz inequality to obtain
\begin{equation}
\lVert N_T^> \rVert \leq \sup_{\psi \in L^2(\BR^4), \lVert \psi\rVert_2=1} \int_{\BR^{4}}B_{T}(p,q)(1-\chi_{p^2,q^2<2\mu}) |F_1 V^{1/2}\psi(p,q)|^2 \dd p \dd q
\end{equation}
By the bound on $B_T$ in \eqref{BT_bound}, there is a constant $C>0$ independent of $T$ such that $\lVert N_T^> \rVert \leq C \lVert M \rVert$, where $M:=V^{1/2} \frac{1}{1-\Delta} V^{1/2}$ on $L^2(\BR^2)$.
The Young and H\"older inequalities imply that $M$ is a bounded operator \cite{lieb_analysis_2001}.

To bound $\lVert N_T^< \rVert$, we use that $\lVert F_1 V^{1/2} \psi( \cdot,q)\rVert_\infty \leq \lVert V\rVert_1^{1/2} \lVert \psi(\cdot,q)\rVert_2$ by the Schwarz inequality and the upper bound for $B_T$ in \eqref{BT_bound_2} to obtain
\begin{equation}
\langle \psi, N_T^< \psi \rangle \leq 2 \lVert V \rVert_1 \int_{\BR^4} \frac{\lVert \psi(\cdot,q)\rVert_2 \lVert \psi(\cdot,p)\rVert_2 }{|(p+q)^2-\mu|+|(p-q)^2-\mu|}\chi_{p^2,q^2<2\mu} \dd p \dd q
\end{equation}
Recalling the definition of the operator $D_{\mu_1,\mu_2}$ from \eqref{Dmu1mu2}, this is further bounded by
\begin{equation}
2 \lVert V \rVert_1 \int_{\BR^2}\lVert \psi(\cdot,(\cdot, q_2))\rVert_2 \lVert D_{\mu-(p_2+q_2)^2,\mu-(p_2-q_2)^2} \rVert \lVert \psi(\cdot,(\cdot,p_2))\rVert_2 \chi_{p_2^2,q_2^2<2\mu} \dd p \dd q
\end{equation}
It follows from the bound on $\lVert D_{\mu_1,\mu_2}\rVert$ in \eqref{Dmu1mu2_bound} that for any $\alpha>0$ there is a constant $C_\alpha$ independent of $p_2,q_2$ such that
\[
 \lVert D_{\mu-(p_2+q_2)^2,\mu-(p_2-q_2)^2} \rVert \leq C_\alpha (1+|\mu-(|p_2|+|q_2|)^2|^{-1/2-\alpha}).
\]
Let $\tilde D_\alpha$ denote the operator on $L^2((-\sqrt{2\mu},\sqrt{2\mu}))$ with integral kernel
\[
\tilde D_\alpha(q_2,p_2)=(1+|\mu-(|p_2|+|q_2|)^2|^{-1/2-\alpha}).
\]
Then we have $\lVert N_T^<\rVert \leq 2 C_\alpha \lVert V\rVert_1 \lVert \tilde D_\alpha\rVert$ and it remains to prove that $\lVert \tilde D_\alpha\rVert<\infty$ for a suitable choice of $\alpha$.
Applying the Schur test with constant test function gives
\begin{equation}
\lVert \tilde D_\alpha\rVert\leq \sup_{|q_2|<\sqrt{2\mu}} \int_{-\sqrt{2\mu}}^{\sqrt{2\mu}}(1+|\mu-(|p_2|+|q_2|)^2|^{-1/2-\alpha}) \dd p_2,
\end{equation}
which is finite for $\alpha<1/2$.
\end{proof}

\subsection{Proof of Lemma~\ref{corner_w_lea4}}\label{sec:pf_tcrel_2}
\begin{proof}[Proof of Lemma~\ref{corner_w_lea4}]
It was shown in \cite[Lemma 6.2]{roos_bcs_2023} that $\lim_{\epsilon \to 0} \sup_{T>0} \lVert \BI_{\leq \epsilon} G_{T}^j \BI_{\leq \epsilon} \lVert =0$ for $j\in\{1,2\}$ and it remains to prove $\lim_{\epsilon \to 0} \sup_{T>0} \lVert \BI_{\leq \epsilon} N_{T} \BI_{\leq \epsilon} \lVert =0$.
We use the Schwarz inequality twice to bound
\begin{multline}
 \lVert \BI_{\leq \epsilon} N_{T} \BI_{\leq \epsilon} \lVert  \leq  \lVert V\rVert_1\sup_{\psi \in L^2(\BR^4), \lVert \psi\rVert_2=1} \int_{\BR^4}  \lVert \psi(\cdot,p)\rVert_2 B_T(p,q) \chi_{\vert p\vert,\vert q \vert\leq \epsilon} \lVert \psi(\cdot,q)\rVert_2 \dd p \dd q\\
\leq \lVert V\rVert_1 \sup_{\psi \in L^2(\BR^4), \lVert \psi\rVert_2=1} \int_{\BR^4}  B_T(p,q) \chi_{\vert p\vert,\vert q \vert\leq \epsilon} \lVert \psi(\cdot,q)\rVert_2^2 \dd p \dd q
\leq  \lVert V\rVert_1 \sup_{|q|\leq \eps} \int_{\vert p\vert\leq \epsilon}  B_T(p,q)\dd p.
\end{multline}
Applying the bound on $B_T$ in \eqref{BT_bound}, for $\eps<\sqrt{\mu/2}$ one can bound the right hand side uniformly in $T$ by
\begin{equation}
 \lVert V\rVert_1 \sup_{|q|\leq \eps}\int_{\vert p\vert\leq \epsilon}  \frac{1}{\mu-p^2-q^2} \dd p,
\end{equation}
which vanishes as $\eps\to0$.
The claim follows.
\end{proof}

\section*{Acknowledgments}
Financial support by the Austrian Science Fund (FWF) through project number I 6427-N (as part of the SFB/TRR 352) is gratefully acknowledged.

\appendix

\bibliographystyle{abbrv}

\end{document}